\definecolor{olive}{rgb}{0.5, 0.5, 0.0} 
\definecolor{mistygray}{rgb}{0.93, 0.93, 0.95}
\definecolor{tangerine}{rgb}{0.949, 0.522, 0.0}
\definecolor{plum}{rgb}{0.60, 0.32, 0.60}
\newtheorem{proposition}{\bf Proposition}
\newtheorem{definition}{\bf Definition}
\newtheorem{lemma}{\bf Lemma}
\newtheorem{theorem}{\bf Theorem}
\newtheorem{rmk}{\bf Remark}
\newtheorem{assumption}{\bf Assumption}
\newcommand{\calI}{{\mathcal I}}    
\newcommand{\calK}{{\mathcal K}}    
\newcommand{\calL}{{\mathcal L}}
\newcommand{\calP}{{\mathcal P}}
\newcommand{\calS}{{\mathcal S}}
\newcommand{\bbC}{{\mathbb C}}
\newcommand{\bbR}{{\mathbb R}}
\newcommand{\bbZ}{{\mathbb Z}}
\newcommand{\rmd}{{\rm d}}
\newcommand{\iu}{{\rm i}}   
\newcommand{\Ee}{{\rm e}}   
\newcommand{\kl}[2]{D_{\rm KL} ({#1}\|{#2})}
\newcommand{\FI}[2]{\calI \left({#1}\middle\|{#2}\right)}
\newcommand{\what}[1]{\widehat{#1}}
\newcommand{\FF}{{\rm FF}}
\newcommand{\FB}{{\rm FB}}
\newcommand{\modi}{{\rm FF}}
\newcommand{\lw}{{\underline{u}}}
\newcommand{\up}{{\overline{u}}}
\newcommand{\osc}{{\rm osc}}
\newcommand{\wtilde}[1]{\widetilde{#1}}
\newcommand{\rev}[1]{{\color{black}{#1}}} 
\newcommand{\ito}[1]{{\color{black}{#1}}} 
\newcommand{\hi}[1]{{\color{black}{#1}}} 
\newcommand{\revv}[1]{{\color{black}{#1}}} 
\def\BibTeX{{\rm B\kern-.05em{\sc i\kern-.025em b}\kern-.08em
    T\kern-.1667em\lower.7ex\hbox{E}\kern-.125emX}}
\begin{document}
\title{Distribution Control of Stochastic Oscillators via Periodic Feedforward and Population-Level Feedback}
\author{Kaito Ito, Haruhiro Kume, and Hideaki Ishii
\thanks{This work was supported in part by JSPS KAKENHI Grant Numbers JP23K19117, JP24K17297, JP22H01508, JP23K22778 and JST, ACT-X Grant Number JPMJAX2102. }
\thanks{Kaito Ito and Hideaki Ishii are with the Department of Information Physics and Computing, The University of Tokyo, Tokyo 113-8654, Japan (e-mail: kaito@g.ecc.u-tokyo.ac.jp, hideaki\_ishii@ipc.i.u-tokyo.ac.jp). }
\thanks{Haruhiro Kume is with the Department of Systems and Control Engineering, Tokyo Institute of Technology, Yokohama 226-8502, Japan (e-mail: \hi{hal3000.hal98.88@gmail.com}).}
}

\maketitle

\begin{abstract}
We address the problem of steering the phase distribution of oscillators all receiving the same control input to a given target distribution. In a large population limit, the distribution of oscillators can be described by a probability density.
Thus, we formulate the problem as \revv{distribution control over a Fokker--Planck equation}.
In particular, we consider the case where oscillators are \hi{subject} to stochastic noise, for which the theoretical understanding is still lacking.
First, we characterize the \revv{asymptotic} reachability of the phase distribution under periodic feedforward control via the Fourier coefficients of the target density and the phase sensitivity function of oscillators. This enables us to design a periodic input that makes the stationary distribution of oscillators closest to the target by solving a convex optimization problem.
Next, we devise a \revv{distribution controller} combining periodic and \revv{population-level} feedback control, where the feedback component is designed to accelerate the convergence of the distribution of oscillators.
We exhibit some convergence results for the proposed method, including a result that holds even under measurement errors in the phase distribution.
The effectiveness of the proposed method is demonstrated by a numerical example.
\end{abstract}

\begin{IEEEkeywords}
Distribution control, oscillators, averaging method, reachability, Fokker--Planck equation.
\end{IEEEkeywords}

\section{Introduction}
\label{sec:intro}

\IEEEPARstart{P}{opulations} of oscillators provide a powerful framework for modeling a wide range of collective phenomena, including neuronal ensembles, pedestrian crowds, and firefly swarms.
For example\hi{,} in neuroscience, several diseases---such as Parkinson's disease, Alzheimer's disease, and sleep disorders---can be explained as pathological synchronizations of neural oscillator populations~\cite{Hanahan,Zhivotovsky,Kane}.
Treating these diseases with external stimuli via medicine or electrical signals can be modeled as controlling the population of oscillators by an external input so that it behaves normally\hi{\cite{Franci2012,Wang2012,Kokubo2024}}.
One of the difficulties of this control problem is that oscillators receive the same input because it is impossible to apply different inputs to each cell.

\revv{Control problems for oscillator populations driven by a common
input have been studied in close connection with distribution control and ensemble
control~\cite{Tass,Brown2004,Wilson,Monga2018,Kuritz,Kato}.}
In \cite{Monga2018}, control of a deterministic oscillator population is
formulated as a distribution control problem, where the distribution of oscillators is
described as a probability density function.
Then, the authors proposed a \revv{population-level} feedback control law that decreases the $ L^2 $~distance between the density of oscillators and a prescribed target density. \revv{Here, ``population-level'' means that the current density of oscillators is used to determine the control input.}
Moreover, for the same control method, \hi{it was shown in} \cite{Kuritz} that the Fourier coefficients of the so-called phase sensitivity function of oscillators play a crucial role in the convergence of the distribution of oscillators to the target.

In the above deterministic \hi{cases}, once the distribution of oscillators is transferred to a desired distribution, then the distribution keeps the desired shape without any control.
However, when taking into account random fluctuations of oscillators, which commonly occur in realistic settings, the situation is quite different.
Even if the distribution is steered close to a target distribution, the stochastic oscillators mix to the uniform distribution in the absence of continued control.
Thus, we need to use an appropriate control input continually to keep the stochastic oscillators close to the target.
For stochastic oscillator populations under common periodic
inputs, the work \cite{Kato} considered the approximated dynamics of
their distribution obtained via the so-called averaging method~\cite{Pavliotis2008multiscale}.
Then, the authors formulated an optimization problem which yields a periodic feedforward input that makes the stationary distribution of the approximated dynamics closest to the desired distribution.
However, the transient response of oscillators cannot be taken into account \hi{there} because the optimization is only based on the stationary distribution. In addition, the formulated optimization problem is nonconvex, which poses a challenge in finding a globally optimal solution.
Moreover, there is a lack of theoretical understanding regarding how accurately the resulting stationary distribution can approximate the target distribution.

\textit{Contributions:}
In this paper, we provide \hi{a} theoretical understanding of the \revv{distribution} control of stochastic oscillators concerning reachability and convergence properties.
Building on this analysis, we propose a control law that combines population-level feedback and periodic feedforward control, aiming to address the limitations of each approach.
More precisely, our contributions are as follows:

	1) We analyze the \revv{asymptotic} reachability of the distribution of stochastic oscillators. First, we investigate the approximated phase equation of oscillators derived using the averaging method. Then, we show that under periodic feedforward control, the reachability of stochastic oscillators to a target distribution is characterized by the Fourier coefficients of their phase sensitivity function and the target (Theorems~\ref{thm:exact_reachability} \hi{and} \ref{thm:appro_reachability}). By using Theorem~\ref{thm:appro_reachability}, the minimization of the difference between the stationary distribution of oscillators and the target distribution can be formulated as a convex optimization problem, which determines the Fourier coefficients of a periodic input. Moreover, to bridge the gap between the original and averaged phase equations, we bound the Kullback--Leibler (KL) divergence between the distribution of oscillators and the stationary distribution of the averaged equation (Theorem~\ref{thm:bound_average_original}). Consequently, we can efficiently compute an optimal periodic input that steers the distribution of oscillators close to the target, where the closeness is guaranteed by the derived upper bound.
	
	2) Next, we establish convergence properties for the \revv{distribution} control of stochastic oscillators. 
	We devise a control law combining the designed periodic control and a feedback controller, under which the convergence of oscillators is guaranteed (\hi{Theorem}~\ref{thm:proposed}).
	\revv{This result shows that even under the designed periodic input alone, the distribution of oscillators converges exponentially to the same distribution. In addition, the proposed feedback control adds an extra nonpositive dissipation term to a Lyapunov functional based on the KL divergence, and its acceleration effect on the transient response is demonstrated numerically.}
	Moreover, we modify the proposed controller so that even under measurement errors of the distribution of oscillators, the convergence is guaranteed (Proposition~\ref{prop:measurement_err}), which improves the applicability of our method. 

A preliminary version~\cite{Ito2023ensemble} of this work presented a special case of \hi{Theorem}~\ref{thm:proposed}, where the diffusion coefficients of oscillators are constant. 
The \hi{current} paper provides further results and discussions.

\textit{Related work:}
Population control with a common input has been extensively studied in the literature on ensemble control\hi{\cite{Li2016,Chen2022sparse,Chen2019controllability,Helmke2014uniform,Mandal2023control,Vu2024data}}.
In \cite{Li}, an ensemble of
systems is represented by a parameterized family of control systems driven by the same control, and the associated controllability condition was derived. \revv{This parameter-indexed formulation is different from the setting of the present paper, where the oscillators are not indexed by a heterogeneity parameter and the control objective is to steer the population density.}
In \cite{Azuma2013}, multi-agent coordination by broadcasting a common signal to all agents was considered, and it was revealed that randomness of a local controller of each agent is essential for achieving given motion-coordination tasks.
As a dual problem of ensemble control, \hi{the works} \cite{Zeng,ZengIshii} studied state estimation problems of ensembles that are expressed by probability distributions and characterized the observability of ensembles.

Despite extensive literature on population control for general systems under common inputs, relatively few studies have addressed the distribution control of an oscillator population driven by such inputs.
In \cite{Wilson2014}, desynchronization of a neural population is formulated as distribution control of oscillators so that the peak of their distribution is decreased using \hi{limited} energy. 
Our problem setting is more challenging because the shape of the distribution should be considered, not only the peak.

Based on the control method in \cite{Monga2018} and Fourier analysis, \hi{the work} \cite{Monga} decomposed a partial differential equation, which governs the evolution of the distribution of oscillators, into ordinary differential equations governing the evolution of the Fourier coefficients of the distribution. Consequently, the distribution control reduces to the control of the Fourier coefficients. For stochastic oscillators, they also proposed the population-level feedback controller, which cancels the diffusion term of the partial differential equation so that the $ L^2 $~distance between the distribution of oscillators and the target decreases to zero. However, the resulting controller is not well-defined in general, and experimentally, the cancellation term induces undesirable behavior of the distribution as we will see in Subsection~\ref{subsec:previous_feedback} and Section~\ref{sec:example}.

\textit{Organization:}
This paper is organized as follows. In Section~\ref{sec:previous}, we briefly introduce an oscillator model and the issues of the existing distribution control methods. 
In Section~\ref{sec:reachability}, we analyze the reachability of the distribution of stochastic oscillators and propose an efficient design method of a periodic input steering the phase distribution close to the target.
In Section~\ref{sec:controller}, we devise a distribution control method combining the periodic feedforward control and a feedback controller and then show its convergence properties.
In Section~\ref{sec:example}, a numerical example illustrates the effectiveness of the proposed method.
Finally, Section~\ref{sec:conclusion} concludes this paper.

\textit{Notations:}
Let $ \bbR $, $ \bbC $, and $ \bbZ $ denote the sets of real numbers, complex numbers, and integers, respectively.
The unit circle is denoted by $ \calS^1 $, and we identify \(\calS^1\) with \(\mathbb{R}/2\pi\mathbb{Z}\).
The $ L^p $~norm for $ p \ge 1 $ and the $ L^\infty $~norm of a function $ f $ on $ \calS^1 $ are defined by $ \|f\|_p := (\int_{\calS^1} |f(\theta)|^p \rmd \theta)^{1/p} $ and $ \|f\|_\infty := \inf \{ M \ge 0 : |f| \le M ~ \text{almost everywhere} \} $, respectively. 
Denote by $ L^2(S) $ the space of square-integrable functions on $ S $. For an interval $ (a,b) $, we also write $ L^2 ((a,b)) $ as $ L^2 (a,b) $. Denote by $ \ell^2 (\bbZ) $ (resp. $ \ell^1 (\bbZ) $) the space of square-summable (resp. absolutely summable) sequences indexed $ \bbZ $.
The set of all $ k $-times continuously differentiable functions on $ \calS^1 $ is denoted by $ C^k(\calS^1) $.
Denote $ \partial f(\theta)/\partial \theta $ and $ \partial^2 f(\theta)/\partial \theta^2 $ by $ \partial_\theta f(\theta) $ \hi{and} $\partial_\theta^2 f(\theta) $, respectively. When no confusion can arise, we will omit arguments of functions and the domain of integration as $ \int f \rmd \theta = \int_{\calS^1} f(\theta)\rmd \theta $.
The imaginary unit is denoted by $ \iu $.
Let $ \calP(\calS^1) $ be the set of all density functions on $ \calS^1 $.
The KL divergence between probability densities $ \rho_1 $ and $ \rho_2 $ is denoted by $ \kl{\rho_1}{\rho_2} := \int_{\calS^1} \rho_1(\theta) \log (\rho_1(\theta) / \rho_2 (\theta)) \rmd \theta $.
Define
\begin{equation}
    \FI{\rho_1}{\rho_2} := \int_{\calS^1} \rho_1 (\theta) \left( \partial_\theta \log \frac{\rho_1(\theta)}{\rho_2 (\theta)} \right)^2 \rmd \theta , \nonumber
\end{equation}
which is called the relative Fisher information (or Fisher divergence) of $ \rho_1 $ with respect to $ \rho_2 $~\cite{Villani}.
The KL divergence and the relative Fisher information are nonnegative functionals and take the value $ 0 $ if and only if $ \rho_1 = \rho_2 $.

\section{Oscillator Model in a Large Population Limit and Existing Distribution Control Methods}\label{sec:previous}
In this section, we introduce an oscillator model in a large population limit and the existing control methods of the distribution of oscillators~\cite{Monga2018,Kuritz,Monga,Kato}.
\subsection{Oscillator Model in a Large Population Limit} 
First, we consider identical, uncoupled $ N $ oscillators following the \hi{stochastic} phase model on the unit circle $ \calS^1 $ obtained by phase reduction~\cite{Winfree,Kuramoto,Monga}:
\begin{align}
\rmd \theta_i (t) = (\omega + Z(\theta_i (t)) u(t) )\rmd t + \sqrt{2D} Z_w(\theta_i(t)) \rmd W_i (t) , \label{eq:phasemodel}
\end{align}
where \(i\in\{1,\ldots,N\}\), $ \theta_i(t) \in \calS^1 $ denotes the phase of the $ i $th oscillator, \(\omega\in\mathbb{R}\setminus\{0\}\) denotes the natural frequency, \(u(t)\in\mathbb{R}\) denotes a common external control input, \(D>0\) denotes the noise intensity, and \(\{W_i\}\) are independent standard Wiener processes.
The functions $ Z, Z_w : \calS^1 \rightarrow \bbR $ are called the phase sensitivity functions (or phase response curves), which describe the linear response of the phase to an input and a noise, and it is assumed that $ Z, Z_w \in C^2 (\calS^1) $.
\revv{The above equation is understood as an Ito diffusion on \(\calS^1\) in the following sense. When \eqref{eq:phasemodel} is regarded as a diffusion on $ \bbR $ with the \(2\pi\)-periodic lifts of $ Z $ and $ Z_w $ to $ \bbR $ and a real-valued Wiener process, its solution is denoted by $ X_i (t) \in \bbR $. Then, the solution to \eqref{eq:phasemodel} on $ \calS^1 $ is defined as $ \theta_i(t) := X_i (t) \bmod 2\pi  $.} 

We emphasize that all the oscillators are driven by the same input \(u\). Without the input and the noise, i.e., \(u\equiv0\) and \(D=0\), the oscillators rotate with the constant angular velocity \(\omega\). The model~\eqref{eq:phasemodel} appears, for example, in neuroscience~\cite{Tass,Kuritz}, where it is difficult to apply different control inputs to each oscillator.

\revv{The main focus of this paper is the macroscopic control of the oscillator population. Suppose that the initial phases $\{\theta_i(0)\}_{i=1}^N$ are independent and identically distributed (i.i.d.) according to a probability density function $\rho_0$ and that $ u $ is deterministic. Since the oscillators are uncoupled and driven by independent Wiener processes, the phase processes $\{\theta_i(t)\}_{i=1}^N$ are i.i.d. for each $t\ge 0$. Therefore, by the standard law of large numbers for empirical measures~\cite{van2000asymptotic}, the empirical distribution constructed from $\{\theta_i(t)\}_{i=1}^N$ converges, as \(N\to\infty\), to a deterministic probability measure; see Fig.~\ref{fig:circ_histo_pdf}. Let us assume that the resulting measure admits a density function $ \rho : [0,\infty) \times \calS^1 \rightarrow \bbR $. This density represents the macroscopic phase distribution of the oscillator population and satisfies the Fokker--Planck equation associated with the Ito diffusion~\eqref{eq:phasemodel}:}
\begin{align}
    \partial_t\rho(t,\theta) &= -\partial_\theta\left[(\omega + Z(\theta)u(t))\rho(t,\theta)\right] + D \partial_\theta^2[ Z_w^2 (\theta) \rho(t,\theta)] \nonumber\\
	&=: \calL_{u(t)}\rho(t,\theta) , \label{eq:rho} \\
	\rho(0,\theta) &= \rho_0(\theta) . \nonumber
\end{align}
Here, $ \rho(t,\theta)\rmd \theta $ can be interpreted as the fraction of oscillators whose phases lie within the interval $ [\theta ,\theta + \rmd \theta) $ at time $ t $.
\rev{The situation where all the oscillators receive a common input means that $ u $ is not allowed to depend on $ \theta $.}
Throughout this paper, we assume the existence of a unique solution to \eqref{eq:rho}.

Then, our goal is to steer the distribution of oscillators $ \rho $ to a given target distribution.
Specifically, the target distribution $ \rho_f $ is given by
\begin{equation}
	\begin{aligned}
		\partial_t\rho_f(t,\theta) &= -\omega\partial_\theta\rho_f(t,\theta),  \\
		\rho_f(0,\theta) &= \rho_{f,0}(\theta) ,  \label{eq:rhof}
	\end{aligned}
\end{equation}
where the solution $ \rho_f $ rotates on $ \calS^1 $ at the angular velocity $ \omega $ maintaining the shape of the initial density $ \rho_{f,0} $. Hence, we sometimes refer to $ \rho_{f,0} $ as the target as well.

\revv{
\emph{Overview of Main Assumptions:}
We highlight several key assumptions used in the main results. These assumptions are specific to the corresponding results, and the complete assumptions are stated in the theorem statements. Theorems~\ref{thm:exact_reachability} and \ref{thm:appro_reachability} (asymptotic reachability under averaging) assume that a periodic input $ u_\FF $ belongs to $ L^2 (0,2\pi/\omega) $.
For exact reachability in Theorem~\ref{thm:exact_reachability}, the Fourier coefficient of $ Z $ corresponding to each nonzero Fourier coefficient of $ \partial_\theta \log \rho_{f,0} $ must be nonzero.
Theorems~\ref{thm:bound_average_original} and \ref{thm:proposed} both assume uniform boundedness of the phase density generated by the periodic feedforward input.
Theorem~\ref{thm:bound_average_original} (averaging error) additionally uses Assumption~\ref{ass:nondegenerate} ($ Z_w \equiv 1 $), whereas Theorem~\ref{thm:proposed} (exponential convergence) assumes the non-degeneracy condition $ Z_w^2 > 0 $.

}

\subsection{Existing \revv{Population-Level} Feedback Control Methods and Their Issues}\label{subsec:previous_feedback}
For the deterministic case (i.e., with $ D = 0 $), \cite{Monga2018} designed a population-level feedback control law that decreases the $ L^2 $~distance between $ \rho $ and the target $ \rho_f $. 
\revv{This controller assumes that the current distribution $ \rho(t,\cdot) $ is available to determine $ u(t) $.
Such a population-level feedback formulation has also appeared in the existing studies on phase-distribution control~\cite{Kuritz,Monga}, robotic swarm control~\cite{Zheng2021transporting}, and control of general Fokker--Planck equations~\cite{Breiten2018control}.}
 In addition, it was revealed in \cite{Kuritz} that if all Fourier coefficients of $ Z $ are nonzero, then $ \rho $ converges to $ \rho_f $ under the control law proposed in \cite{Monga2018}.

 \begin{figure}[tb]
  \centering
  \includegraphics[width=1.03\linewidth]{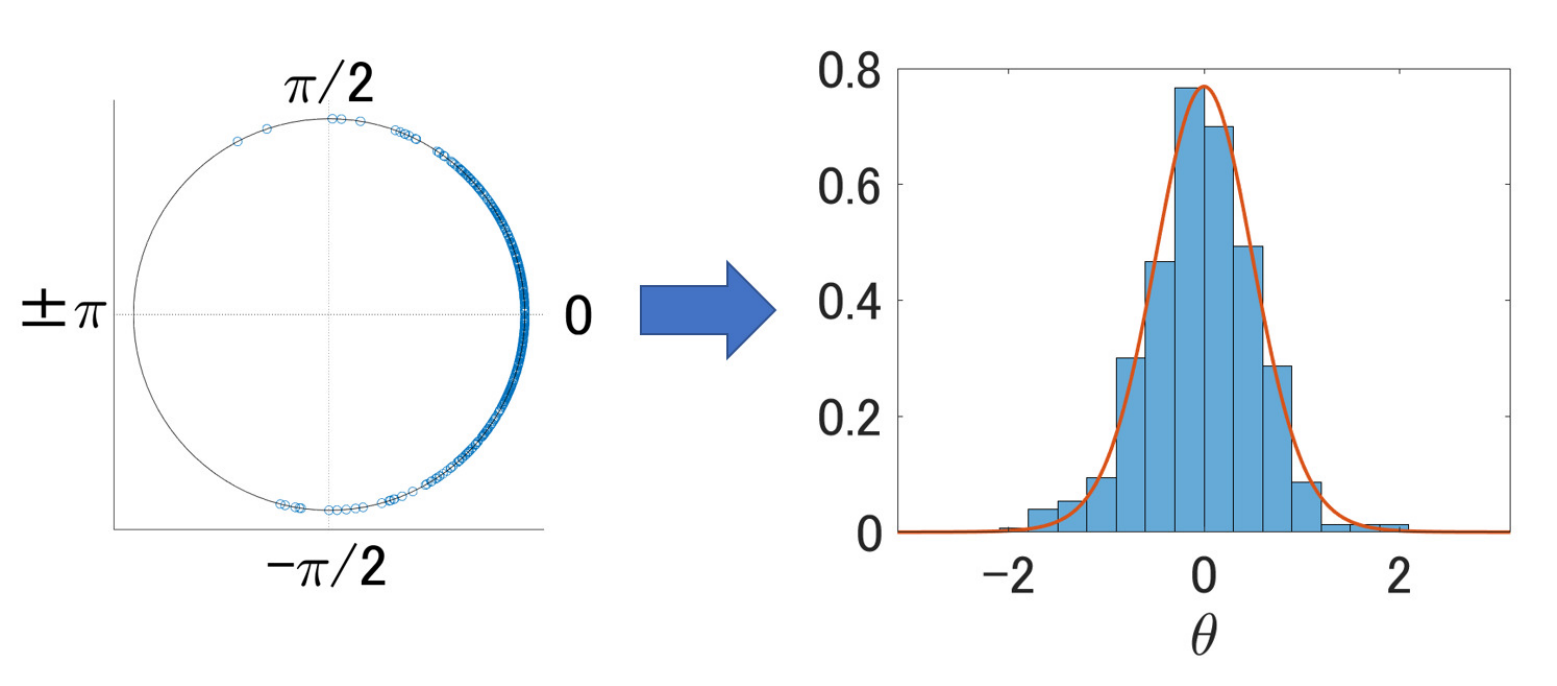}
  \caption{Distribution of infinitely many oscillators can be represented as a density function (\hi{red line}). Each oscillator is regarded as an independent sample drawn from the density. As the number of oscillators tends to infinity, their empirical distribution (blue histogram) converges to the density.}
  \label{fig:circ_histo_pdf}
\end{figure}

However, when $ D > 0 $, the convergence to \hi{the target} $ \rho_f $ cannot be achieved in general.
This can be seen \rev{by considering the evolution of the error $ \Delta \coloneqq \rho-\rho_f $:}
\begin{align}
  &\partial_t \Delta = -\omega\partial_\theta\Delta - \partial_\theta[Z(\theta)\rho(t,\theta)]u(t) + D \partial^2_\theta \left[Z_w^2 (\theta) \rho(t,\theta) \right]. \label{eq:delta}
\end{align}
\rev{That is, in order to make $ \Delta = 0 $\hi{,} a stationary solution to \eqref{eq:delta}, the control input $ u $ must satisfy for any $ \theta \in \calS^1 $,
\[
	\lim_{t\rightarrow \infty} \left( \partial_\theta [Z(\theta) \rho_f(t,\theta)] u(t) - D \partial_\theta^2 \left[ Z_w^2(\theta) \rho_f (t,\theta) \right] \right) = 0  .
\]
However, this condition cannot be fulfilled by $ u $ which does not depend on $ \theta $ except for special cases, such as when $ Z_w $ is a constant function and the target distribution is uniform $ \rho_{f} \equiv 1/(2\pi) $.
Even} \ito{when $ D > 0 $}, the control law in \cite{Monga2018,Kuritz} decreases the $ L^2 $~distance between $ \rho $ and $ \rho_f $ to some extent.
Indeed, the time derivative of the squared $ L^2 $~norm $ V(\Delta(t,\cdot)) := \frac{1}{2} \|\Delta(t,\cdot)\|_2^2 $ along \eqref{eq:delta} is given by
\begin{align}
	\frac{\rmd V(\Delta)}{\rmd t} &= \int_{\calS^1} \Delta \left( -\omega\partial_\theta\Delta - \partial_\theta [Z\rho]u + D \partial_\theta^2 [Z_w^2 \rho] \right)\rmd \theta \nonumber\\
	&= \left(\int Z\rho \partial_\theta \Delta \rmd \theta\right) u + D \int \Delta  \partial_\theta^2 [Z_w^2 \rho] \rmd \theta \rev{,} \label{eq:L2derivative}
\end{align}
where the argument $ t $ is omitted.
Therefore, when $ D = 0 $, the control law used in \cite{Monga2018,Kuritz}:
\begin{equation}\label{eq:previous}
	u(t) = - k \int_{\calS^1} Z(\theta)\rho(t,\theta) \partial_\theta \Delta(t,\theta) \rmd \theta, \quad k > 0,
\end{equation}
ensures $ \rmd V/\rmd t \le 0 $ and monotonically decreases $ V $. With the diffusion term (i.e., $ D > 0 $), while $ D \int \Delta \partial_\theta^2 [Z_w^2 \rho] \rmd \theta \le k(\int Z\rho \partial_\theta \Delta\rmd \theta)^2 $ holds, the control law \rev{\eqref{eq:previous}} decreases $ V $.
The resulting deviation of $ \rho $ from $ \rho_f $ may be small.
However, it is difficult to estimate the deviation \hi{analytically}. 

\hi{In this paper, we approach this problem in a more rigorous manner.} We first provide a surrogate target distribution fluctuating around $ \rho_f $, and then design a population-level feedback control law steering oscillators to the surrogate target exactly.
Its advantage is that if the deviation between the surrogate and original targets is guaranteed to be small, the resulting stationary deviation between $ \rho $ and the original target $ \rho_f $ is also small.
In the next section, we explain that such a surrogate target can be designed by periodic feedforward control.

We also mention the existing study~\cite{Monga}, which proposed the control law:
\begin{equation}\label{eq:proposed_cancel}
	u(t) = - k \int  Z\rho \partial_\theta \Delta \rmd \theta - \frac{D\int \Delta  \partial_\theta^2 [Z_w^2 \rho] \rmd \theta  }{\int Z\rho \partial_\theta \Delta \rmd \theta}\rev{,} \quad k > 0 ,
\end{equation}
whose second term cancels the second term of \eqref{eq:L2derivative}. The authors asserted that this control law decreases the $ L^2 $~distance between $ \rho $ and $ \rho_f $ until $ \rho $ becomes equal to $ \rho_f $. However, this is obviously not the case in general since there is \hi{no} input \rev{under which $ \rho $ converges to $ \rho_f $ as already observed in \eqref{eq:delta}. The $ L^2 $~distance converges to a strictly positive constant under the assumption $ \int Z\rho \partial_\theta \Delta \rmd \theta \neq 0 $ for all $ t $, which ensures the well-definedness of the control law \eqref{eq:proposed_cancel}. However, this assumption is not satisfied in general as we will numerically observe through an example in Section~\ref{sec:example}.}

\subsection{Existing Feedforward Control Method and Its Issues}\label{sec:modified_target}
In the previous subsection, we observed that the distribution of stochastic oscillators cannot be stabilized to the target distribution given by \eqref{eq:rhof}.
Instead of exact stabilization, \hi{in this subsection, we explain that} maintaining the distribution $ \rho $ near the target $ \rho_f $ can be effectively achieved by continuously applying a properly designed periodic input.

\hi{Consider} a distribution of oscillators $ \rho_{\modi} $ driven by a feedforward input $ u_{\FF} $. Then, $ \rho_{\modi} $ evolves as follows:
\begin{align}
  \partial_t\rho_\FF (t,\theta) &= \calL_{u_\FF(t)}\rho_\modi (t,\theta) \nonumber\\
	&= -\partial_\theta \! \left[(\omega + Z(\theta)u_{\rm FF}(t))\rho_\modi (t,\theta)\right] \nonumber\\
  &\quad + D \partial_\theta^2 \! \left[ Z_w^2 (\theta) \rho_\modi (t,\theta) \right] \label{eq:rho1},\\
	\rho_\modi (0,\theta) &= \rho_{\modi,0} (\theta)  . \nonumber
\end{align}
For notational convenience in the subsequent discussion, we use the notation $ \rho_{\modi} $ instead of $ \rho $.
\hi{In} what follows, focusing on $ 2\pi/\omega $-periodic inputs, we explain how to design $ u_\FF $ that makes $ \rho_\modi $ close to $ \rho_{f} $ based on the work~\cite{Kato}.

\hi{First}, we perform the change of variables $ \wtilde{\rho}_\modi (t,\theta) := \rho_{\modi} (t,\theta +\omega t) $ in \eqref{eq:rho1}, which corresponds to the coordinate transformation $\theta\mapsto\theta-\omega t $ \rev{with} $ t\mapsto t$. Then, the resulting density $ \wtilde{\rho}_\modi $ satisfies
\begin{align}
  \partial_t \wtilde{\rho}_\modi (t,\theta) &= -\partial_\theta\left[(Z(\theta+\omega t)u_\FF (t))\wtilde{\rho}_\modi (t,\theta)\right] \nonumber\\
	&\quad + D \partial_\theta^2 \left[ Z_w^2 (\theta + \omega t) \wtilde{\rho}_\modi (t,\theta) \right]  \nonumber \\
    &=: \wtilde{\calL}_{\revv{t},u_\FF} \wtilde{\rho}_\modi (t,\theta) . \label{eq:rho1-rot}
\end{align}
In the rotating frame $\theta\mapsto\theta-\omega t $, the target distribution is equal to its initial density $ \rho_{f,0} $.
\revv{Next, following the averaging method~\cite{Pavliotis2008multiscale}, we introduce an averaged dynamics for
\eqref{eq:rho1-rot}. 
We define the
period-averaged forward operator by
\begin{align}
\bar{\mathcal L}f(\theta)
&:=
\frac{\omega}{2\pi}
\int_0^{2\pi/\omega}
\widetilde{\mathcal L}_{t,u_\FF} f(\theta) \rmd t \nonumber
\\
&= -
\partial_\theta [\Gamma(\theta) f(\theta)]
+
B^2\partial_\theta^2 f(\theta),
\end{align}
where
\begin{align}
\Gamma(\theta)
&:=
\frac{\omega}{2\pi}
\int_0^{2\pi/\omega}
Z(\theta+\omega t)u_{\mathrm{FF}}(t) \rmd t, \label{eq:gamma_def} \\
\qquad
B^2
&:= D \frac{\omega}{2\pi} \int_0^{2\pi/\omega}   Z_w^2 (\theta + \omega t) \rmd t =
\frac{D}{2\pi}\int_0^{2\pi}Z_w^2(\vartheta)d\vartheta . \nonumber
\end{align}
Then, the averaged Fokker--Planck equation is defined as the
forward equation associated with \(\bar{\mathcal L}\):
\begin{align}
\partial_t\bar\rho_{\mathrm{FF}}(t,\theta) &= \bar{\calL} \bar{\rho}_\FF (t,\theta) \nonumber\\
&=
-\partial_\theta[\Gamma(\theta)\bar\rho_{\mathrm{FF}}(t,\theta)]
+
B^2\partial_\theta^2\bar\rho_{\mathrm{FF}}(t,\theta). \label{eq:rho1-mean}
\end{align}
We utilize the solution $ \bar{\rho}_{\FF} $ to the above equation as an approximation of $ \wtilde{\rho}_{\FF} $ or $ \rho_{\FF} $.
}It is known that when $ u_\FF $ is small, the averaged equation \eqref{eq:rho1-mean} approximates \eqref{eq:rho1-rot} well~\revv{\cite[Chapter~9.5]{Hoppensteadt}}. The Fokker--Planck equation \eqref{eq:rho1-mean} is time-invariant and has the stationary distribution
\begin{align}
  \bar{\rho}_{\rm st}(\theta) := \frac{1}{C}\int_{\theta}^{\theta+2\pi}\exp\left(-\frac{\int_{\theta}^{\psi}\Gamma(\phi)\rmd\phi}{B^2}\right) \rmd\psi, \label{eq:stationary_averaged}
\end{align}
where $ C > 0$ is the normalizing constant~\cite[Chapter~9.2.2]{Pikovsky2001}.
Then, we aim to find $ u_\FF $ that makes $ \bar{\rho}_{\rm st} $ close to the desired shape $ \rho_{f,0} $. 

\hi{To this end, the work} \cite{Kato} considered the following optimization problem:
\begin{align}
  \underset{u_\FF}{\text{minimize}} ~~ &\|\bar{\rho}_{\rm st} - \rho_{f,0}\|_2^2  \label{prob:fok-opt}\\
  \text{subject~to} ~~ &\|u_\FF (\cdot/\omega)\|_2^2 = E, \label{eq:fok-opt_constraint}
\end{align}
where the constraint \eqref{eq:fok-opt_constraint} fixes the input energy of $ u_\FF $ to $ E > 0 $.
\rev{This problem} can then be solved numerically by approximating it to a finite dimensional problem, for example\hi{,} by a finite difference method and truncating Fourier series.

We would like to highlight several issues of \cite{Kato} from both numerical
and theoretical perspectives: (i) Problem \eqref{prob:fok-opt} is nonconvex, making it challenging to find an optimal solution efficiently. (ii) There is no theoretical analysis clarifying how well the stationary distribution $ \bar{\rho}_{\rm st} $ approximates the target $ \rho_{f,0} $. (iii) The convergence of $ \rho $ under $ u = u_\FF $ is not established analytically. (iv) As we will see in Section~\ref{sec:example}, convergence based on their feedforward
control method may be slow. (v) Although it is known that under a weak input of $ O(\varepsilon) $, the averaged equation is correct up to $ O(\varepsilon) $, a specific upper bound of the difference between $ \wtilde{\rho}_\modi $ and the stationary distribution $ \bar{\rho}_{\rm st} $ of the averaged equation is not known to the best of our knowledge. 

In Section~\ref{sec:reachability}, we address (i), (ii), and (v) analytically.
\revv{Then, in Section~\ref{sec:controller}, to resolve (iii) and (iv), we devise a feedback controller that, when combined with a periodic input $u_{\rm FF}$, ensures the convergence of $\rho$ to the (typically periodic) stationary distribution of $\rho_\modi $.\footnote{In this paper, we distinguish between two types of stationary distributions: the periodic stationary distribution, which refers to the time-periodic steady-state distribution of $ \rho_\modi $ or $ \wtilde{\rho}_\modi $ that typically emerges under periodic inputs, and the stationary distribution $ \bar{\rho}_{\rm st} $, which is time-invariant in the usual sense.} Although this convergence is already guaranteed under the periodic feedforward input alone, this feedback control adds an extra nonpositive dissipation term to the KL Lyapunov functional. Its acceleration effect on the transient response is demonstrated numerically in Section~\ref{sec:example}.}
Unlike the convergence of $ \rho $ to the target $ \rho_f $, which does not hold as we observed in Subsection~\ref{subsec:previous_feedback}, the convergence of $ \rho $ to $ \rho_{\modi} $ can be proved by exploiting the contraction property of Fokker--Planck equations. 
Moreover, this convergence result suggests that the periodic stationary distribution of $ \rho_{\modi} $ can \hi{serve} as a surrogate target, which is close to the original target $ \rho_f $.

\begin{figure}[tb]
  \centering
  \includegraphics[scale=0.37]{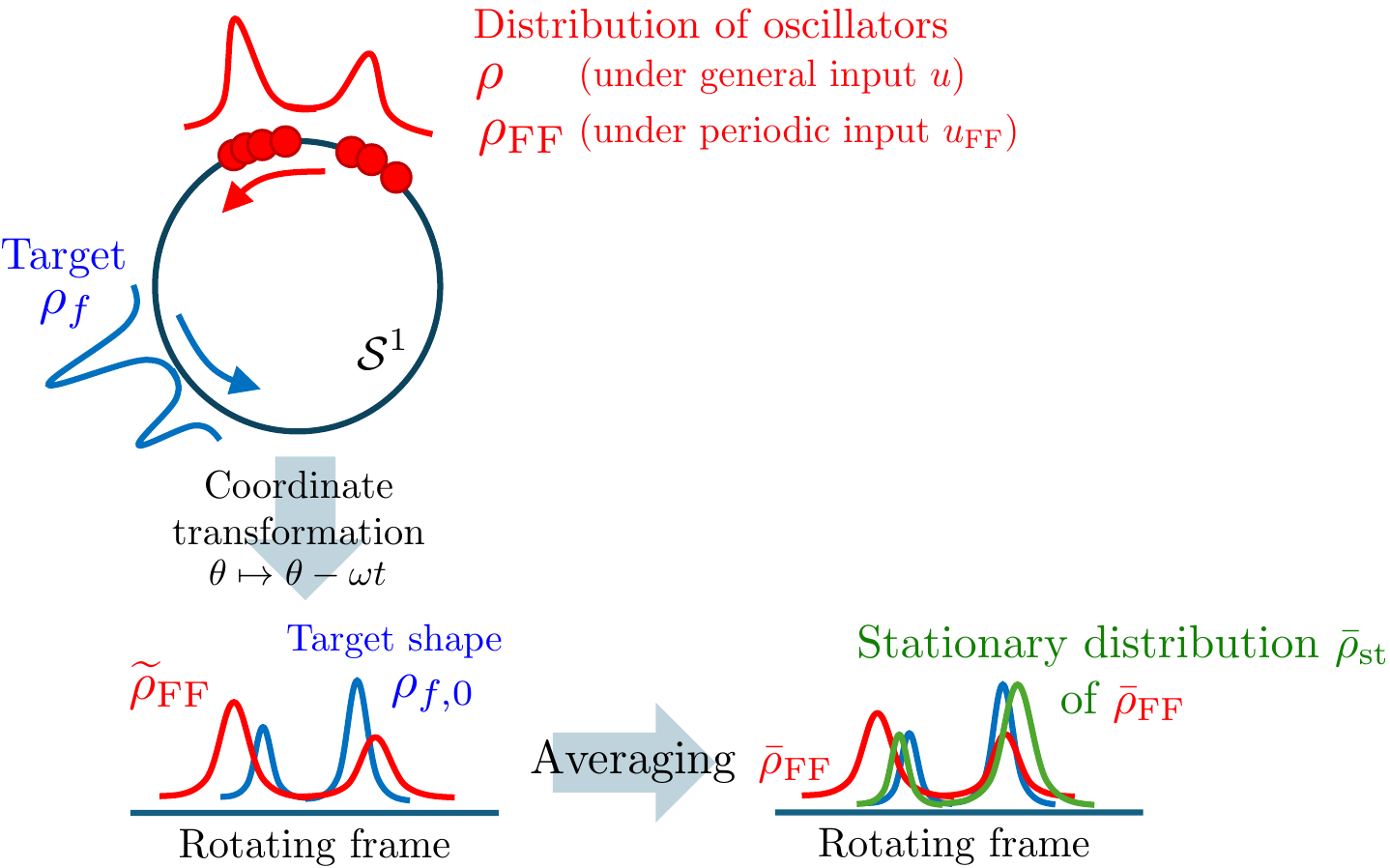}
	   \put(-156,40){\footnotesize \colorbox{mistygray}{\shortstack{Error bound\\of averaging \\ (Theorem~\ref{thm:bound_average_original})}}}
		 \put(-85,55){\footnotesize \colorbox{mistygray}{\shortstack{Error bound\\between \textcolor{olive}{$ \bar{\rho}_{\rm st} $} and \textcolor{blue}{$ \rho_{f,0} $}\\ (Theorem~\ref{thm:appro_reachability})}}}
  \caption{Relationship \hi{among} $ \rho $, $ \rho_f $, $ \rho_\modi $, $ \wtilde{\rho}_\modi $, $ \bar{\rho}_\modi $, $ \bar{\rho}_{\rm st} $, and \hi{the} main results (Theorems~\ref{thm:appro_reachability} \hi{and} \ref{thm:bound_average_original}).}
  \label{fig:summary}
\end{figure}

Lastly, the relationship \hi{among} $ \rho $, $ \rho_f $, $ \rho_\modi $, $ \wtilde{\rho}_\modi $, $ \bar{\rho}_\modi $, and $ \bar{\rho}_{\rm st} $ is illustrated in Fig.~\ref{fig:summary}.

\section{Feedforward Control Design and Asymptotic Reachability of Stochastic Oscillators}\label{sec:reachability}
In this section, we perform the \revv{asymptotic} reachability analysis of the stationary distribution $ \bar{\rho}_{\rm st} $. 
\revv{Here, we say that the target density $ \rho_{f,0} $ is asymptotically reachable under averaging if there exists a periodic input $ u_\FF $ such that $ \bar{\rho}_{\rm st} = \rho_{f,0} $.}
Then, we propose an efficient design method of $ u_\FF $ such that $ \bar{\rho}_{\rm st} $ is closest to the target $ \rho_f $.
\subsection{Exact \revv{Asymptotic} Reachability under Averaging}\label{subsec:exact_reachable}
In this subsection, we show that the reachability of the stationary distribution $ \bar{\rho}_{\rm st} $ is characterized by Fourier coefficients of the phase sensitivity function $ Z $ and the target $ \rho_{f,0} $. \revv{In what follows, we assume that the periodic input $ u_\FF $ satisfies $ u_\FF \in L^2 (0,2\pi / \omega) $. Then, it has a Fourier series expansion
	\begin{align}
		u_\FF (t) = \sum_{k=-\infty}^{\infty} v_k \Ee^{\iu k \omega t} ,  \label{eq:periodic_input_fourier}
	\end{align}
	which converges in the $L^2$ sense.}
First, we derive a more concise expression of the stationary distribution $ \bar{\rho}_{\rm st} $ in \eqref{eq:stationary_averaged}, which is of independent interest.
For $ \Gamma $ in \eqref{eq:gamma_def}, define
    \begin{equation}
        V(\theta) := - \int_0^\theta \Gamma(\varphi) \rmd\varphi , \quad  \theta \in \calS^1 . \label{eq:gamma_integral}
    \end{equation}
The proof of the following result is given in Appendix~\ref{app:proof_gibbs}.
\begin{lemma}\label{lem:gibbs}
    Let \revv{$ u_\FF \in L^2 (0,2\pi / \omega) $} be $ 2\pi / \omega $-periodic. Let $ \{v_k\} $ and $ \{z_k\} $ be the Fourier coefficients of $ u_{\FF} $ and the phase sensitivity function $ Z $, respectively, \revv{in the sense of $ L^2 $ convergence} and assume $ v_0 = 0 $.
		Then, $ \Gamma $ defined in \eqref{eq:gamma_def} satisfies
		\begin{align}
			\Gamma (\theta)	&= \sum_{k=-\infty}^\infty z_k v_{-k} \Ee^{\iu k \theta} ,  &&\forall \theta \in \calS^1, \label{eq:gamma_fourier} 
	\end{align}
	where \revv{the series converges absolutely and uniformly on $ \calS^1 $}. Moreover, $ \bar{\rho}_{\rm st} $ defined in \eqref{eq:stationary_averaged} satisfies
    \begin{align}
		\bar{\rho}_{\rm st}(\theta) &= \frac{1}{C} \exp\left( - \frac{V(\theta)}{B^2} \right) , &&\forall \theta \in \calS^1 , \label{eq:stationary_gibbs} 
	\end{align}
    where $ C := \int_{\calS^1} \exp( - V(\theta)/B^2 ) \rmd \theta $ is the normalizing constant.
    \hfill $ \diamondsuit $
\end{lemma}

\hi{This lemma indicates that under} a periodic input $ u_\FF $ with $ v_0 = 0 $, the stationary distribution $ \bar{\rho}_{\rm st} $ takes the form of a Gibbs distribution, whose energy function is $ V $.
The following \hi{theorem is our first main result. It} shows that given a target distribution $ \rho_{f,0} $, a periodic input $ u_\FF $ with properly designed Fourier coefficients achieves $ \bar{\rho}_{\rm st} = \rho_{f,0} $.
The proof based on Lemma~\ref{lem:gibbs} is deferred to Appendix~\ref{app:exact_reachability}.
\begin{theorem}\label{thm:exact_reachability}
	Assume that the \hi{target} density $ \rho_{f,0} $ satisfies $ \rho_{f,0} \in C^1 (\calS^1) $ and $ \rho_{f,0} (\theta) > 0 $ for any $ \theta \in \calS^1 $.
	\revv{Let $ \{z_k\} $ and $ \{p_k\} $ be the Fourier coefficients of $ Z $ and $ \partial_\theta \log \rho_{f,0} $, respectively, \revv{in the sense of $ L^2 $ convergence}. That is,}
	\begin{align}
		Z(\theta) = \sum_{k=-\infty}^{\infty} z_k \Ee^{\iu k\theta}  , ~~ \partial_\theta \log \rho_{f,0} (\theta) = \sum_{k=-\infty}^\infty p_k \Ee^{\iu k \theta}  \label{eq:fourier}
	\end{align}
	hold \revv{in the sense of $ L^2 $ convergence.}
	\hi{Assume further that for any $ k_0 \in \bbZ $ such that $ p_{k_0} \neq 0 $, it holds that $ z_{k_0} \neq 0 $.}
	Define 
	\begin{align}\label{eq:fourier_input}
		v_k := 
		\begin{cases}
		\frac{B^2p_{-k}}{z_{-k}} , & k \neq 0 \ {\rm and} \ z_{-k} \neq 0 ,\\
		0, & k = 0 \ {\rm or } \ z_{-k} = 0 .
		\end{cases}
	\end{align}
	If \revv{$ \{v_k \} \in \ell^2 (\bbZ) $} and $ u_\FF $ is the $ 2\pi / \omega $-periodic input with Fourier coefficients $ \{v_k\} $ as in \eqref{eq:periodic_input_fourier}, then $ \bar{\rho}_{\rm st} $ in \eqref{eq:stationary_averaged} satisfies
	\begin{align}
		\bar{\rho}_{\rm st}(\theta) = \rho_{f,0}(\theta) , ~~ \forall \theta \in \calS^1 . \label{eq:target_stationary}
	\end{align}
	\hfill $ \diamondsuit $
\end{theorem}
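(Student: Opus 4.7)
The plan is to reduce the claim to a comparison of Fourier coefficients by invoking Lemma~\ref{lem:gibbs}. Since $v_0 = 0$ by construction in \eqref{eq:fourier_input}, Lemma~\ref{lem:gibbs} applies and gives the Gibbs form $ \bar{\rho}_{\rm st}(\theta) = C^{-1} \exp(-V(\theta)/B^2) $ with $ V(\theta) = -\int_0^\theta \Gamma(\varphi)\rmd\varphi $. Both $\bar{\rho}_{\rm st}$ and $\rho_{f,0}$ are positive probability densities on $\calS^1$ (the positivity of $\rho_{f,0}$ is given; positivity of $\bar\rho_{\rm st}$ is immediate from its exponential form). Hence it suffices to show that their logarithmic derivatives coincide, i.e.
\begin{equation}
\Gamma(\theta) = B^2 \partial_\theta \log \rho_{f,0}(\theta), \quad \forall \theta \in \calS^1, \nonumber
\end{equation}
because then $\bar\rho_{\rm st}$ and $\rho_{f,0}$ differ by a multiplicative constant, which must equal $1$ since both integrate to $1$.

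To establish this equality, I would use the Fourier representation of $\Gamma$ provided by \eqref{eq:gamma_fourier} in Lemma~\ref{lem:gibbs} together with the second expansion in \eqref{eq:fourier}, reducing the identity to the coefficient-wise condition $z_k v_{-k} = B^2 p_k$ for every $k \in \bbZ$. I would then verify this condition by a short case analysis on $k$.

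\textbf{Case analysis.} For $k = 0$: the definition gives $v_0 = 0$, so the required identity reads $p_0 = 0$. This follows from periodicity, since
\begin{equation}
2\pi p_0 = \int_{\calS^1} \partial_\theta \log \rho_{f,0}(\theta) \rmd\theta = \bigl[\log \rho_{f,0}(\theta)\bigr]_0^{2\pi} = 0 . \nonumber
\end{equation}
For $k \neq 0$ with $z_k \neq 0$: by definition \eqref{eq:fourier_input} applied at index $-k$, we have $v_{-k} = B^2 p_k / z_k$, so $z_k v_{-k} = B^2 p_k$. For $k \neq 0$ with $z_k = 0$: the definition yields $v_{-k} = 0$, so the left-hand side is $0$; by the contrapositive of the hypothesis ``$p_{k_0}\neq 0 \Rightarrow z_{k_0}\neq 0$'', we also have $p_k = 0$, and the identity holds.

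\textbf{Main obstacle.} The reasoning is essentially bookkeeping once Lemma~\ref{lem:gibbs} is available; the only subtlety is ensuring the index shift $k \mapsto -k$ between the formula for $v_k$ and the coefficient identity $z_k v_{-k} = B^2 p_k$, and handling the indices where $z_{-k}$ vanishes via the standing assumption linking the supports of $(z_k)$ and $(p_k)$. A minor point worth highlighting in the write-up is justifying term-by-term matching of Fourier series, which is legitimate since $\Gamma$ and $B^2\partial_\theta\log\rho_{f,0}$ are both continuous (indeed real-analytic-like) on $\calS^1$ and the series converge in $L^2(\calS^1)$.
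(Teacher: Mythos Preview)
Your proposal is correct and follows essentially the same route as the paper's proof: invoke Lemma~\ref{lem:gibbs} to obtain the Gibbs form, then match Fourier coefficients of $\Gamma$ against $B^2\partial_\theta\log\rho_{f,0}$ via the same case analysis (including $p_0=0$ from periodicity). The only cosmetic difference is that the paper integrates back to $V(\theta)=-B^2\log(C\rho_{f,0}(\theta))+a$ and then argues $a=0$ from normalization, whereas you more directly note that equal logarithmic derivatives force the two positive densities to agree up to a multiplicative constant fixed by $\int=1$; these are the same argument phrased two ways.
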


Theorem~\ref{thm:exact_reachability} can be seen as the stochastic counterpart of the characterization of the reachability of deterministic oscillators in terms of Fourier coefficients given in \cite{Kuritz}.

\subsection{Approximate Reachability under Averaging}
Theorem~\ref{thm:exact_reachability} says that if the nonzero Fourier coefficients of $ \partial_\theta \log \rho_{f,0} $ are fully covered by the nonzero Fourier coefficients of the phase sensitivity function $ Z $, then we can design a periodic input $ u_\FF $ such that the stationary distribution $ \bar{\rho}_{\rm st} $ coincides with the target distribution $ \rho_{f,0} $. However, in practice, the number of nonzero Fourier coefficients of $ Z $ is typically limited, and if $ z_k $ is small, large $ v_{-k} $ is required by \eqref{eq:fourier_input}, which increases the energy of the control input.
Hence, the assumption in Theorem~\ref{thm:exact_reachability} may be restrictive.

\hi{The following theorem holds under a more relaxed assumption on the Fourier coefficients and provides several bounds on the difference between the stationary and target distributions. 
The bounds are expressed in terms of the $ L^1 $ and $ L^2 $ norms, the KL divergence, and the relative Fisher information. These bounds will play different roles in our subsequent analysis. The proof of this theorem is given in Appendix~\ref{app:appro_reachability}.}

\begin{theorem}\label{thm:appro_reachability}
	Assume that the \hi{target} density $ \rho_{f,0} $ satisfies $ \rho_{f,0} \in C^1 (\calS^1) $ and $ \rho_{f,0} (\theta) > 0 $ for any $ \theta \in \calS^1 $.
	\revv{Let $ \{z_k\} $ and $ \{p_k\} $ be the Fourier coefficients of $ Z $ and $ \partial_\theta \log \rho_{f,0} $ as in \eqref{eq:fourier} \revv{in the sense of $ L^2 $ convergence}.}
	Let \revv{$ u_{\FF} \in L^2 (0,2\pi / \omega)$} be $ 2\pi / \omega $-periodic with Fourier coefficients $ \{v_k\} \in \ell^2 (\bbZ) $ and $ v_0 = 0 $.
	Define $ a_k := \frac{z_k v_{-k}}{B^2} - p_k $.
	Then, $ \{a_k\} \in \ell^2 (\bbZ) $ and
	\begin{align}
		\|\partial_\theta \log \bar{\rho}_{\rm st} - \partial_\theta \log \rho_{f,0} \|_{2} &= (2\pi)^{1/2} \left(\sum_{k=-\infty}^\infty \left| a_k \right|^2 \right)^{1/2} , \label{eq:l2_bound} \\
		\FI{\rho_{f,0}}{\bar{\rho}_{\rm st}} &\le 2\pi \|\rho_{f,0}\|_{\infty} \sum_{k=-\infty}^\infty \left| a_k \right|^2. \label{eq:fi_bound}
	\end{align}
	Moreover, if $ \{p_k \} \in \ell^1 (\bbZ) $, then $ \{a_k \} \in \ell^1 (\bbZ) $ and
	\begin{align}
		\|\log \bar{\rho}_{\rm st} - \log \rho_{f,0} \|_{1} 
		&\le (2\pi)^2  \sum_{k=-\infty}^\infty \left| a_k \right|  , \label{eq:l1_bound} \\
				\kl{\rho_{f,0}}{\bar{\rho}_{\rm st}} &\le (2\pi)^{2} \|\rho_{f,0}\|_{\infty} \sum_{k=-\infty}^\infty \left| a_k \right| . \label{eq:kl_bound} 
	\end{align}
	\hfill $ \diamondsuit $
\end{theorem}

\hi{
	Notably, \eqref{eq:l2_bound} is an equality while \eqref{eq:fi_bound}--\eqref{eq:kl_bound} provide upper bounds.
	The periodic input $ u_\FF $ whose Fourier coefficients $ \{v_k\} $ are given by \eqref{eq:fourier_input} makes the right-hand sides of \eqref{eq:l2_bound}--\eqref{eq:kl_bound} zero. Therefore, Theorem~\ref{thm:appro_reachability} can be seen as an extension of Theorem~\ref{thm:exact_reachability} to general Fourier coefficients of $ Z $ and $ \partial_\theta \log \rho_{f,0} $. 
	The $ L^1 $ and $ L^2 $ norms are characterized by the $ \ell^1 $ and $ \ell^2 $ norms of $ \{ z_k v_{-k} B^{-2} - p_k \}_k $, respectively.
	For the error analysis of averaging in Subsection~\ref{sec:error} and the convergence analysis of feedback controllers, which will be designed in Section~\ref{sec:controller}, the KL divergence and the relative Fisher information are more tractable than the $ L^1 $ and $ L^2 $ norms. These quantities between the stationary distribution $ \bar{\rho}_{\rm st} $ and the target $ \rho_{f,0} $ can also be bounded in terms of $ \{ z_k v_{-k} B^{-2} - p_k \}_k $ as in \eqref{eq:fi_bound} and \eqref{eq:kl_bound}.
}

\subsection{Design of a Periodic Input via Convex Optimization}\label{subsec:optimization}
Inspired by Theorem~\ref{thm:appro_reachability}, we determine $ \{v_k\} $ to minimize the right-hand side of \eqref{eq:fi_bound} or \eqref{eq:kl_bound}. 
Let $ \calK := \{k\in \bbZ : p_{-k} \neq 0, \ z_{-k} \neq 0 \} $, where $ \{p_k\} $ and $ \{z_k\} $ are \hi{the} Fourier coefficients of $ \partial_\theta \log \rho_{f,0} $ and $ Z $, respectively, from \eqref{eq:fourier}. 
For $ k $ with $ p_k = 0 $, choosing $ v_{-k} = 0 $ minimizes $ | \frac{z_k v_{-k}}{B^2} - p_k | $, and for $ k $ with $ z_k = 0 $, the coefficient $ v_{-k} $ does not affect $ | \frac{z_k v_{-k}}{B^2} - p_k | $. Therefore, we set $ v_k = 0 $ for $ k \not\in \calK $ and optimize $ v_k $ for $ k\in \calK $.

By Parseval's identity, the input energy can be written as
\begin{align}
	\|u_\FF (\cdot/\omega) \|_{2}^2 = 2\pi \sum_{k=-\infty}^\infty |v_k|^2 . \nonumber
\end{align}
Then, for minimizing the difference between $ \bar{\rho}_{\rm st} $ and $ \rho_{f,0} $ under an input energy constraint, we consider the following optimization problem:
\begin{align}
  \underset{\{v_k\}_{k\in \calK}}{\text{minimize}} ~~ &\sum_{k\in \calK} \left( \left| \frac{z_{-k} v_k}{B^2} - p_{-k} \right|^{\hi{r}} + \lambda R(v_k) \right) \label{prob:opt_fourier}\\
  \text{subject~to} ~~ & \sum_{k\in \calK} |v_k|^2 \le \frac{E}{2\pi}  , \label{eq:opt_fourier_constraint}
\end{align}
where $ \hi{r} \in \{1,2\} $, $ E > 0 $, $ \lambda \ge 0 $, and $ R : \bbC \rightarrow \bbR $ is a convex regularization of $ v_k $. For example, the $ \ell^1 $~regularization $ R(v_k) = |{\rm Re}(v_k) | + |{\rm Im}(v_k) | $ promotes the sparsity of the Fourier coefficients of $ u_\FF $.
Additional constraints on $ u_{\FF} $ can be introduced to Problem~\eqref{prob:opt_fourier}.
\hi{For example,} when considering an amplitude constraint $ \|u_\FF \|_\infty \le U $, it is reasonable to replace it by a convex constraint $ \sum_{k\in \calK} |v_k| \le U $ because $ \|u_\FF \|_\infty \le \sum_{k\in \calK} |v_k| $.

To reformulate the optimization \eqref{prob:opt_fourier} over complex variables as a problem over real variables, we split $ v_k $, $ z_k $, and $ p_k $ into their real and imaginary parts as $ v_k = v_{k}^R + \iu v_{k}^I $, $ z_k = z_{k}^R + \iu z_{k}^I $, and $ p_k = p_{k}^R + \iu p_{k}^I $.
Then, Problem~\eqref{prob:opt_fourier} with $ \hi{r} = 2 $ can be rewritten as
\begin{align}
  \underset{\{v_k\}_{k\in \calK}}{\text{minimize}} ~~ &\sum_{k\in \calK} \Bigl(  (z_{-k}^R v_{k}^R - z_{-k}^I v_{k}^I - p_{-k}^R B^2 )^2 \nonumber\\
  &\quad \quad  + (z_{-k}^I v_{k}^R + z_{-k}^R v_{k}^I - p_{-k}^I B^2 )^2  + \lambda R(v_k) \Bigr) \nonumber \\
  \text{subject~to} ~~ & \sum_{k\in \calK} \left[(v_{k}^R)^2 + (v_{k}^I)^2 \right] \le \frac{E}{2\pi}  . \nonumber
\end{align}
\hi{This is} a quadratic program, and thus, can be solved efficiently. Similarly, the objective \eqref{prob:opt_fourier} with $ \hi{r} = 1 $ can be written as the following convex objective function:
\begin{align}
    &\sum_{k\in \calK} \biggl( \Bigl[  (z_{-k}^R v_{k}^R - z_{-k}^I v_{k}^I - p_{-k}^R B^2 )^2 \nonumber\\
  &\quad \quad  + (z_{-k}^I v_{k}^R + z_{-k}^R v_{k}^I - p_{-k}^I B^2 )^2 \Bigr]^{1/2}  + \lambda R(v_k) \biggr) . \nonumber
\end{align}
Moreover, since the Fourier coefficients decay fast, it is reasonable to truncate the series and use only a small number of coefficients as optimization variables.

\revv{
	\begin{rmk}\label{rmk:multi_input}
Theorems~\ref{thm:exact_reachability} and \ref{thm:appro_reachability} can be extended straightforwardly to multi-input cases, where $ Z(\theta) u(t) $ in \eqref{eq:rho} is replaced by $ \sum_{\ell=1}^m Z_\ell(\theta)u_\ell(t) $.
Assume that $Z_\ell(\theta)=\sum_{k\in\mathbb Z} z_{\ell,k}\Ee^{\iu k\theta}$ and $u_{\ell,\mathrm{FF}}(t)=\sum_{k\in\mathbb Z} v_{\ell,k}\Ee^{\iu k\omega t} $ with \(v_{\ell,0}=0\). Then, the averaged drift becomes
\[
\Gamma(\theta)
=
\sum_{k=-\infty}^\infty
\left(
\sum_{\ell=1}^m z_{\ell,k}v_{\ell,-k}
\right)\Ee^{\iu k\theta}.
\]
Consequently, the term $ z_kv_{-k}B^{-2}-p_k $ in Theorem~\ref{thm:appro_reachability} is replaced by
\begin{equation}\label{eq:multi_input}
	\frac{1}{B^2}\left(\sum_{\ell=1}^m z_{\ell,k}v_{\ell,-k} \right) -p_k .	
\end{equation}
Thus, the exact reachability condition in Theorem~\ref{thm:exact_reachability} is relaxed: for every mode $ k $ with $ p_k \neq 0 $, it suffices that $ \sum_{\ell=1}^m |z_{\ell,k}|^2>0 $.
The corresponding input design problem remains convex because \eqref{eq:multi_input} is affine in the input Fourier
coefficients.
\hfill $ \diamondsuit $
\end{rmk}
}

\subsection{Approximation Error by Averaging}\label{sec:error}
In this subsection, we analyze the error between \hi{the distribution} $ \wtilde{\rho}_\modi $ \hi{under the feedforward control in \eqref{eq:rho1-rot}} and its approximation $ \bar{\rho}_\modi $ obtained via the averaging method \hi{in \eqref{eq:rho1-mean}}.
For simplicity of analysis, we assume the following condition.
\begin{assumption}\label{ass:nondegenerate}
    The phase sensitivity function for noise is given by $ Z_w \equiv 1 $.
    \hfill $ \diamondsuit $
\end{assumption}

\hi{The analysis carried out in this subsection can be extended} to the case where $ Z_w^2 (\theta) > 0 $ for any $ \theta \in \calS^1 $. This condition means the non-degeneracy of the Fokker--Planck equation \eqref{eq:rho} and is commonly used to guarantee the existence of its solution.

The following result provides an upper bound of the error induced by the averaging method.
The proof is shown in Appendix~\ref{app:bound_average_original}.

\begin{theorem}\label{thm:bound_average_original}
    Suppose that Assumption~\ref{ass:nondegenerate} holds.
		\revv{Assume that $ \rho_{\modi,0} (\theta) > 0 $ for any $ \theta \in \calS^1 $, $ u_{\FF} \in C(\bbR) $ is $ 2\pi/\omega $-periodic, and $ \wtilde{\rho}_\modi $ following \eqref{eq:rho1-rot} satisfies}
    \begin{align*}
				M := \sup_{t\ge 0,\theta\in \calS^1} \wtilde{\rho}_\modi (t,\theta) < \infty.
    \end{align*}
		\revv{Let $ m := \inf_{t\ge 0,\theta\in \calS^1} \wtilde{\rho}_\modi (t,\theta) > 0 $.}
    Then, it holds that
    \begin{align}
        &\limsup_{t\rightarrow \infty} D_{\rm KL} (\bar{\rho}_{\rm st} (\cdot) \| \wtilde{\rho}_\modi (t,\cdot)) \nonumber\\
        &\le \frac{\sup_{s \in [0,2\pi/\omega)} \int_{\calS^1} \left(\Gamma (\theta) - A(s,\theta) \right)^2 \bar{\rho}_{\rm st} (\theta) \rmd \theta}{(m/M)^2 D^2} , \label{eq:averaging_bound}
    \end{align}
    where $ A(t,\theta) := Z(\theta + \omega t) u_\FF (t) $ and $ \Gamma(\theta) := \frac{\omega}{2\pi}\int_{0}^{2\pi/\omega}A(t,\theta) \rmd t  $.
    \hfill $ \diamondsuit $
\end{theorem}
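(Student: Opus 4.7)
The plan is to treat $F(t):=D_{\mathrm{KL}}(\bar\rho_{\mathrm{st}}\,\|\,\wtilde\rho_\modi(t,\cdot))$ as a Lyapunov-like functional, derive a differential inequality of the form $\dot F \le -\gamma F + c(t)$, and conclude by Gr\"onwall. Write $\nu:=\bar\rho_{\mathrm{st}}$ and $\rho:=\wtilde\rho_\modi$ for brevity. From Lemma~\ref{lem:gibbs} together with Assumption~\ref{ass:nondegenerate} (so $B^2=D$), $\nu\propto\exp(-V/D)$ with $V'=-\Gamma$; in particular $D\,\partial_\theta\nu=\Gamma\nu$ and $\nu$ is the invariant density of the averaged generator $\bar{\mathcal L}$.

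The first main step is to compute $\dot F=-\int(\nu/\rho)\,\partial_t\rho\,d\theta$ using \eqref{eq:rho1-rot} and the decomposition $\wtilde{\mathcal L}_{u_\FF}\rho=\bar{\mathcal L}\rho-\partial_\theta((A-\Gamma)\rho)$. Integration by parts on the averaged part, combined with $D\partial_\theta\nu=\Gamma\nu$, produces the dissipation
\[
    -\int\frac{\nu}{\rho}\,\bar{\mathcal L}\rho\,d\theta = -D\int\nu\bigl(\partial_\theta\log(\nu/\rho)\bigr)^2 d\theta = -D\,\mathcal I(\nu\|\rho),
\]
while one integration by parts on the perturbation piece, using the identity $\partial_\theta(\nu/\rho)=(\nu/\rho)\partial_\theta\log(\nu/\rho)$, gives
\[
    \int\frac{\nu}{\rho}\,\partial_\theta\bigl((A-\Gamma)\rho\bigr)d\theta = -\int(A-\Gamma)\,\nu\,\partial_\theta\log(\nu/\rho)\,d\theta.
\]
Hence $\dot F=-D\,\mathcal I(\nu\|\rho)-\int(A-\Gamma)\nu\,\partial_\theta\log(\nu/\rho)\,d\theta$.

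Next, the cross term is controlled by Cauchy--Schwarz followed by Young's inequality:
\[
    \Bigl|\int(A-\Gamma)\nu\,\partial_\theta\log(\nu/\rho)\,d\theta\Bigr|\le\sqrt{E(t)\,\mathcal I(\nu\|\rho)}\le\frac{D}{2}\,\mathcal I(\nu\|\rho)+\frac{E(t)}{2D},
\]
with $E(t):=\int(A(t,\cdot)-\Gamma)^2\nu\,d\theta$, yielding the entropy--energy estimate $\dot F\le-\tfrac{D}{2}\mathcal I(\nu\|\rho)+\tfrac{E(t)}{2D}$. The remaining ingredient is a log-Sobolev-type comparison $\mathcal I(\nu\|\rho)\ge(m/M)^{2}F$; the plan is to derive this from the two-sided bounds $m\le\rho\le M$, exploiting the representation $\mathcal I(\nu\|\rho)=\int\rho^{2}(\partial_\theta(\nu/\rho))^{2}/\nu\,d\theta$ together with a transfer of the log-Sobolev inequality from the uniform measure on $\calS^1$ to the bounded density $\rho$ (via Holley--Stroock type perturbation, with the factor $(M/m)^{2}$ coming from the uniform control of $\rho^{2}/\nu$ relative to the unweighted Fisher form). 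Plugging this into the entropy--energy estimate gives
\[
    \dot F\le-\frac{D}{2}\Bigl(\frac{m}{M}\Bigr)^{2}F+\frac{E_{\max}}{2D},\qquad E_{\max}:=\sup_{s\in[0,2\pi/\omega)}E(s),
\]
and Gr\"onwall's lemma yields $\limsup_{t\to\infty}F(t)\le E_{\max}/\bigl((m/M)^{2}D^{2}\bigr)$, which is \eqref{eq:averaging_bound}.

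The main obstacle is the functional inequality $\mathcal I(\nu\|\rho)\ge(m/M)^{2}F$ with exactly this constant; the bounds $m,M$ on $\rho$ must be exploited carefully, since standard Holley--Stroock perturbation off the uniform measure naively only gives a factor $M/m$, and recovering $(M/m)^{2}$ appears to require using both bounds (once via the integrand $\rho^{2}/\nu$ in $\mathcal I$ and once via the log-Sobolev constant of $\rho$). The remaining steps are the routine entropy-dissipation calculation for nonequilibrium Fokker--Planck equations, together with the verification that the periodicity and regularity of $A$, $\nu$, and $\rho$ justify the integrations by parts used in Step 2.
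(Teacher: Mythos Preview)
Your approach is essentially the paper's: differentiate the KL divergence in time, split off the relative Fisher information as dissipation and a cross term driven by the drift mismatch $A-\Gamma$, bound the cross term by Cauchy--Schwarz and Young, close with a log-Sobolev inequality for $\wtilde\rho_\modi$, and finish via Gr\"onwall with $\delta=D/2$. The one notable difference is that the paper works with $H_\modi(t)=D_{\mathrm{KL}}(\bar\rho_\modi(t,\cdot)\,\|\,\wtilde\rho_\modi(t,\cdot))$ for a general solution $\bar\rho_\modi$ of the averaged equation, and therefore inserts an extra step showing $\bar\rho_\modi\to\bar\rho_{\mathrm{st}}$ in $L^1$ so that the source term $\int(\Gamma-A)^2\bar\rho_\modi\,d\theta$ can be replaced by $\int(\Gamma-A)^2\bar\rho_{\mathrm{st}}\,d\theta$. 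Your choice of $\nu=\bar\rho_{\mathrm{st}}$ from the outset is cleaner: since $\bar{\calL}\bar\rho_{\mathrm{st}}=0$, the same entropy-dissipation identity holds (your Step~2 computation is correct, using $D\partial_\theta\nu=\Gamma\nu$), and the source term already carries $\bar\rho_{\mathrm{st}}$, so the $L^1$-convergence detour is unnecessary.

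Your ``main obstacle'' is not one. The paper supplies the needed inequality as Lemma~\ref{lem:lsi_bounded}: any strictly positive $\rho\in C^1(\calS^1)$ satisfies $\mathrm{LSI}\bigl(e^{-2\,\mathrm{osc}(\log\rho)}\bigr)=\mathrm{LSI}\bigl((m/M)^2\bigr)$, obtained by a single application of Holley--Stroock perturbation off the uniform density on $\calS^1$ (which has $\mathrm{LSI}(1)$). This immediately gives $\FI{\nu}{\rho}\ge 2(m/M)^2 F$, stronger than what you target. Your worry that ``standard Holley--Stroock naively only gives a factor $M/m$'' is inverted: an LSI constant $m/M$ would be \emph{larger} (hence stronger) than $(m/M)^2$, so there is nothing delicate to recover, and no two-step comparison through $\rho^2/\nu$ is needed.
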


\hi{
	It is known that non-degenerate Fokker--Planck equations exhibit contraction with respect to the KL divergence, that is, the KL divergence between two solutions with different initial densities converges to zero~\cite{Risken}.
	Contraction does not hold between solutions to \eqref{eq:rho1-rot} and the averaged equation~\eqref{eq:rho1-mean} due to their different drift coefficients.
	Nevertheless, Theorem~\ref{thm:bound_average_original} shows that an {\em approximate} contraction property holds for solutions to \eqref{eq:rho1-rot} and \eqref{eq:rho1-mean}. 
	The approximation is characterized by the numerator of the right-hand side of \eqref{eq:averaging_bound}, which represents the discrepancy between the original drift coefficient $ A $ and the averaged coefficient $ \Gamma $.
	
}

\revv{
\begin{rmk}\label{rmk:finite_time_bound}

The proof of Theorem~\ref{thm:bound_average_original} also yields a finite-time bound. Define
\begin{align*}
H_{\rm FF}(t)
&:=
D_{\rm KL}(\bar\rho_{\rm FF}(t,\cdot)
\|\wtilde{\rho}_{\rm FF}(t,\cdot)), \\
\bar{E}_{\rm FF}
&:=
\sup_{t\ge 0} \int_{\calS^1}
(\Gamma(\theta)-A(t,\theta))^2
\bar\rho_{\rm FF}(t,\theta) \rmd\theta ,
\end{align*}
and $\lambda:=(m/M)^2$. Then, by \eqref{eq:dH_FF_bound} and the Gronwall lemma, for any $\delta\in(0,D)$,
\begin{align*}
H_{\rm FF}(t)
&\le
\Ee^{-2\lambda(D-\delta)t}H_{\rm FF}(0)
\\
&\quad +
\frac{\bar{E}_{\rm FF}}
{4\lambda\delta(D-\delta)}
\left(1-\Ee^{-2\lambda(D-\delta)t}\right).
\end{align*}
Setting $\delta=D/2$ gives the following finite-time bound:
\[
H_{\rm FF}(t)
\le
\Ee^{-\lambda Dt}H_{\rm FF}(0)
+
\frac{\bar{E}_{\rm FF}}{\lambda D^2}
\left(1-\Ee^{-\lambda Dt}\right).
\]
\hfill $ \diamondsuit $
\end{rmk}

\begin{rmk}\label{rmk:assumption}
	When $Z_w\equiv1$, the diffusion coefficients of \eqref{eq:rho1-rot} and \eqref{eq:rho1-mean} coincide, and the proof of Theorem~\ref{thm:bound_average_original} only has to bound the difference of the drift coefficients $\Gamma(\theta)-A(t,\theta)$. For nonconstant
\(Z_w^2>0\), the same argument can be carried out by adding a term for the mismatch of the diffusion coefficients, for example, in \eqref{eq:mismatch_drift}. 
\hfill $ \diamondsuit $
\end{rmk}
}

\revv{
\begin{rmk}\label{rmk:tradeoff}
	There is an inherent tradeoff in the choice of the feedforward input. Although a larger input can make the stationary density of the averaged equation closer to the target density, it may increase the discrepancy between the original dynamics and its averaged approximation.
The energy bound $E$ in \eqref{eq:opt_fourier_constraint} can serve as a
parameter for balancing the closeness of the averaged stationary density to the target density and the
validity of the averaging approximation. 

For each fixed $E$, let $u_E$
be a solution to \eqref{prob:opt_fourier}--\eqref{eq:opt_fourier_constraint}.
Let $\wtilde\rho_{{\rm FF},E}$ denote the resulting distribution of
oscillators satisfying \eqref{eq:rho1-rot}, and let
$\bar\rho_{{\rm st},E}$ denote the stationary density of the averaged
equation \eqref{eq:rho1-mean}. Denote by $\alpha(E)$ the upper bound
\eqref{eq:kl_bound} on the discrepancy between the target
$\rho_{f,0}$ and $\bar\rho_{{\rm st},E}$ under $u_E$. Similarly,
denote by $\beta(E)$ the upper bound \eqref{eq:averaging_bound} on
the discrepancy between $\bar\rho_{{\rm st},E}$ and
$\wtilde\rho_{{\rm FF},E}$ under $u_E$.

Then, the Csisz\'{a}r--Kullback--Pinsker inequality
\cite{Villani, Gilardoni} and the triangle inequality yield
\[
\limsup_{t\to\infty}
\|\wtilde\rho_{{\rm FF},E}(t,\cdot)-\rho_{f,0}\|_1
\leq
\sqrt{2\alpha(E)}+\sqrt{2\beta(E)}.
\]
This bound suggests a simple procedure for determining the energy
constraint $E$: solve the convex problem \eqref{prob:opt_fourier} for
several values of $E$ and select the value that minimizes the above
bound. 
In this way, $E$ is not chosen solely to make
$\bar\rho_{{\rm st},E}$ close to $\rho_{f,0}$, but rather to balance
the error against the discrepancy between the averaged and original
dynamics.
\hfill $ \diamondsuit $
\end{rmk}

}

\section{Population-Level Feedback Control Design and Convergence Analysis}\label{sec:controller}
The previous section has investigated the design of a feedforward control $ u_\FF $ under which the periodic stationary distribution of oscillators $ \rho_{\modi} $ is close to the target $ \rho_f $ and their closeness is guaranteed.
\revv{Under the periodic feedforward input $u_{\rm FF}$ alone, the
convergence of $\rho$ to the periodic stationary distribution
$\rho_{\rm FF}$ is guaranteed to be exponential, as will be shown in
Theorem~\ref{thm:proposed}. Nevertheless, depending on the applications, faster convergence may be required. In this section, we propose a feedback controller as an additional mechanism to improve the transient response.}

\hi{In the control design}, the key idea is to compute the periodic stationary distribution offline in advance and use it as a surrogate target in designing the feedback controller instead of $ \rho_f $; see~Fig.~\ref{fig:converge}.
Here, we emphasize that $ \rho_\modi $ no longer represents the actual distribution of oscillators.
\hi{In our approach, we construct a control law based on the KL divergence as a Lyapunov functional
and then demonstrate its convergence properties.
The KL divergence is known to be useful for the convergence analysis of Fokker--Planck equations to their stationary distributions~\cite{Risken}.
It is noted that in \cite{Kuritz,Monga2018}, a control law is designed by using the $ L^2 $~distance instead.}

\begin{figure}[tb]
  \centering
  \includegraphics[scale=0.37]{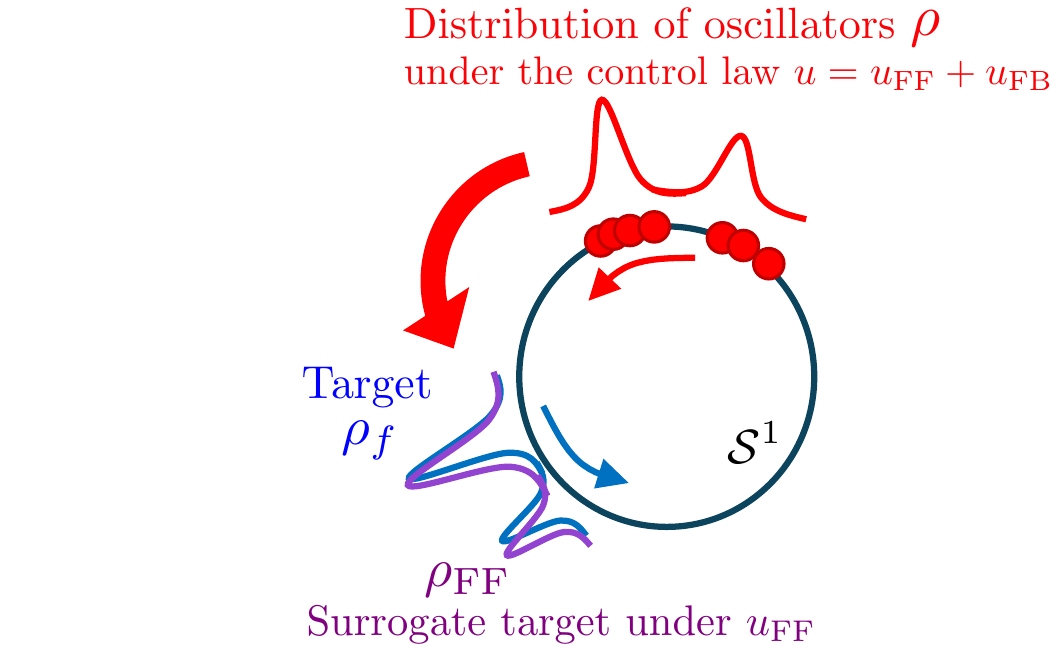}
	   \put(-210,70){\footnotesize \colorbox{mistygray}{\shortstack{Convergence of \textcolor{red}{$ \rho $} to \textcolor{plum}{$ \rho_\modi $} \\ (\hi{Theorem}~\ref{thm:proposed} and \\ \hi{Propositions}~\ref{prop:rfisher} and \ref{prop:measurement_err})}}}
  \caption{Under the proposed controller, the distribution of oscillators $ \rho $ converges to the surrogate target $ \rho_\modi $ (\hi{Theorem}~\ref{thm:proposed} and \hi{Propositions}~\ref{prop:rfisher} and \ref{prop:measurement_err}).}
  \label{fig:converge}
\end{figure}

\subsection{Design of a Feedback Controller}\label{subsec:feedback}
\revv{In Subsections~\ref{subsec:feedback} and \ref{subsec:convergence_fb}, we consider an ideal population-level
feedback setting in which the density $\rho(t,\cdot)$ of an oscillator population is available exactly to the controller.}

\hi{Our approach is to design a controller which decreases the KL divergence between the distribution of oscillators $ \rho $ and $ \rho_{\modi} $. Recall that the KL divergence is nonnegative and it holds $ \kl{\rho_1}{\rho_2} = 0 $ if and only if $ \rho_1 = \rho_2 $.}
In what follows, we assume that $ \rho_{\modi} (t,\cdot) $ is strictly positive for any \hi{$ t \ge 0 $} to guarantee the finiteness of $ H(t) := \kl{\rho(t,\cdot)}{\rho_{\modi}(t,\cdot)} $.
Similarly to \eqref{eq:derivative_H_1} in Appendix~\ref{app:bound_average_original}, the derivative of $ H(t) $ along the trajectories of \eqref{eq:rho} and \eqref{eq:rho1} is given by
\begin{align}
	\frac{\rmd H(t)}{\rmd t} = \int \rho \calL_{u}^\dagger [\log R] \rmd \theta - \int R \calL_{u_\FF}\rho_{\modi} \rmd \theta,
\end{align}
where the time $ t $ is omitted for notational simplicity, \hi{$ R(t,\theta) := \rho(t,\theta)/\rho_{\modi}(t,\theta) $,} and $ \calL_{u}^\dagger := (\omega + Z(\theta)u(t))\partial_\theta + D Z_w^2 (\theta)\partial_\theta^2 $ is the adjoint operator of $ \calL_{u} $; \hi{note that it} satisfies $ \int (\calL_u f) g \rmd \theta = \int f \calL_u^\dagger g \rmd \theta $ for any $ C^2 $~functions $ f, g $ on $ \calS^1 $~\cite{Risken}.
\hi{We can further obtain}
\begin{align}
	\frac{\rmd H(t)}{\rmd t} 
	&= \int \rho_{\modi} \calL_{u}^\dagger R\rmd \theta - \int R \calL_{u_{\FF}}\rho_{\modi}\rmd \theta \nonumber\\
    &\quad - D \int \rho Z_w^2 \left(\frac{\partial_\theta R}{R}\right)^2 \rmd \theta \nonumber\\
	&= \int R (\calL_{u}- \calL_{u_\FF})\rho_{\modi}\rmd \theta - D_{z,t} \FI{\rho_z}{\rho_{\modi,z}} \nonumber\\
	&= - \left(\int R\partial_\theta[Z\rho_{\modi}] \rmd \theta \right) (u-u_{\FF} ) \nonumber\\
	&\quad - D_{z,t}\FI{\rho_z}{\rho_{\modi,z}}, \label{eq:dHdt}
\end{align}
where
\begin{align}
    \rho_z (t,\theta) &:= \frac{\rho(t,\theta) Z_w^2(\theta)}{\int \rho Z_w^2 \rmd \theta}, ~~ \rho_{\modi,z} (t,\theta) := \frac{\rho_{\modi}(t,\theta) Z_w^2(\theta)}{\int \rho_{\modi} Z_w^2 \rmd \theta}, \nonumber\\
    D_{z,t} &:= D\left( \int \rho(t,\theta) Z_w^2(\theta) \rmd \theta  \right), \nonumber
\end{align}
and $ \int \rho Z_w^2 \rmd \theta $ and $ \int \rho_{\modi} Z_w^2 \rmd \theta $ are assumed to be nonzero. This is satisfied, for example, when $ Z_w \neq 0 $ almost everywhere.
Based on \eqref{eq:dHdt}, we obtain the following result.
\begin{proposition} \label{prop:rfisher}
	Assume that $ \rho_{\modi}(t,\cdot) $ following \eqref{eq:rho1} is strictly positive for any $ t \ge 0 $, and $ u_\FB : [0,\infty) \rightarrow \bbR $ satisfies
	\begin{equation}\label{eq:fb_cond}
		\left(\int_{\calS^1} \frac{\rho(t,\theta)}{\rho_{\modi}(t,\theta)} \partial_\theta [Z\rho_{\modi}(t,\theta)] \rmd \theta \right) u_\FB (t) \ge 0, ~~ \forall t \ge 0 .
	\end{equation}
		Also, assume that $ D_{z,t} $ is uniformly positive and $ \rmd H / \rmd t $ is uniformly continuous.
	Then, under $ u = u_\FF + u_\FB $, the relative Fisher information $ \FI{\rho_z (t,\cdot)}{\rho_{\modi,z}(t,\cdot)} $ converges to $ 0 $ as $ t \rightarrow \infty $ for any initial densities $ \rho_0 $, $ \rho_{\modi,0} $ and any $ u_{\FF} $.
	\hfill $ \diamondsuit $
\end{proposition}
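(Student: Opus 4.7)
The starting point is the identity \eqref{eq:dHdt} for $H(t) = \kl{\rho(t,\cdot)}{\rho_{\modi}(t,\cdot)}$ derived just above the statement. With $u = u_\FF + u_\FB$, we have $u - u_\FF = u_\FB$, so \eqref{eq:dHdt} reads
\begin{equation}
\frac{\rmd H(t)}{\rmd t} = - \left(\int_{\calS^1} R(t,\theta)\,\partial_\theta[Z(\theta)\rho_{\modi}(t,\theta)]\,\rmd\theta\right) u_\FB(t) - D_{z,t}\,\FI{\rho_z(t,\cdot)}{\rho_{\modi,z}(t,\cdot)} . \nonumber
\end{equation}
The hypothesis \eqref{eq:fb_cond} is precisely the statement that the first bracketed term times $u_\FB(t)$ is nonnegative, so the first summand is $\le 0$. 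The second summand is $\le 0$ because $D_{z,t} > 0$ (since $Z_w\neq 0$ a.e.\ and $\rho(t,\cdot)$ is a probability density) and the relative Fisher information is nonnegative.

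Hence $H$ is monotonically nonincreasing. Since $H(t)\ge 0$, the limit $H_\infty := \lim_{t\to\infty} H(t)$ exists and is finite. Integrating the above inequality from $0$ to $t$ yields
\begin{equation}
\int_0^t D_{z,s}\,\FI{\rho_z(s,\cdot)}{\rho_{\modi,z}(s,\cdot)}\,\rmd s \le H(0) - H(t) \le H(0), \nonumber
\end{equation}
so the integrand is in $L^1([0,\infty))$.

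The next step is to upgrade $L^1$ integrability to pointwise convergence to $0$, which I would do via Barbalat's lemma. Setting $f(t) := D_{z,t}\,\FI{\rho_z(t,\cdot)}{\rho_{\modi,z}(t,\cdot)}$, it suffices to show that $f$ is uniformly continuous on $[0,\infty)$. I would establish this by combining: (i) parabolic regularity for the Fokker--Planck equations \eqref{eq:rho} and \eqref{eq:rho1} on the compact manifold $\calS^1$, which yields uniform-in-time bounds on $\rho$, $\rho_\modi$ and their spatial derivatives together with strict positivity of $\rho_{\modi}$ (assumed) and of $\rho$ (from maximum-principle/lower bound arguments); and (ii) a direct bound on $|f'(t)|$ obtained by differentiating $f$ in $t$ and substituting \eqref{eq:rho} and \eqref{eq:rho1} for $\partial_t \rho$ and $\partial_t \rho_\modi$. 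I would keep $u_\FF$ bounded (which follows from its Fourier expansion and truncation in the design) and assume local integrability of $u_\FB$ to control the $u$-dependent terms. Establishing this uniform continuity is the main technical obstacle, as the Fisher information involves gradients and thus second derivatives of $\rho,\rho_\modi$ appear in $f'(t)$.

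Once uniform continuity of $f$ is in hand, Barbalat's lemma gives $f(t)\to 0$. To conclude the statement, I would then show that $D_{z,t}$ is uniformly bounded \emph{below} by a positive constant: using strict positivity of $\rho_\modi$ and the fact that $H(t)\le H(0)$ keeps $\rho(t,\cdot)$ from concentrating on a small set, together with $Z_w\neq 0$ a.e., yields $\inf_{t\ge 0} D_{z,t} > 0$. Dividing $f(t)\to 0$ by this positive lower bound gives $\FI{\rho_z(t,\cdot)}{\rho_{\modi,z}(t,\cdot)}\to 0$ as $t\to\infty$, as required.
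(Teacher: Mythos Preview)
Your approach is essentially the same as the paper's: start from the identity \eqref{eq:dHdt}, use the sign condition \eqref{eq:fb_cond} together with $D_{z,t}>0$ to show $H$ is monotone nonincreasing and bounded below, and then deduce that the Fisher term tends to zero. The paper's proof is much terser: it simply asserts that since $H(t)$ is monotone and converges, $\rmd H/\rmd t \to 0$, and hence (both summands being nonpositive) the Fisher term vanishes.

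You are, in fact, more careful than the paper on two points the paper glosses over. First, monotone convergence of $H$ does not by itself imply $\rmd H/\rmd t \to 0$; your route via $L^1$-integrability of $f(t)=D_{z,t}\FI{\rho_z}{\rho_{\modi,z}}$ plus Barbalat's lemma is exactly the right way to close that gap, and you correctly flag the uniform continuity of $f$ as the nontrivial regularity input. Second, the paper only records $D_{z,t}>0$, whereas you correctly note that passing from $D_{z,t}\FI{\rho_z}{\rho_{\modi,z}}\to 0$ to $\FI{\rho_z}{\rho_{\modi,z}}\to 0$ requires a uniform lower bound $\inf_t D_{z,t}>0$. So your proposal follows the same line as the paper but fills in details the paper leaves implicit.
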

\begin{proof}
	Under the condition~\eqref{eq:fb_cond}, the first term of \eqref{eq:dHdt} is nonpositive.
	Therefore, it holds that
	\begin{align}\label{eq:dissipation}
	\revv{\frac{\rmd H(t)}{\rmd t} \le - D_{z,t}\FI{\rho_z}{\rho_{\modi,z}} \le 0 .}
\end{align}
	Hence, $ H(t) $ is monotonically decreasing and converges to a nonnegative constant. Thus, it follows from Barbalat's lemma that $ \rmd H(t) / \rmd t $ converges to $ 0 $ as $ t\rightarrow \infty $ due to its uniform continuity.
	The desired result follows from $ \FI{\rho_z}{\rho_{\modi,z}} \ge 0 $, the uniform positivity of $ D_{z,t} $, and \eqref{eq:dissipation}.
\end{proof}

By the above result, under a control input $ u = u_\FF + u_\FB $ with $ u_\FB $ satisfying \eqref{eq:fb_cond}, the distribution $ \rho $ converges to the surrogate target $ \rho_{\modi} $ in the sense that $ \FI{\rho_z}{\rho_{\modi,z}} \rightarrow 0 $.
Since the initial distribution $ \rho_{\modi,0} $ of $ \rho_\modi $ is allowed to be different from that of oscillators $ \rho_0 $, the precomputed periodic stationary distribution of $ \rho_\modi $ can be used as $ \rho_{\modi,0} $.

The condition \eqref{eq:fb_cond} is satisfied, for example, by the population-level feedback control
\begin{align}
  u_\FB (t) &= u_{\rho_{\modi}} [\rho(t,\cdot)] \nonumber\\
	&:= k\int_{\calS^1} \frac{\rho(t,\theta)}{\rho_{\modi}(t,\theta)} \partial_\theta[Z\rho_{\modi}(t,\theta)] \rmd \theta, \quad k > 0. \label{eq:sqinput}
\end{align}
Moreover, when considering a constraint on the amplitude of the input $ \lw(t) \le u(t) \le \up(t) $, the following control law also decreases $ \FI{\rho_z}{\rho_{\modi,z}} $ to zero because \eqref{eq:fb_cond} is a condition on the sign of $ u_\FB $\hi{:}
\begin{align}
	u(t) &= {\rm sat}_{\lw (t)}^{\up (t)} \left(u_\FF(t) + u_{\rho_{\modi}} [\rho(t,\cdot)]\right), \label{eq:proposed}\\
	{\rm sat}_{a}^{b} (u) &:=
	\begin{cases}
		a, & u < a, \\
		u, & a \le u \le b, \\
		b, & u > b ,
	\end{cases}
	\label{eq:saturation} \\
	\lw(t) &\le u_\FF (t) \le \up(t), \quad \forall t > 0 . \nonumber
\end{align}
In summary, we propose to use \eqref{eq:proposed} combining the feedforward input $ u_\FF $ and the feedback control $ u_{\FB} = u_{\rho_{\modi}} $.

\revv{Compared with the case without feedback control $u_{\rm FB}\equiv0$, the proposed feedback input adds an extra nonpositive term, which is the first term of \eqref{eq:dHdt}, to the instantaneous decrease rate of the KL divergence. The acceleration effect of this additional dissipation term is examined numerically in Section~\ref{sec:example}.}

\revv{
\begin{rmk}\label{rmk:sensitivity_rhoFF}
	Since the feedback input \(u_{\mathrm{FB}}\) in \eqref{eq:sqinput} contains \(\rho_{\mathrm{FF}}\) in the denominator, numerical errors in computing the surrogate target \(\rho_{\mathrm{FF}}\) may be amplified when \(\rho_{\mathrm{FF}}\) is small in a region where the oscillator density \(\rho\) is relatively large. The sensitivity of \(u_{\mathrm{FB}}\) with respect to \(\rho_{\mathrm{FF}}\) can be characterized as follows. By rewriting
\[
\frac{\partial_\theta[Z(\theta)\rho_{\mathrm{FF}}(t,\theta)]}{\rho_{\mathrm{FF}}(t,\theta)}
=
\partial_\theta Z(\theta)
+
Z(\theta)\partial_\theta\log\rho_{\mathrm{FF}}(t,\theta),
\]
we see that the relevant quantity is the logarithmic derivative \(\partial_\theta\log\rho_{\mathrm{FF}}\). To quantify the sensitivity, let $ \rho_\modi^{(1)} $ and $ \rho_\modi^{(2)} $ be two reference densities, and denote the corresponding feedback inputs by $ u_\FB^{(1)} $ and $ u_\FB^{(2)} $, respectively.
Then, we have
\begin{align*}
&\left| u_\FB^{(1)} (t) - u_\FB^{(2)} (t)      \right| \\
&= k \left| \int_{\calS^1} Z(\theta) \! \left(\partial_\theta \log \rho_\modi^{(1)} (t,\theta)- \partial_\theta \log \rho_\modi^{(2)} (t,\theta) \right) \rho(t,\theta) \rmd \theta \right| \\
&\le k \left\| Z \! \left(\partial_\theta \log \rho_\FF^{(1)} (t,\cdot) - \partial_\theta \log \rho_\FF^{(2)} (t,\cdot)\right) \right\|_\infty .
\end{align*}
This shows that the sensitivity of the feedback control to perturbations of $ \rho_{\FF} $ is governed by the variation of $ \partial_\theta \log \rho_\FF $ weighted by $ Z $.
\hfill $ \diamondsuit $
\end{rmk}
}

\subsection{Further Convergence Analysis}\label{subsec:convergence_fb}
\hi{As an extension of Proposition~\ref{prop:rfisher}, we next} present the convergence result for the proposed method in the sense of the KL divergence rather than the relative Fisher information and derive its consequences. 
\rev{Let $W_2(\rho_1,\rho_2)$ be the 2-Wasserstein distance between densities $ \rho_1, \rho_2 $ on $ \calS^1 $ endowed with a distance~\cite{Villani}.}
\revv{By assuming the non-degeneracy condition $ Z_w^2 > 0 $, we can relate the upper bound of \eqref{eq:dissipation} to the KL divergence $ H(t) $, and consequently, the convergence in the KL divergence is guaranteed as follows.}

\begin{theorem}\label{thm:proposed}
	Assume that $ Z_w^2 (\theta)> 0 $ and \revv{$ \rho_{\modi,0} (\theta) > 0 $} for any $ \theta \in \calS^1 $. 
	Let $ u_{\FF} \in C(\bbR) $ be $ 2\pi/\omega $-periodic.
  Also, assume that $\rho_{\modi}(t,\cdot)$ following \eqref{eq:rho1} is uniformly bounded in $t$.
	Then, under a control input $ u = u_\FF + u_\FB $ whose $ u_\FB $ satisfies \eqref{eq:fb_cond}, the following properties hold for any initial density $ \rho_0 $:
	\begin{enumerate}
		\item $\kl{\rho(t,\cdot)}{\rho_{\modi}(t,\cdot)}$ converges to $ 0 $ as $ t\rightarrow\infty $.
		\item $ \|\rho(t,\cdot) - \rho_\modi (t,\cdot) \|_1 $ converges to $ 0 $ as $ t\rightarrow\infty $.
		\item $ W_2(\rho(t,\cdot), \rho_{\modi}(t,\cdot)) $ converges to $ 0 $ as $ t\rightarrow\infty $.
	\end{enumerate}
		In addition, the above convergence rates are exponential.
    \hfill $ \diamondsuit $
\end{theorem}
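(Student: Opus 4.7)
The plan is to use \eqref{eq:dHdt} as a differential inequality for $H(t) := \kl{\rho(t,\cdot)}{\rho_{\modi}(t,\cdot)}$ and close it via a logarithmic Sobolev inequality to obtain exponential decay in $H$, from which (1), (2), and (3) all follow. Since the feedback condition~\eqref{eq:fb_cond} makes the bilinear term in~\eqref{eq:dHdt} nonpositive, we have $\dot H(t) \le -D_{z,t}\,\FI{\rho_z(t,\cdot)}{\rho_{\modi,z}(t,\cdot)}$. The first step is to rewrite this weighted relative Fisher information in terms of the ordinary one. Noting that $\rho_z/\rho_{\modi,z}$ equals $\rho/\rho_{\modi}$ up to a $\theta$-independent normalizing factor, one has $\partial_\theta\log(\rho_z/\rho_{\modi,z}) = \partial_\theta\log(\rho/\rho_{\modi})$; combined with the identity $D_{z,t}\,\rho_z = D\,\rho\,Z_w^2$, this yields
\[
D_{z,t}\,\FI{\rho_z}{\rho_{\modi,z}} \;=\; D\int_{\calS^1}\rho\,Z_w^2\,\bigl(\partial_\theta\log(\rho/\rho_{\modi})\bigr)^2 \rmd\theta \;\ge\; D\,c_0\,\FI{\rho}{\rho_{\modi}},
\]
where $c_0 := \min_{\theta\in\calS^1} Z_w^2(\theta) > 0$ by continuity of $Z_w$ and the hypothesis $Z_w^2>0$ on the compact circle $\calS^1$.

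The second step is to establish a \emph{uniform-in-$t$} logarithmic Sobolev inequality for the reference density $\rho_{\modi}(t,\cdot)$. Since $\rho_{\modi}$ is uniformly bounded from above and from below in $t$, writing $\rho_{\modi}(t,\theta) = \frac{1}{2\pi}\Ee^{-V_t(\theta)}$ produces a family $\{V_t\}_{t\ge 0}$ of potentials whose oscillation $\osc(V_t)$ is bounded by a constant independent of $t$. Because the uniform measure on $\calS^1$ satisfies an LSI, the Holley--Stroock perturbation principle then yields a constant $\lambda>0$, independent of $t$, such that
\[
2\lambda\,H(t) \;\le\; \FI{\rho(t,\cdot)}{\rho_{\modi}(t,\cdot)}, \quad \forall t\ge 0.
\]
Chaining the two estimates gives $\dot H(t) \le -2\lambda D c_0\,H(t)$, so Gr\"onwall's inequality forces $H(t)\le \Ee^{-2\lambda D c_0 t}H(0)$, proving (1) with an explicit exponential rate.

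For (2), Pinsker's inequality $\|\rho(t,\cdot)-\rho_{\modi}(t,\cdot)\|_1^2 \le 2H(t)$ immediately gives exponential decay of the $L^1$-distance at half the rate of $H$. For (3), I would apply the Otto--Villani theorem to lift the uniform LSI of step two to a uniform $T_2$ transportation-cost inequality for $\rho_{\modi}(t,\cdot)$, yielding $W_2^2(\rho(t,\cdot),\rho_{\modi}(t,\cdot))\le (2/\lambda)H(t)$; alternatively, since $\calS^1$ has bounded diameter one can combine $W_2^2\le \mathrm{diam}(\calS^1)\cdot W_1$ with $W_1 \le \tfrac{1}{2}\mathrm{diam}(\calS^1)\cdot\|\rho-\rho_{\modi}\|_1$ and invoke (2). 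The main obstacle I anticipate is precisely this uniform LSI constant: one must carefully exploit $\inf_{t,\theta}\rho_{\modi}>0$ and $\sup_{t,\theta}\rho_{\modi}<\infty$ to obtain a bound on $\osc(V_t)$ that is independent of $t$, so that Holley--Stroock applies uniformly. A secondary technicality is justifying~\eqref{eq:dHdt} rigorously along the solution (integration by parts and differentiation under the integral sign), which should be routine given the $C^2$-regularity of $Z,Z_w$ and the assumed bounds on $\rho_{\modi}$.
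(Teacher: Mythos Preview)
Your proposal is correct and follows essentially the same route as the paper: drop the feedback term via \eqref{eq:fb_cond}, rewrite $D_{z,t}\,\FI{\rho_z}{\rho_{\modi,z}}$ as $D\int \rho\,Z_w^2(\partial_\theta\log(\rho/\rho_{\modi}))^2\rmd\theta$ and bound below by $D\,(\min Z_w^2)\,\FI{\rho}{\rho_{\modi}}$, apply a Holley--Stroock perturbation of the uniform LSI on $\calS^1$ to get a time-uniform constant $\lambda=(m/M)^2$, conclude via Gr\"onwall, then invoke Csisz\'ar--Kullback--Pinsker and Otto--Villani/Talagrand for (2) and (3). Your alternative bounded-diameter argument for $W_2$ is an extra option the paper does not use.
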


\revv{
To clarify the role of the assumptions, we provide a proof sketch below; see~Appendix~\ref{app:proposed} for the complete proof.
First, we recall the functional inequality used below.
\begin{definition}\label{def:sobolev}
	A density function $ \rho_2 \in C^1(\calS^1) $ is said to satisfy the logarithmic Sobolev inequality ($ {\rm LSI}(\lambda) $) with $ \lambda > 0 $ if for any density $ \rho_1 \in C^1(\calS^1) $, it holds that
	\begin{equation}\label{eq:log_Sobolev}
		\FI{\rho_1}{\rho_2} \ge 2\lambda \kl{\rho_1}{\rho_2}  .
	\end{equation}
	\hfill $ \diamondsuit $
\end{definition}

If a density $ \rho_2 $ satisfies
\[
0<m\le \rho_2(\theta)\le M<\infty,
\quad \theta\in \calS^1,
\]
for some constants $ m $ and $ M $, then $ \rho_2 $ satisfies $ {\rm LSI} ((m/M)^2) $; see Lemma~\ref{lem:lsi_bounded} for details.

}

\revv{
	\emph{Proof sketch of Theorem~\ref{thm:proposed}.}
By the positivity of $ Z_w^2 $, there exists a constant $ c_1 > 0 $ such that $\FI{\rho_z}{\rho_{\modi,z}} \ge c_1 \FI{\rho}{\rho_\modi} $.
Then, it follows from \eqref{eq:dissipation} that
    \begin{align}\label{eq:H_F}
        \frac{\rmd H(t)}{\rmd t} \le - c_2  \FI{\rho}{\rho_{\modi}} 
    \end{align}
		for some $ c_2 > 0 $.
Moreover, the uniform positivity $ \rho_{\FF} $ is guaranteed under the initial positivity $ \rho_{\modi,0} > 0 $, the periodicity and continuity of $ u_{\FF} $, and the non-degeneracy of the diffusion coefficient~\cite[Corollary~3.1]{Bogachev2009positive}.
Combining this with the uniform boundedness of $ \rho_{\FF} $ implies that $ \rho_\modi $ satisfies the logarithmic Sobolev inequality~\eqref{eq:log_Sobolev}. Thus, it follows from \eqref{eq:H_F} that
\begin{equation}
\frac{\rmd H(t)}{\rmd t} \leq - c_3 H(t) 
\end{equation}
for some $ c_3 > 0 $. This means the exponentially fast convergence $ H(t) \le \Ee^{- c_3 t} H(0) $.

The exponential convergence in the $ L^1 $ and $ W_2 $ senses follows from this bound through the Csisz\'{a}r--Kullback--Pinsker inequality and Talagrand's inequality~\cite{Otto}:
\begin{align}
	\|\rho-\rho_{\modi}\|_1^2 &\leq 2H(t), \nonumber\\
	W_2^2(\rho(t,\cdot), \rho_{\modi}(t,\cdot)) &\leq c_4 H(t) ,  \ \ c_4 > 0.
\end{align}
\hfill $\blacksquare$
}

The exponential convergence rates of $ \rho $ depend on the noise intensity $ D $ and $ \lambda = (m/M)^2 $, where $ m $ and $ M $ are defined in Theorem~\ref{thm:bound_average_original}. For larger $ \lambda $, the convergence becomes faster. The maximum value of $ \lambda $ is attained when $ m = M $, that is, $ \rho_{\modi} $ is uniform. Therefore, $ \rho_{\modi} $ with smaller up-down swings results in the faster convergence of $ \rho $.

The assumption $ Z_w^2 > 0 $ in \hi{Theorem}~\ref{thm:proposed} can be removed by assuming instead that there exists $ \lambda > 0 $ such that $ \rho_{\modi,z} $ satisfies $ {\rm LSI}(\lambda) $ for any $ t > 0 $ and replacing $ \rho $ and $ \rho_\modi $ in the statements 1)--3) by $ \rho_z $ and $ \rho_{\modi,z} $, respectively.

\hi{Note that the condition \eqref{eq:fb_cond} for the feedback control $ u_{\FB} $ in \hi{Proposition}~\ref{prop:rfisher} and \hi{Theorem}~\ref{thm:proposed} is satisfied even for $ u_{\FB} \equiv 0 $, which corresponds to the case where only a periodic feedforward input $ u_\FF $ is present. Hence, $ \rho $ converges to $ \rho_{\modi} $ irrespective of the initial distributions $ \rho_{0} $ and $ \rho_{\modi,0} $ by using only the periodic feedforward input $ u = u_\FF $. This periodic control input can also be based on the previous work~\cite{Kato}, but convergence to the desired distribution has not been established there.
In addition to the convergence, the proposed controller \eqref{eq:proposed} with $ k > 0 $ improves the transient response of $ \rho $ compared to \cite{Kato}, as we will see in Section~\ref{sec:example}.}

\revv{
	\begin{rmk}\label{rmk:uncontrollable}
We emphasize the distinction between convergence to the
surrogate target $\rho_{\rm FF}$ and direct stabilization to the
original target $\rho_f$. For the surrogate target, if $ \int_{\calS^1} \frac{\rho(t,\theta)}{\rho_{\FF}(t,\theta)} \partial_\theta [Z(\theta)\rho_\FF(t,\theta)] \rmd \theta = 0 $, then by \eqref{eq:dHdt}, the feedback input cannot add extra
instantaneous dissipation to the KL Lyapunov functional at that time.
Nevertheless, the convergence of $ \rho $ to $ \rho_\FF $ still holds due to the nonpositive relative Fisher information, as established in Proposition~\ref{prop:rfisher} and
Theorem~\ref{thm:proposed}.

On the other hand, direct stabilization to the original target
$\rho_f$ is different. The target $\rho_f$ is not a solution to the
same Fokker--Planck equation that $ \rho $ satisfies, and a direct calculation yields
\begin{align}
    &\frac{\rmd }{\rmd t} \kl{\rho}{\rho_f} \nonumber\\
    &\le - \left( \int_{\calS^1} \frac{\rho}{\rho_f} \partial_\theta [Z \rho_f] \rmd \theta \right) u \nonumber\\
    &\quad- D_{z,t} \FI{\rho_z}{\rho_{f,z}}^{\frac{1}{2}} \nonumber\\
    &\quad \times \left[ \FI{\rho_z}{\rho_{f,z}}^{\frac{1}{2}} - \sqrt{2\pi} \|\rho_z\|_\infty \FI{\frac{1}{2\pi}}{\rho_{f,z}}^{\frac{1}{2}}  \right] , \label{eq:rho_rhof}
\end{align}
where $ \rho_{f,z} (t,\theta) := \frac{\rho_f(t,\theta) Z_w^2(\theta)}{\int \rho_f Z_w^2 \rmd \theta} $.
The detailed derivation is given in Appendix~\ref{app:rho_rhof}. 
Unlike the case of $\rho_{\rm FF}$, the second term is not
sign-definite in general. Therefore, if $\int_{\calS^1} \frac{\rho}{\rho_f} \partial_\theta [Z \rho_f] \rmd \theta=0$, the control input cannot contribute to the instantaneous descent of the KL divergence towards
$\rho_f$, while the second term~\eqref{eq:rho_rhof} does not necessarily decrease the
KL divergence. This observation is one reason for introducing the
surrogate target $\rho_{\rm FF}$, for which convergence can be
guaranteed by the contraction property of the Fokker--Planck equation.
\hfill $ \diamondsuit $
	\end{rmk}
}

\subsection{Feedback Control under Measurement Errors}\label{subsec:measurement}
We have seen that the proposed feedback controller can steer the distribution of oscillators to the surrogate target $ \rho_\modi $ with the assumption that the exact measurement of the distribution $ \rho $ is possible.
However, in practice, \hi{measurement} errors are inevitable and neglecting them may prevent the convergence of $ \rho $ to $ \rho_\modi $.
In this subsection, we modify the feedback controller \eqref{eq:sqinput} so that the convergence is guaranteed even under measurement errors.

\hi{Specifically}, we consider the situation where, in the presence of measurement errors, an estimate $ \what{\rho} $ is used instead of the true distribution $ \rho $.
By replacing $ \rho $ by $ \what{\rho} $ in \eqref{eq:sqinput}, we modify the controller as follows:
\begin{align}
	\what{u}_\FB (t) := k \int_{\calS^1} \frac{\what{\rho} (t,\theta)}{\rho_{\modi} (t,\theta)} \partial_\theta [Z(\theta)\rho_{\modi} (t,\theta)] \rmd \theta . \label{eq:feedback_estimate}
\end{align}
By \eqref{eq:dHdt}, under $ u(t) = u_\FF  (t) + \what{u}_\FB (t) $, it holds that
\begin{align}
	\frac{\rmd \kl{\rho}{\rho_{\modi}}}{\rmd t} &= - k \left( \int \frac{\rho}{\rho_{\modi}} \partial_\theta [Z\rho_{\modi}] \rmd \theta \right) \nonumber\\
	&\quad \times \left(\int  \frac{\what{\rho}}{\rho_{\modi}}\partial_\theta[Z\rho_{\modi}] \rmd \theta \right) \nonumber\\
    &\quad - D_{z,t} \FI{\rho_z}{\rho_{\modi,z}} . \label{eq:KL_FF2}
\end{align}
When $ \what{\rho} \neq \rho $, the first term induced by control can be positive, unlike the case $ \rho = \what{\rho} $.
In order to guarantee that the feedback control~\eqref{eq:feedback_estimate} always speeds up the convergence of $ \rho $, when the first term of \eqref{eq:KL_FF2} is ensured to be nonpositive, we inject $ u(t) = u_\FF (t) + \what{u}_\FB (t) $ into the system~\eqref{eq:rho}, and otherwise, we only apply the feedforward input $ u (t) = u_\FF (t) $.

\hi{In the following proposition, we show that under} the assumption that the error between $ \rho $ and $ \what{\rho} $ is bounded as $  \| \rho - \what{\rho} \|_2 \le e $, a sufficient condition for the first term of \eqref{eq:KL_FF2} to be nonpositive is given by
\begin{align}
    \frac{|\what{u}_\FB(t)|}{k} &= \left|\int \frac{\what{\rho}(t,\theta)}{\rho_{\modi} (t,\theta)} \partial_\theta [Z(\theta)\rho_{\modi} (t,\theta)] \rmd \theta \right| \nonumber\\
    &\ge e \Biggl[ 
        \int_{\calS^1} \left( 
        \frac{\partial_\theta [Z(\theta)\rho_{\modi}(t,\theta)]}{\rho_{\modi}(t,\theta)}  
        \right)^2 \rmd \theta  
        \Biggr]^{1/2} . \label{eq:decrease_cond}
\end{align}
The proof is deferred to Appendix~\ref{app:measurement_err}.
\begin{proposition}\label{prop:measurement_err}
    Assume that $ \| \rho (t,\cdot) - \what{\rho} (t,\cdot) \|_2 \le e  $ for any $ t \ge 0 $.
    Let a control input be given by $ u = u_\FF + u_\FB $ whose $ u_{\FB} $ satisfies for any $ t \ge 0 $,
    \begin{align}
        u_\FB(t) =
        \begin{cases}
    \what{u}_\FB (t), & \text{if \eqref{eq:decrease_cond} holds} , \\
    0, & \text{otherwise}.
\end{cases} \label{eq:proposed_err}
    \end{align} 
    Then, under the assumptions of Proposition~\ref{prop:rfisher}, $ \FI{\rho_z(t,\cdot)}{\rho_{\modi,z}(t,\cdot)} $ converges to $ 0 $ as $ t \rightarrow \infty $ for any initial densities $ \rho_0 $, $ \rho_{\modi,0} $ and any $ u_{\FF} $. On the other hand, under the assumptions in \hi{Theorem}~\ref{thm:proposed}, \hi{the statements} 1)--3) of \hi{Theorem}~\ref{thm:proposed} hold.
    \hfill $ \diamondsuit $
\end{proposition}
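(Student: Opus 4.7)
The plan is to reuse the KL-divergence Lyapunov strategy that underpins \hi{Proposition}~\ref{prop:rfisher} and \hi{Theorem}~\ref{thm:proposed}. Concretely, I would show that, despite the controller only having access to the estimate $\what\rho$, the rule \eqref{eq:proposed_err} still delivers the pivotal dissipation inequality
\[
    \frac{\rmd H(t)}{\rmd t} \le - D_{z,t}\,\FI{\rho_z(t,\cdot)}{\rho_{\modi,z}(t,\cdot)}, \quad H(t):=\kl{\rho(t,\cdot)}{\rho_{\modi}(t,\cdot)}.
\]
Once this is secured, the relative Fisher information conclusion drops out by the monotonicity argument of \hi{Proposition}~\ref{prop:rfisher}, and the three convergence statements of \hi{Theorem}~\ref{thm:proposed} follow by reproducing its logarithmic-Sobolev-based exponential decay argument essentially verbatim.

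The heart of the plan is a sign-preservation analysis of the control-induced term in \eqref{eq:KL_FF2}. Set
\[
 I_\rho := \int \frac{\rho}{\rho_{\modi}}\partial_\theta[Z\rho_{\modi}]\rmd\theta, \quad I_{\what\rho} := \int \frac{\what\rho}{\rho_{\modi}}\partial_\theta[Z\rho_{\modi}]\rmd\theta .
\]
A single Cauchy--Schwarz application to $I_\rho - I_{\what\rho} = \int ((\rho-\what\rho)/\rho_{\modi})\,\partial_\theta[Z\rho_{\modi}]\rmd\theta$, combined with $\|\rho-\what\rho\|_2 \le e$, yields $|I_\rho - I_{\what\rho}| \le e M$, where $M$ is precisely the $L^2$-norm of $\partial_\theta[Z\rho_{\modi}]/\rho_{\modi}$ that appears on the right-hand side of \eqref{eq:decrease_cond}. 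The switching threshold \eqref{eq:decrease_cond} is calibrated so that whenever $|I_{\what\rho}| \ge eM$, the perturbation cannot reverse the sign of $I_\rho$ relative to $I_{\what\rho}$; in that regime choosing $u_\FB = \what u_\FB$ gives $-k\, I_\rho I_{\what\rho} \le 0$, whereas otherwise $u_\FB = 0$ eliminates the term outright. In either branch the first summand of \eqref{eq:KL_FF2} is nonpositive, which establishes the displayed dissipation inequality.

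From there the remaining steps are inherited. Monotonicity of $H$ together with $D_{z,t} > 0$, which holds because $Z_w$ is nonzero almost everywhere (a fortiori under the $Z_w^2>0$ hypothesis of \hi{Theorem}~\ref{thm:proposed}), yields $\FI{\rho_z}{\rho_{\modi,z}} \to 0$ exactly as in \hi{Proposition}~\ref{prop:rfisher}. Under the additional hypotheses of \hi{Theorem}~\ref{thm:proposed}, the same inequality feeds into the log-Sobolev argument used in its proof, producing exponential decay of $H$ and hence of $\|\rho-\rho_{\modi}\|_1$ and $W_2(\rho,\rho_{\modi})$ via Pinsker's and Talagrand's inequalities, respectively. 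The main obstacle I expect is the sign-preservation step: one must verify that \eqref{eq:decrease_cond} uses exactly the right threshold. The argument works only because the measurement error enters linearly and the kernel $\partial_\theta[Z\rho_{\modi}]/\rho_{\modi}$ appearing in the Cauchy--Schwarz bound is the very function whose $L^2$-norm is compared against $|I_{\what\rho}|$ in \eqref{eq:decrease_cond}; this alignment leaves no slack and is what makes the worst-case bound sharp enough to preserve the sign of $I_\rho I_{\what\rho}$.
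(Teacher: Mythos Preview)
Your proposal is correct and follows essentially the same route as the paper: bound $|I_\rho - I_{\what\rho}|$ by Cauchy--Schwarz and $\|\rho-\what\rho\|_2\le e$, observe that the threshold in \eqref{eq:decrease_cond} is exactly $eM$ so that $|I_{\what\rho}|\ge eM$ forces $I_\rho I_{\what\rho}\ge 0$, and then defer to the proofs of \hi{Proposition}~\ref{prop:rfisher} and \hi{Theorem}~\ref{thm:proposed}. The paper's Appendix~\ref{app:measurement_err} does precisely this, with the same decomposition $I_\rho = (I_\rho - I_{\what\rho}) + I_{\what\rho}$ and the same Cauchy--Schwarz bound.
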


To give the intuition behind the condition \eqref{eq:decrease_cond}, we bound $ \what{u}_\FB $ as follows:
\begin{align}
	\left|\what{u}_\FB (t) \right| &= k \left| \int \frac{\what{\rho} - \rho_{\modi}}{\rho_{\modi}} \partial_\theta(Z\rho_{\modi}) \rmd \theta \right| \nonumber\\
	&\le k \left[ \int \left( \frac{\partial_\theta (Z\rho_{\modi})}{\rho_{\modi}}  \right)^2 \rmd \theta  \right]^{1/2} \| \what{\rho} - \rho_{\modi} \|_{2} . \nonumber
\end{align}
Hence, if \eqref{eq:decrease_cond} holds, we have $ \| \what{\rho} - \rho_{\modi} \|_2 \ge e $.
This implies that the closer the estimate $ \what{\rho} $ is to $ \rho_{\modi} $, the smaller the $ L^2 $~distance between the true distribution $ \rho $ and its estimate $ \what{\rho} $ must be for \eqref{eq:decrease_cond} to be satisfied.
This can be understood as indicating that more accurate measurements are required as $ \rho $ gets closer to the target distribution so that the feedback controller contributes to speeding up the convergence of $ \rho $ to $ \rho_{\modi} $.

\section{Numerical Example}\label{sec:example}
In this section, we illustrate the effectiveness of the proposed method via a numerical example.
We consider the \rev{FitzHugh--Nagumo} model used to describe the action potential of a neuron~\cite{FitzHugh,Nagumo,Kato}:
\begin{align*}
	&\rmd x = (x - ax^3 - y + u)\rmd t + \sqrt{2D} \rmd W, \\
	&\rmd y = \eta(x + b) \rmd t ,
\end{align*}
where we set $ a = 1/3 $, $b = 1/4$, $ \eta = 1/4 $, and $ D = 0.007 $.
Then, by using the phase reduction method (see\hi{,} e.g., \cite[Subsection~3.4]{Nakao2016}), we obtain $ \omega = 0.4034 $ and the phase sensitivity function $ Z = Z_w $ as shown in Fig.~\ref{fig:Z_wcauchy}\subref{fig:FitzZ}. 
As a desired shape of the distribution of oscillators, we employ a wrapped Cauchy distribution on $ \calS^1 $~\cite{Mardia}:
\begin{align}
  w(\theta, \mu, \gamma) := \frac{1}{2\pi} \frac{\mathrm{sinh}(\gamma)}{\cosh(\gamma) - \cos(\theta-\mu)} ,\label{eq:wcauchy}
\end{align}
where $ \mu \in \calS^1 $ is a location parameter, and $ \gamma > 0 $ is a scale parameter.
Fig.~\ref{fig:Z_wcauchy}\subref{fig:wcauchy} illustrates \rev{two cases} $ w(3\theta,0,1) $ and $ w(\theta,\pi,0.5) $. The desired shape is set to $ \rho_{f,0}(\theta) = w(3\theta,0,1) $, which divides oscillators into three clusters, and the initial density of oscillators is set to $ \rho_{0}(\theta) = w(\theta,\pi,0.5) $.
All numerical simulations were conducted on a MacBook Pro equipped with an Apple M3 Max chip. The simulations were implemented in MATLAB R2024a, utilizing the CVX package for convex optimization~\cite{cvx,gb08}.

\begin{figure}[t]
	\begin{minipage}[b]{0.5\linewidth}
		\centering
		\includegraphics[keepaspectratio, width=\linewidth]
		{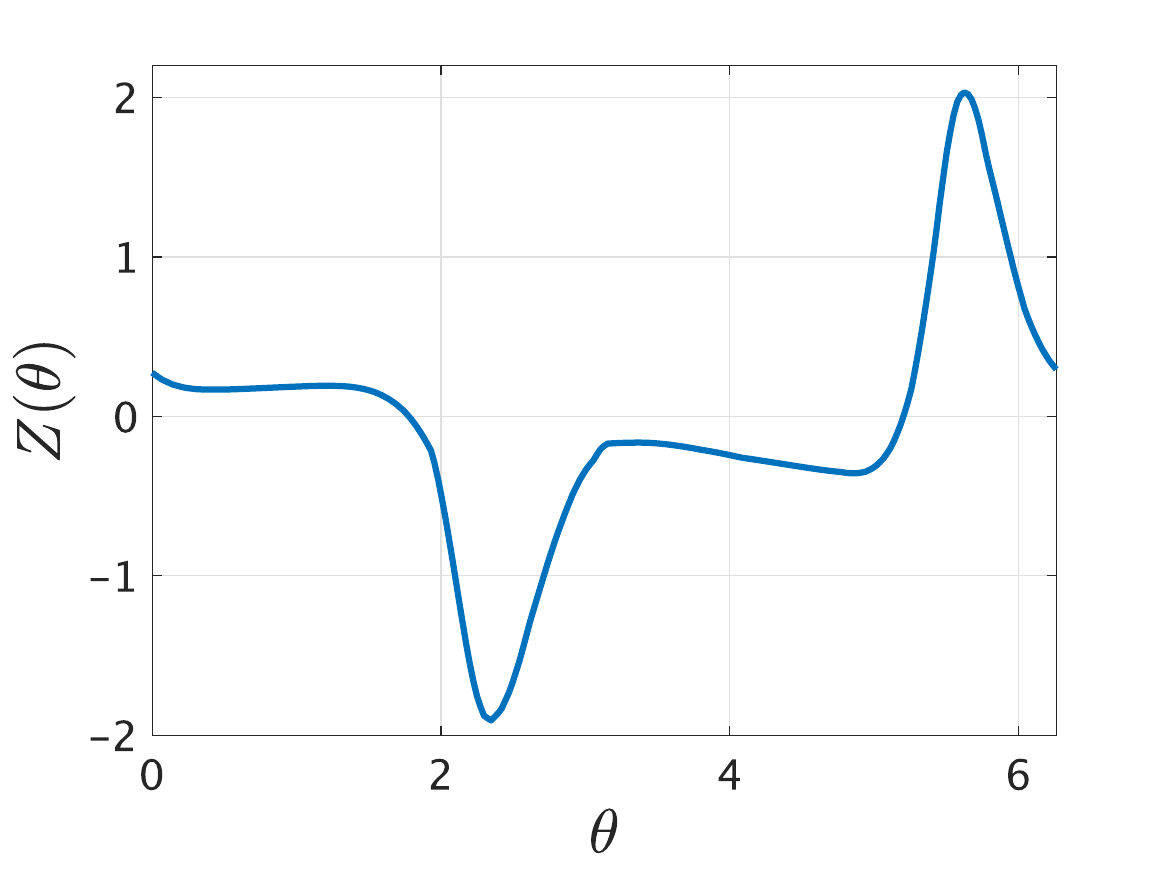}
		\subcaption{}\label{fig:FitzZ}
	\end{minipage}
	\begin{minipage}[b]{0.49\linewidth}
		\centering
		\includegraphics[keepaspectratio, width=\linewidth]
		{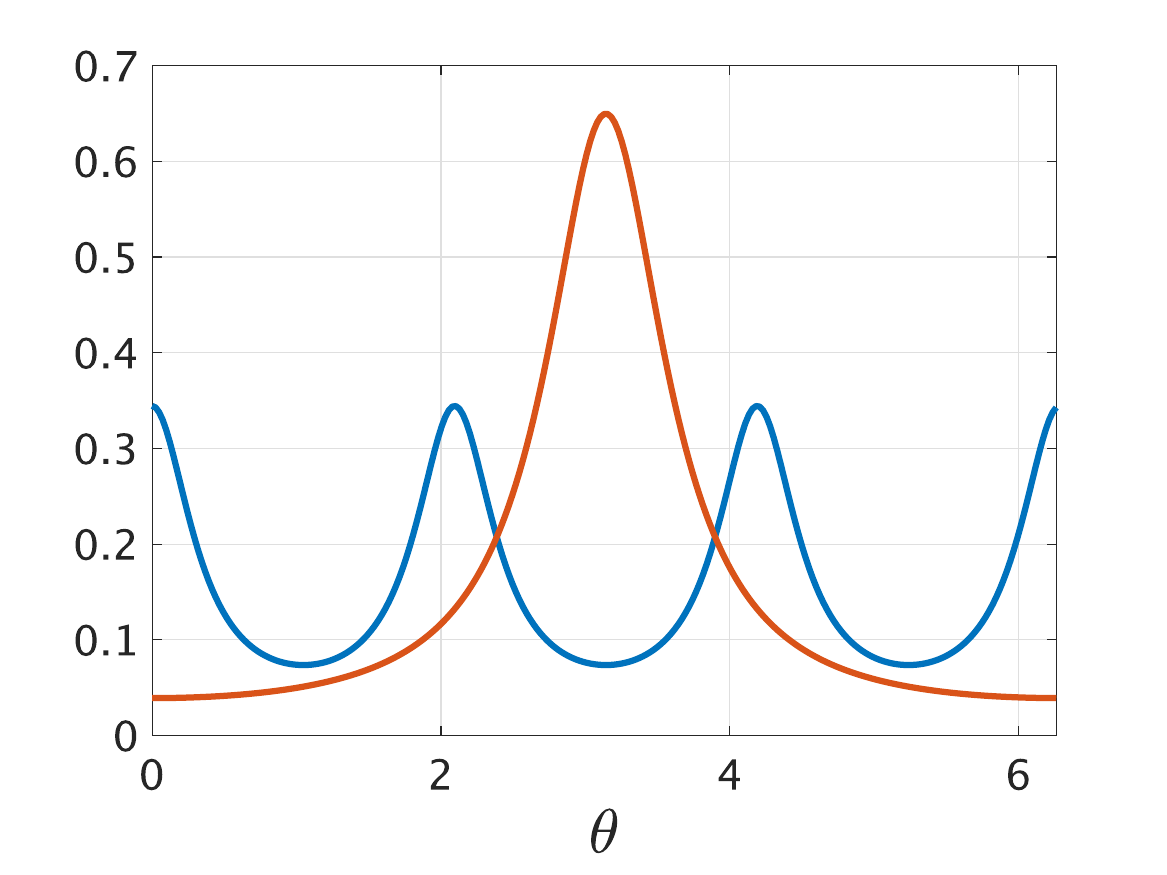}
		\subcaption{}\label{fig:wcauchy}
    \end{minipage}
	\caption{\subref{fig:FitzZ} Phase sensitivity function $Z(\theta)$ for the FitzHugh--Nagumo model. \subref{fig:wcauchy} Wrapped Cauchy distributions $ w(3\theta,0,1) $ (blue), $ w(\theta,\pi,0.5) $ (red).}\label{fig:Z_wcauchy}
\end{figure}

\subsection{\hi{Periodic Input Design Method}}
First, we design a periodic input $ u_\FF $ by solving the optimization problem~\eqref{prob:opt_fourier} with $ E = 0.02 $.
We truncate the Fourier coefficients and optimize $ \{ v_k\}_{k=-20}^{20} $.
Figs.~\ref{fig:u_ff}\subref{fig:u_ff_sub},~\subref{fig:u_fourier} illustrate the optimal inputs $ u_\FF $ and their Fourier coefficients $ \{v_k\} $ that solve \eqref{prob:opt_fourier} without regularization ($ \lambda = 0 $) and with $ \lambda = 10^{-5} $ and $ R(v) = |{\rm Re}(v)| + |{\rm Im}(v)| $, respectively.
As can be seen in Figs.~\ref{fig:u_ff}\subref{fig:u_fourier}, \subref{fig:l2_l1_compare}, the $ \ell^1 $-regularization $ R $ promotes the sparsity of the Fourier coefficients without significantly degrading the approximation accuracy of the target $ \rho_{f,0} $. 
\hi{Fig.~\ref{fig:snapshot_target} shows that the shape of the periodic stationary distribution $ \rho_{\modi} $ obtained by solving \eqref{prob:opt_fourier} with $ E = 0.02 $ and $ \lambda = 0 $ is close to the target $ \rho_{f,0} $.}

\hi{For} comparison with the existing method \cite{Kato} for determining $ u_{\FF} $, we also solve \eqref{prob:fok-opt}, which is shown in Fig.~\ref{fig:u_ff}\subref{fig:u_ff_sub}. We can observe that $ u_\FF $ obtained by \hi{this} method is similar to the one calculated by our method without regularization. 
For solving \eqref{prob:fok-opt}, we reduced it to a finite-dimensional problem by discretizing $ u_\FF $ in time and used MATLAB’s {\tt fmincon} to solve the nonconvex optimization problem~\eqref{prob:fok-opt}, which took 81.6 seconds. In contrast, the computation time for our method is only 2.1 seconds, showing its computational efficiency.

\begin{figure}[tb]
	\begin{minipage}[b]{0.5\linewidth}
		\centering
		\includegraphics[keepaspectratio, scale=0.24]
		{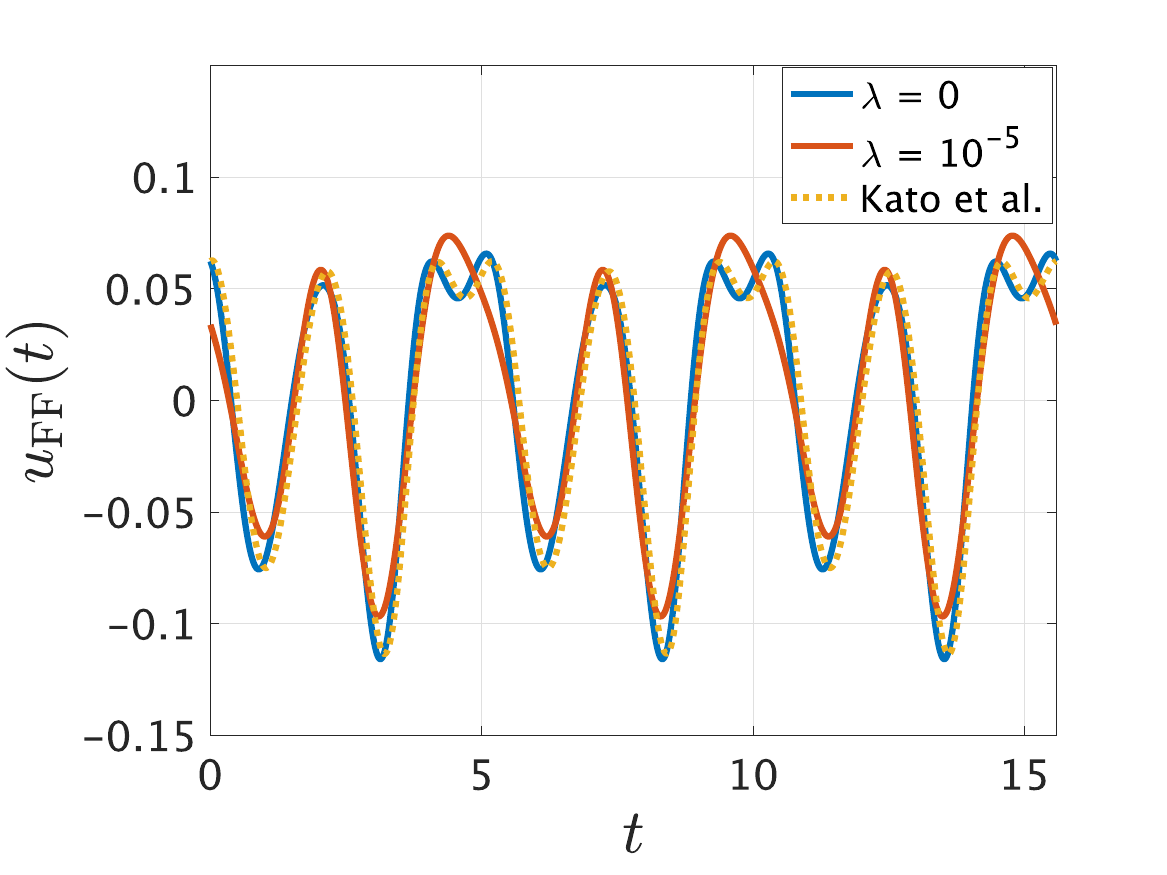}
		\subcaption{}\label{fig:u_ff_sub}
	\end{minipage}
	\begin{minipage}[b]{0.49\linewidth}
		\centering
		\includegraphics[keepaspectratio, scale=0.24]
		{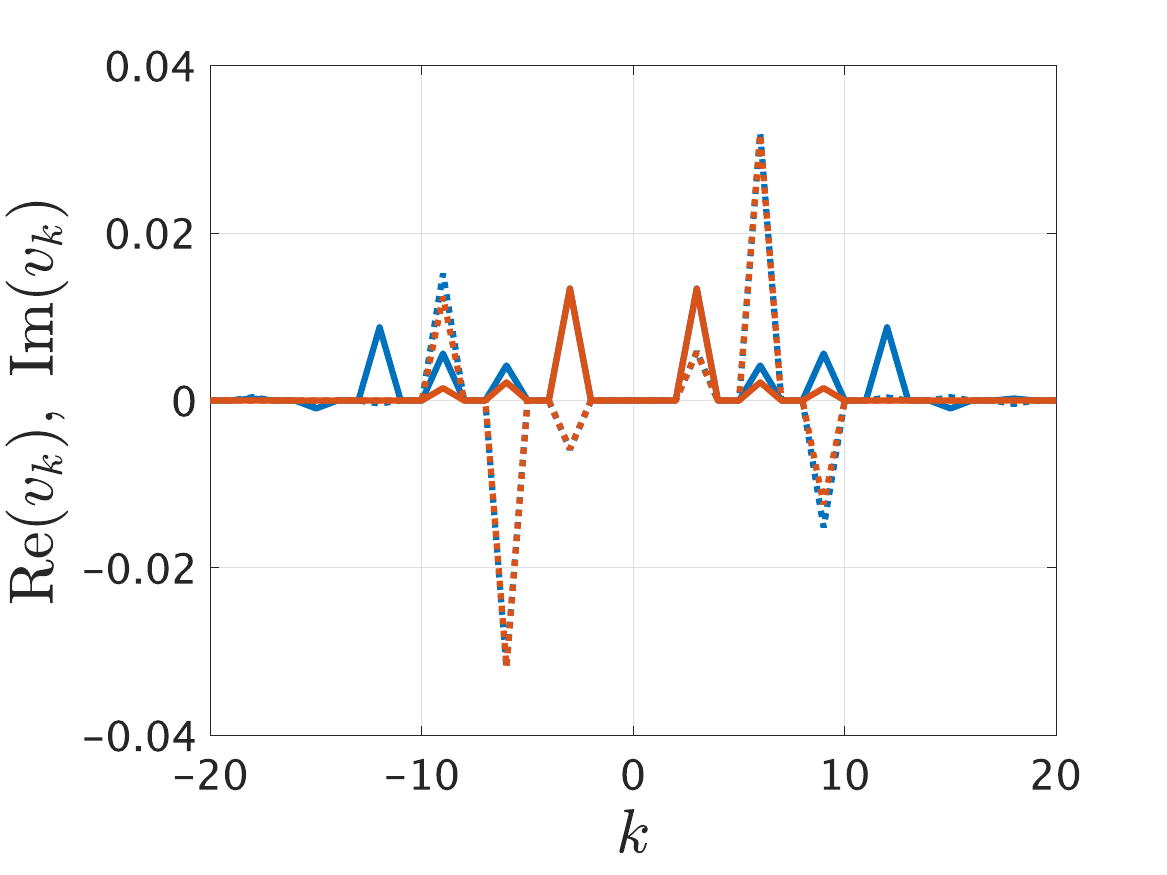}
		\subcaption{}\label{fig:u_fourier}
	\end{minipage}
    \begin{minipage}[b]{1.0\linewidth}
		\centering
		\includegraphics[keepaspectratio, scale=0.33]
		{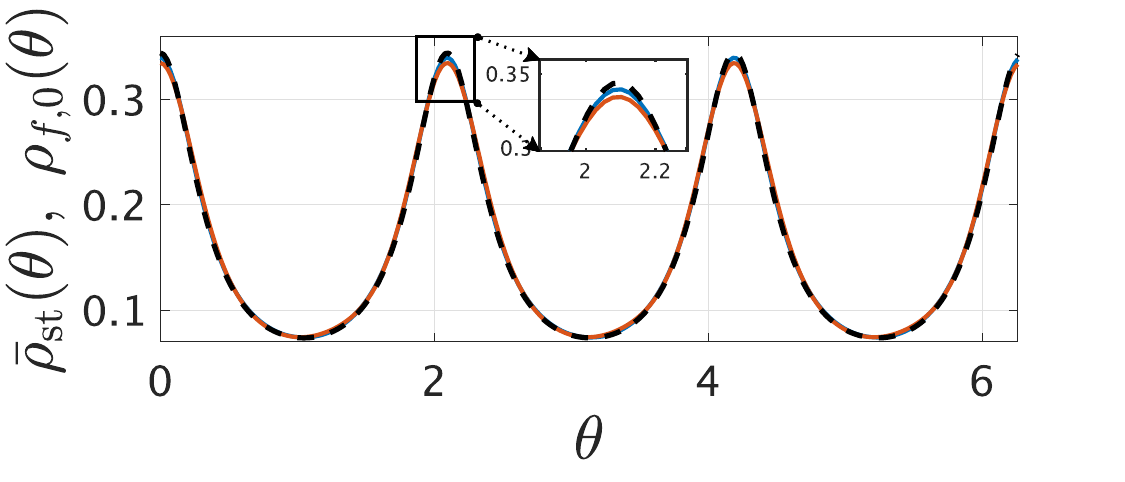}
		\subcaption{}\label{fig:l2_l1_compare}
	\end{minipage}
	\caption{\subref{fig:u_ff_sub} Periodic inputs $ u_\FF $ obtained by solving \eqref{prob:opt_fourier} with $ \hi{r} = 2 $, $ R(v) = |{\rm Re}(v)| + |{\rm Im}(v)| $, and $ \lambda = 0 $ (blue) and $ \lambda = 10^{-5} $ (red) and by solving \eqref{prob:fok-opt} (yellow, dotted). \subref{fig:u_fourier} Fourier coefficients $ v_k $ of $ u_\FF $. Blue (resp. red) solid and dotted lines show $ {\rm Re}(v_k) $ and $ {\rm Im } (v_k) $, respectively, obtained by \eqref{prob:opt_fourier} with $ \lambda = 0 $ (resp. $ \lambda = 10^{-5} $). \subref{fig:l2_l1_compare} Target density $ \rho_{f,0} $ (black, dashed) and the stationary densities $ \bar{\rho}_{\rm st} $ under averaging for $ \lambda = 0 $ (blue) and $ \lambda = 10^{-5} $ (red).}\label{fig:u_ff}
\end{figure}

\begin{figure}[tb]
  \centering
  \includegraphics[scale=0.35]{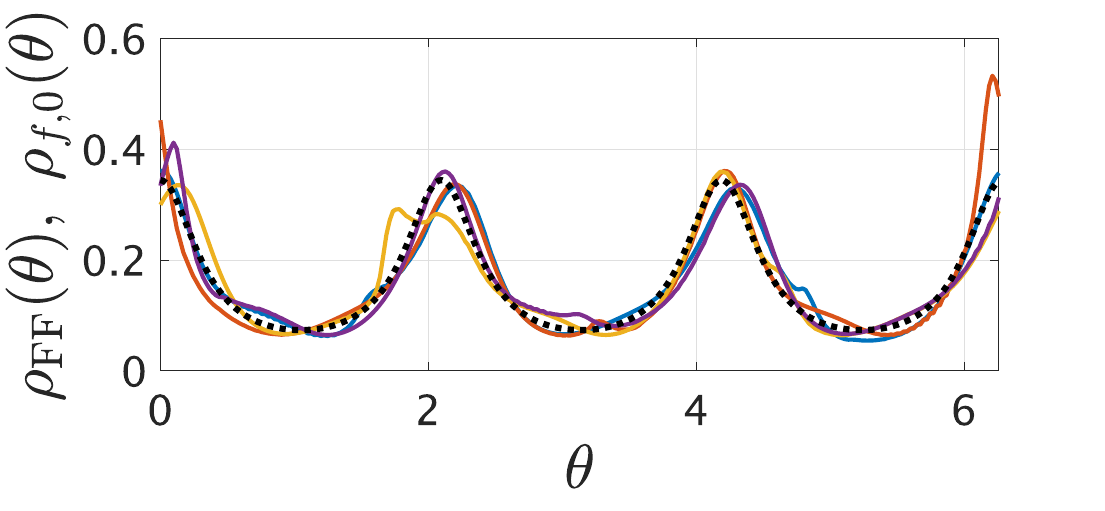}
  \caption{Target shape $ \rho_{f,0}(\theta) = w(3\theta,0,1) $ (dotted) and snapshots of the surrogate target $ \rho_{\modi}(t,\theta) $ at every $ 1/4 $ period (solid) under the coordinate transformation $\theta\mapsto\theta-\omega t $.}
  \label{fig:snapshot_target}
\end{figure}

Fig.~\ref{fig:rho_st_compare_E} compares the target distribution $ \rho_{f,0} $ with the stationary distributions $ \bar{\rho}_{\rm st} $ under averaging obtained by solving \eqref{prob:opt_fourier} with $ \lambda = 0 $ and different energy bounds $ E $.
With sufficient energy, $ \bar{\rho}_{\rm st} $ can accurately approximate the target density. As the energy constraint becomes more stringent, the approximation accuracy deteriorates. Nevertheless, the resulting stationary distribution maintains a structure with three clusters.

\begin{figure}[tb]
  \centering
  \includegraphics[scale=0.35]{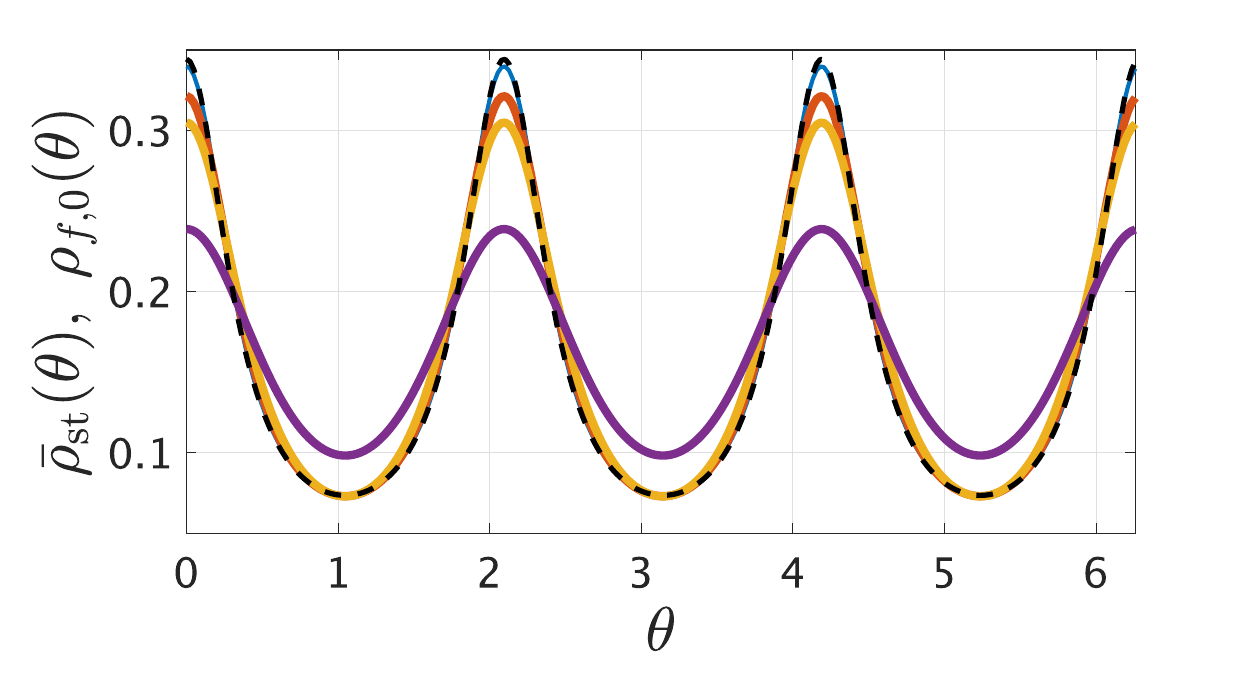}
  \caption{Target distribution $ \rho_{f,0} $ (black, dashed) and the stationary distributions $ \bar{\rho}_{\rm st} $ under averaging obtained by solving \eqref{prob:opt_fourier} with different energy bounds $ E = 0.02 $ (blue), $  0.01 $ (red), $ 0.005 $ (yellow), $  0.001 $ (purple), respectively.}
  \label{fig:rho_st_compare_E}
\end{figure}

\subsection{\hi{Feedback Controller}}
Next, we apply the proposed feedback control law~\eqref{eq:proposed} to the distribution of oscillators.
In what follows, we use $ u_\FF $ obtained by \eqref{prob:opt_fourier} without regularization and the input constraint is set to $ \up(t) \equiv 0.2 $, $ \lw(t) \equiv -0.2 $.
Control of $ \rho $ begins after the surrogate target $ \rho_{\modi} $ has converged to a periodic stationary distribution under the periodic input $ u_\FF $.

\hi{The resulting snapshots of $ \rho $ are illustrated in Fig.~\ref{fig:snapshot}. We see that the distribution of oscillators is steered from $ \rho_0(\theta) = w(\theta,\pi,0.5) $ towards the target $ \rho_f = w(3\theta, 0, 1) $ and is sufficiently close to it at $ t = 500 $.
This is also demonstrated in Fig.~\ref{fig:original_deviation}, where we plot the KL divergence and the $ L^2 $~distance between $ \rho $ and the original target $ \rho_f $.}
\hi{Observe that larger gain $ k $ leads to faster convergence of $ \rho $. 
As mentioned in Subsection~\ref{subsec:convergence_fb}, our method with $ k = 0 $ in Fig.~\ref{fig:original_deviation}\subref{fig:KL_k1}, which corresponds to using only a periodic feedforward input $ u_\FF $, is also guaranteed to steer $ \rho $ to $ \rho_{\modi} $ exactly.}

\begin{figure}[t]
	\begin{minipage}[b]{0.5\linewidth}
		\centering
		\includegraphics[keepaspectratio, scale=0.25]
		{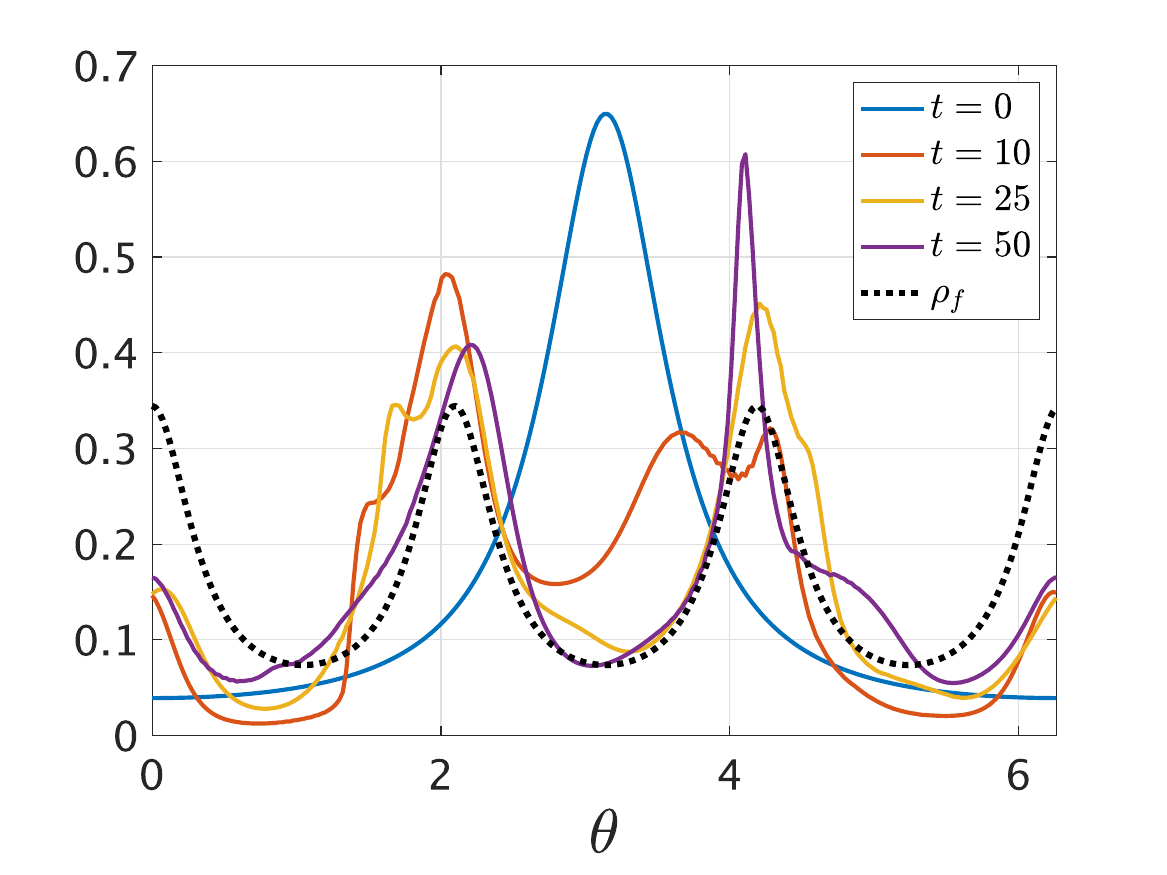}
		\subcaption{}\label{fig:snapshot1}
	\end{minipage}
	\begin{minipage}[b]{0.49\linewidth}
		\centering
		\includegraphics[keepaspectratio, scale=0.25]
		{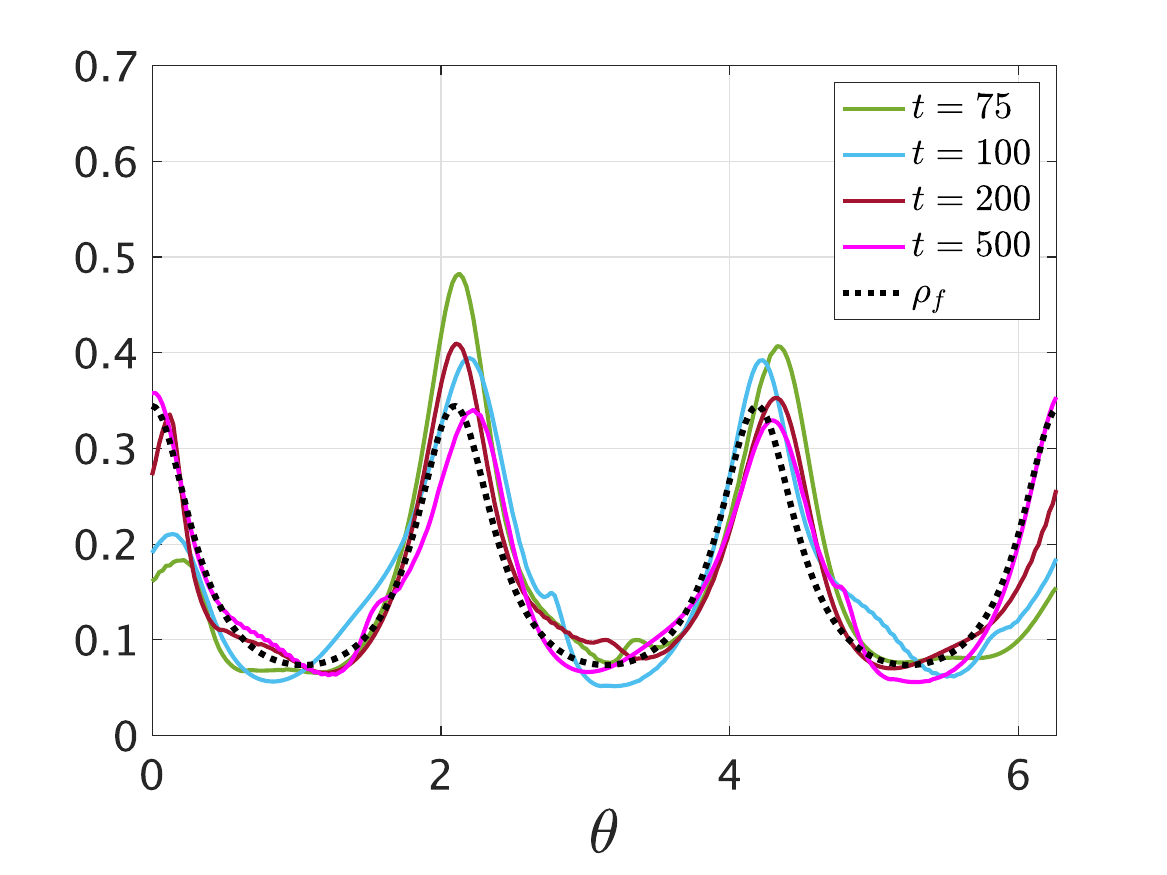}
		\subcaption{}\label{fig:snapshot2}
    \end{minipage}
	\caption{Snapshots of the distribution of oscillators $ \rho $ (solid) and the target $ \rho_f $ (dotted) under the proposed control method and the coordinate transformation $ \theta \rightarrow \theta - \omega t $.}\label{fig:snapshot}
\end{figure}

\begin{figure}[tb]
	\begin{minipage}[b]{0.5\linewidth}
		\centering
		\includegraphics[keepaspectratio, scale=0.24]
		{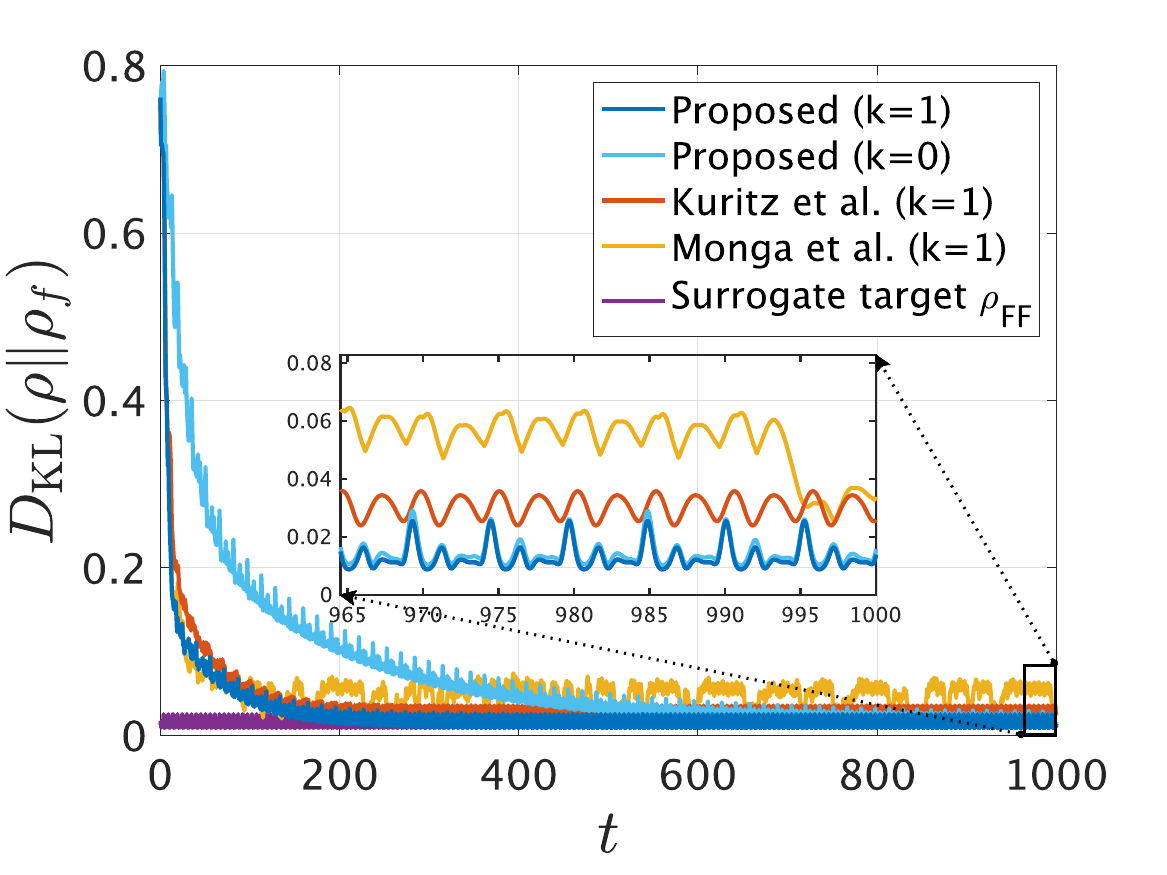}
		\subcaption{}\label{fig:KL_k1}
	\end{minipage}
	\begin{minipage}[b]{0.49\linewidth}
		\centering
		\includegraphics[keepaspectratio, scale=0.24]
		{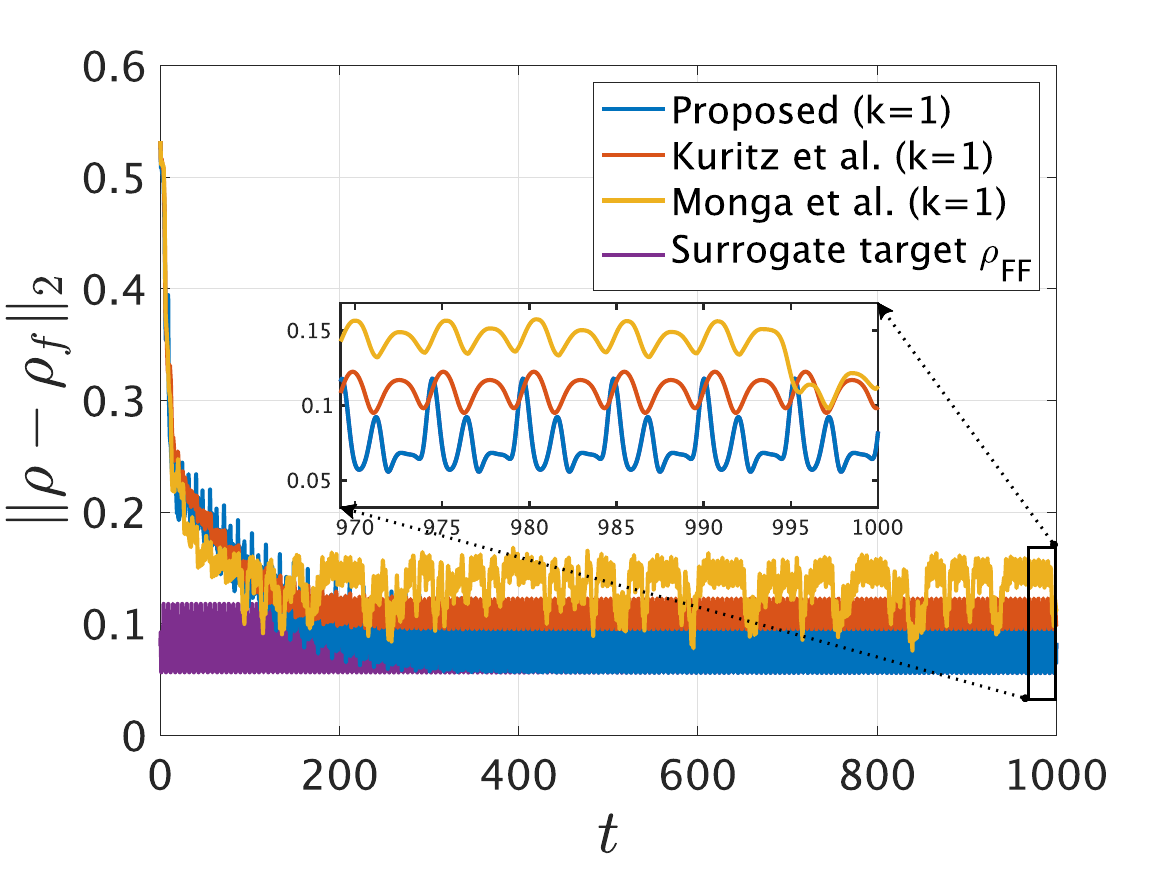}
		\subcaption{}\label{fig:L2_k1}
	\end{minipage}
    	\begin{minipage}[b]{0.5\linewidth}
		\centering
		\includegraphics[keepaspectratio, scale=0.24]
		{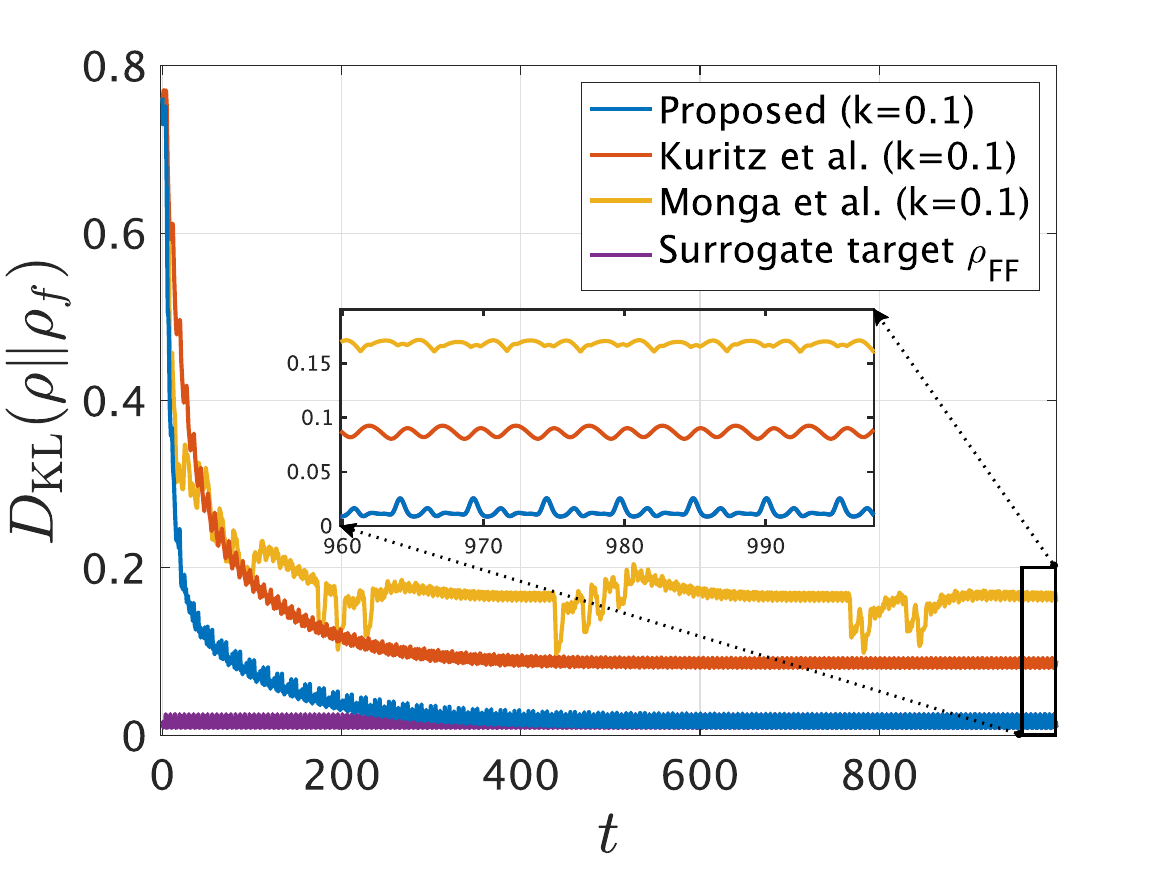}
		\subcaption{}\label{fig:KL_k01}
	\end{minipage}
	\begin{minipage}[b]{0.49\linewidth}
		\centering
		\includegraphics[keepaspectratio, scale=0.24]
		{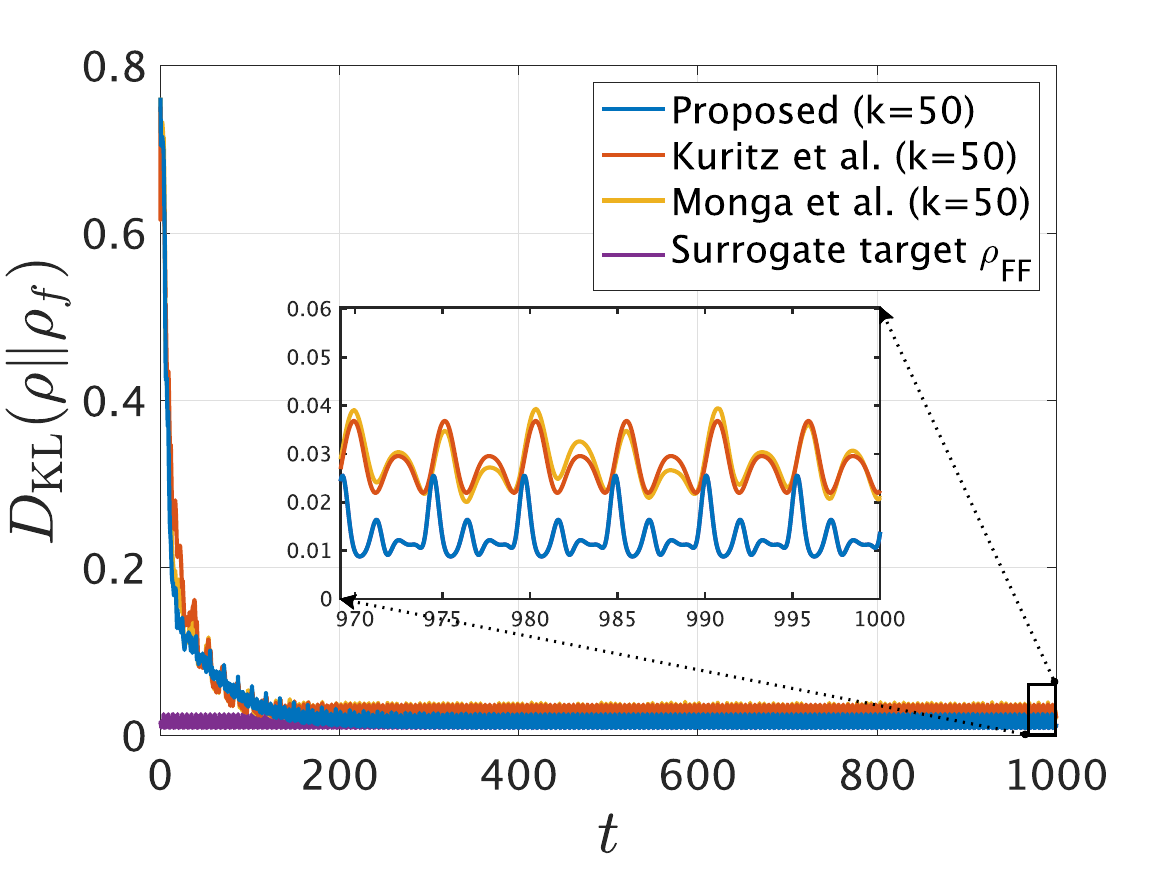}
		\subcaption{}\label{fig:KL_k10}
	\end{minipage}
	\\
	\\
	\begin{minipage}[b]{1\linewidth}
		\centering
		\includegraphics[keepaspectratio, scale=0.35]
		{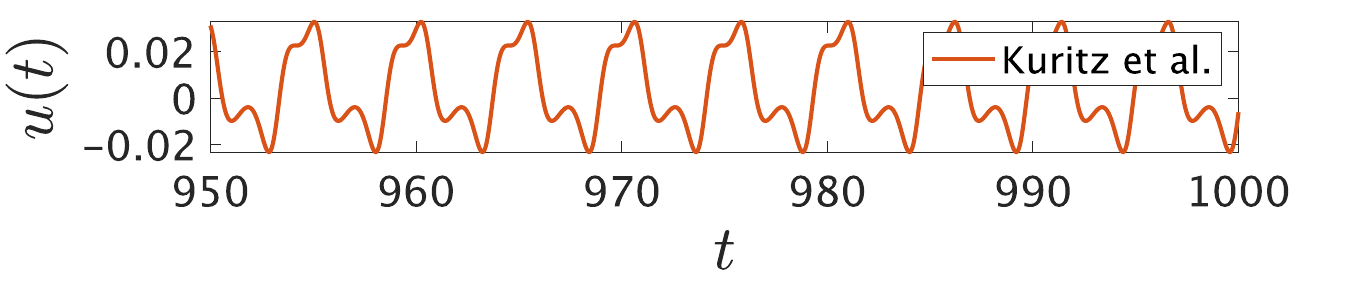}
		\subcaption{}\label{fig:preivous_input}
	\end{minipage}
	\caption{Deviation of the distribution of oscillators $ \rho $ and the surrogate target $ \rho_{\modi} $ from the original target $ \rho_f $ in the KL divergence \subref{fig:KL_k1}, \subref{fig:KL_k01}, \subref{fig:KL_k10} and the $ L^2 $~norm \subref{fig:L2_k1} under the proposed (blue, \eqref{eq:proposed}) and existing methods in \cite{Kuritz} (red, \eqref{eq:previous}) and \cite{Monga} (yellow, \eqref{eq:proposed_cancel}). In \subref{fig:KL_k1} and \subref{fig:L2_k1}, the gain is set to $ k = 1, 0 $ and $ k = 1 $, respectively, and in \subref{fig:KL_k01} and \subref{fig:KL_k10}, $  k = 0.1 $ and $ k = 50 $, respectively. \rev{\subref{fig:preivous_input} Control input by the existing method~\eqref{eq:previous}.}}\label{fig:original_deviation}
\end{figure}

Moreover, in Fig.~\ref{fig:original_deviation}, we compare our method with the existing methods~\cite{Kuritz,Monga} given by \eqref{eq:previous},~\eqref{eq:proposed_cancel} with different gains $ k $.
In all cases, the proposed method steers the distribution of oscillators closest to the target $ \rho_f $ compared to the existing methods.
\rev{To see the reason of this improvement, we plot the control input $ u $ obtained by the existing method~\eqref{eq:previous} in Fig.~\ref{fig:original_deviation}\subref{fig:preivous_input}. It can be seen that the input $ u $ converges to a periodic orbit $ \tilde{u} $ although $ u $ is not made by a periodic input as in our method. 
By Theorem~\ref{thm:proposed}, the distribution $ \rho $ driven by the proposed controller~\eqref{eq:proposed} whose $ u_\FF $ is set to $ \tilde{u} $ converges to the one driven by the existing method~\eqref{eq:previous}.
This implies that the steady-state behavior of $ \rho $ under the existing method can also be achieved by the proposed controller with an appropriate $ u_\FF $.
In our approach, we use the optimal periodic input $ u_\FF $ minimizing the deviation in \eqref{prob:opt_fourier} rather than $ \tilde{u} $. As a result, the proposed method~\eqref{eq:proposed} with the optimal solution $ u_\FF $ transfers $ \rho $ closer to $ \rho_f $ than the existing method~\eqref{eq:previous}.}
This clarifies the advantage of our method.

Note that division by zero arises in the second term of the existing controller \eqref{eq:proposed_cancel} \hi{from \cite{Monga}}. To avoid divergence of input values of \eqref{eq:proposed_cancel}, we set upper and lower bounds of $ u(t) $ as $ -0.2 \le u(t) \le 0.2 $. Nevertheless, as can be seen from Fig.~\ref{fig:original_deviation}, the discrepancy between $ \rho $ and $ \rho_f $ fluctuates significantly under \eqref{eq:proposed_cancel} due to the division by zero.
\revv{Furthermore, while the performance of the existing methods in steering
the distribution towards the target is highly sensitive to the choice of
the control gain \(k\), our proposed approach ensures convergence to the surrogate target $ \rho_{\modi} $ independently of the gain setting.
To make this comparison explicit, Fig.~\ref{fig:gain_sensitivity}
shows the KL divergence averaged over the final period
$
\frac{1}{T_0}
\int_{t_f-T_0}^{t_f}
D_{\rm KL}
\left(
\rho(t,\cdot)
\|
\rho_f
\right)\rmd t
$
for several feedback gains \(k\), where \(t_f=1000\) denotes the final simulation time and \(T_0:=2\pi/\omega\) denotes the natural period of the oscillators. The proposed
method achieves the smallest discrepancy for all tested gains, whereas the existing methods improve as
\(k\) increases but saturate at a larger discrepancy.}

\begin{figure}[t]
    \centering
    \includegraphics[width=0.6\linewidth]{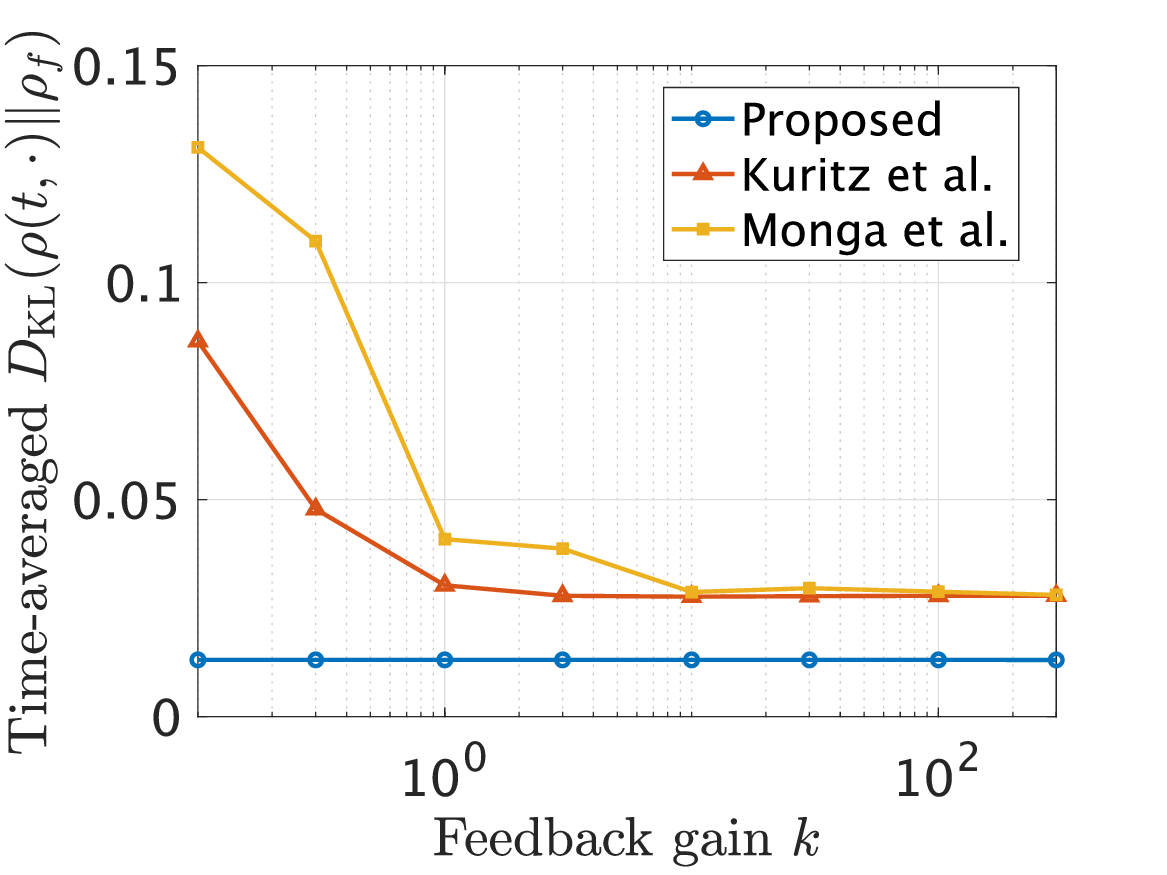}
    \caption{
    Sensitivity of the time-averaged KL divergence $
\frac{1}{T_0}
\int_{t_f-T_0}^{t_f}
D_{\rm KL}
\left(
\rho(t,\cdot)
\|
\rho_f
\right)\rmd t
$ to the feedback
    gain \(k\).
    }
    \label{fig:gain_sensitivity}
\end{figure}

\subsection{\hi{Feedback Controller under Measurement Errors}}
Lastly, we illustrate the usefulness of the proposed controller \eqref{eq:proposed_err} considering measurement errors in Fig.~\ref{fig:deviation_err}.
A perturbed density $ \what{\rho} $ is generated by adding Gaussian noise to the space-discretized density $ \rho $ such that the $ L^2 $~norm of the perturbation is close to $ e $ with high probability. The perturbed density is clipped to be nonnegative and renormalized to integrate to $ 1 $.
As can be seen from Figs.~\ref{fig:deviation_err}\subref{fig:epsi0015},~\subref{fig:epsi0015_flag}, the proposed method initially employs feedback control to accelerate the convergence of $ \rho $ when the distribution $ \rho $ is far from the surrogate target $ \rho_{\modi} $, and then switches to applying only the periodic input $ u_\FF $ after $ \rho $ becomes close enough to $ \rho_{\modi} $.
Consequently, the distribution $ \rho $ converges to the surrogate target $ \rho_{\modi} $ irrespective of $ e $.

\hi{For the existing methods}, Figs.~\ref{fig:deviation_err}\subref{fig:epsi03_kl},~\subref{fig:epsi03_l2} show that as the measurement error increases, the deviation of $ \rho $ from the target becomes large. Although the deviation for the controller \eqref{eq:previous} remains small in the sense of the KL divergence, the $ L^2 $~distance between $ \rho $ and $ \rho_f $ becomes large for large $ e $; see Figs.~\ref{fig:original_deviation}\subref{fig:KL_k1}, \subref{fig:L2_k1} for the cases without measurement errors.
\hi{This can be understood as follows. By definition, the integrand of $ \kl{\rho}{\rho_f} $ is weighted by $ \rho $, and thus, discrepancies between $ \rho $ and $ \rho_f $ in regions where $ \rho $ is small contribute little to $ \kl{\rho}{\rho_f} $. In addition, even in regions where $ \rho_f $ is large, differences between $ \rho $ and $ \rho_f $ lead to only limited increases in $ \kl{\rho}{\rho_f} $. In contrast, for the $L^2$ norm, differences between $ \rho $ and $ \rho_f $ are reflected uniformly. 
}

\begin{figure}[tb]
	\begin{minipage}[b]{0.5\linewidth}
		\centering
		\includegraphics[keepaspectratio, scale=0.24]
		{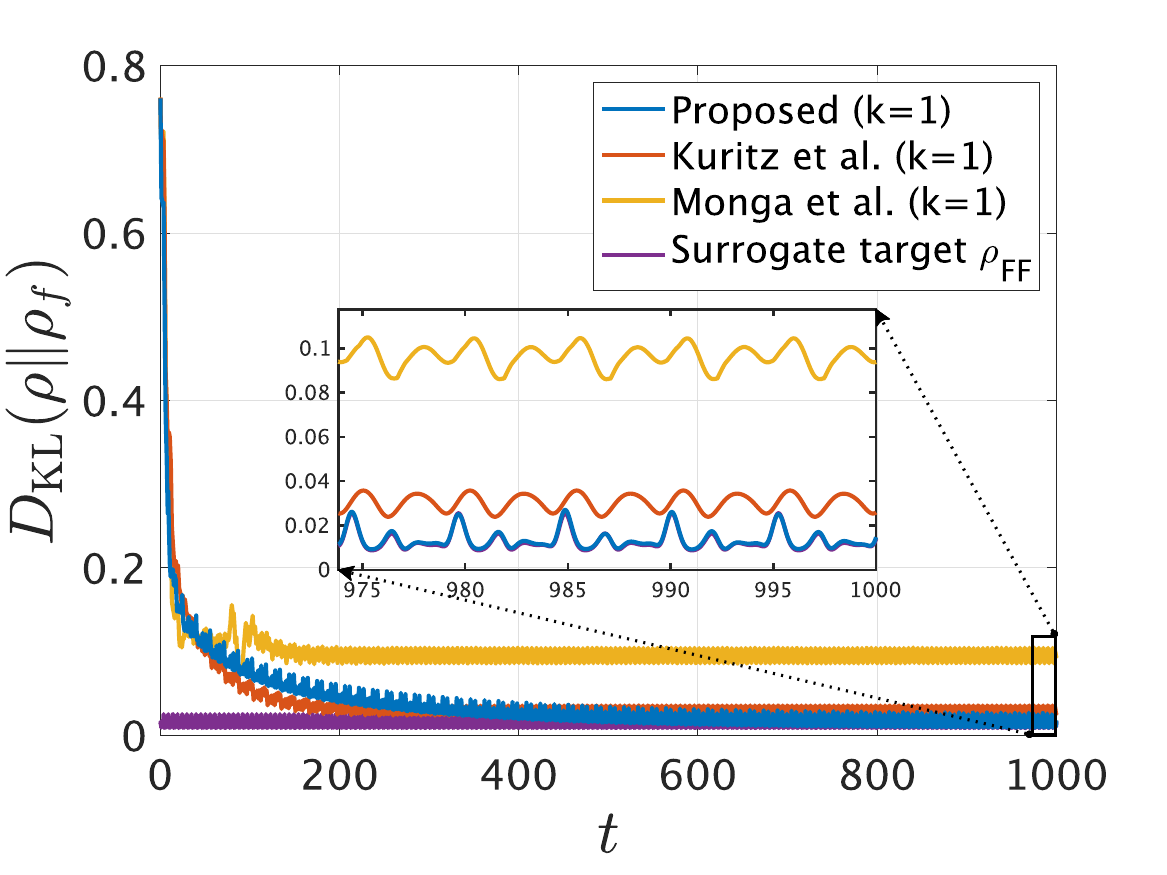}
		\subcaption{}\label{fig:epsi0015}
	\end{minipage}
	\begin{minipage}[b]{0.49\linewidth}
		\centering
		\includegraphics[keepaspectratio, scale=0.24]
		{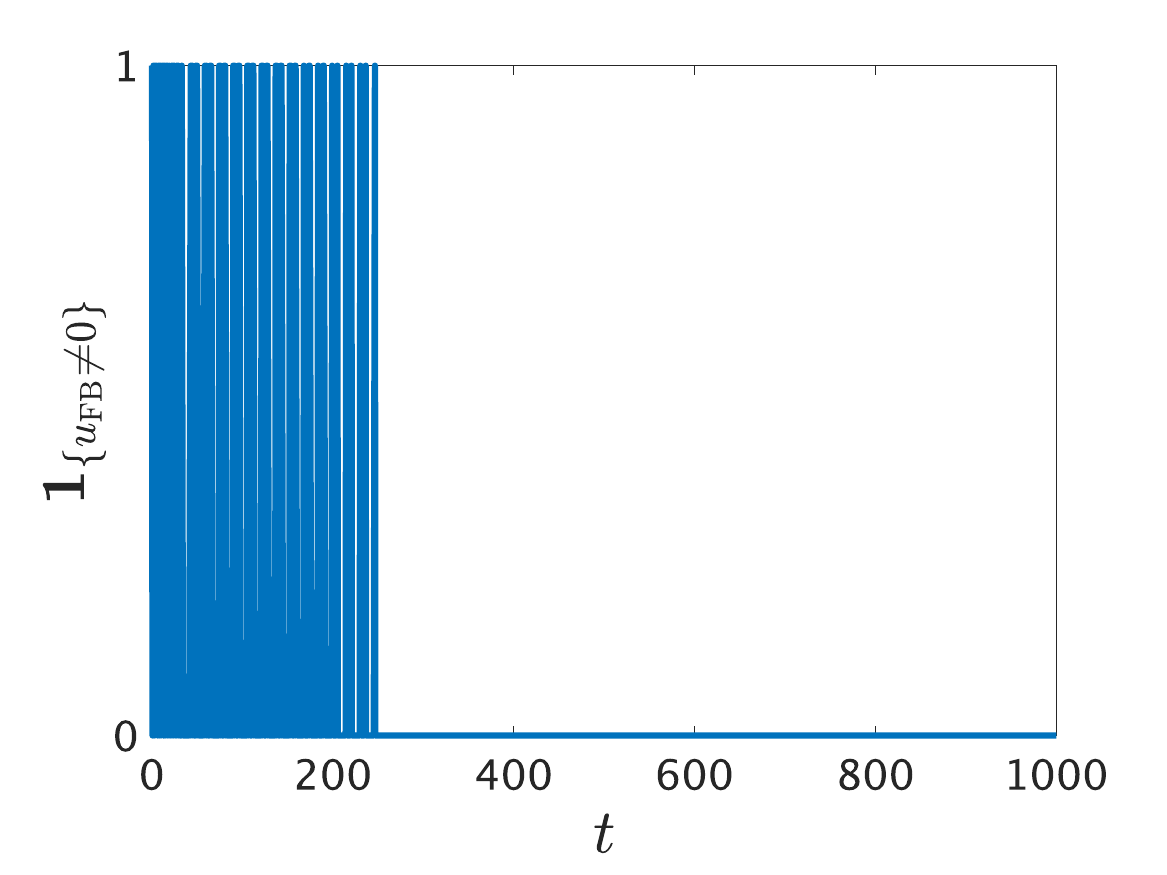}
		\subcaption{}\label{fig:epsi0015_flag}
	\end{minipage}
    	\begin{minipage}[b]{0.5\linewidth}
		\centering
		\includegraphics[keepaspectratio, scale=0.24]
		{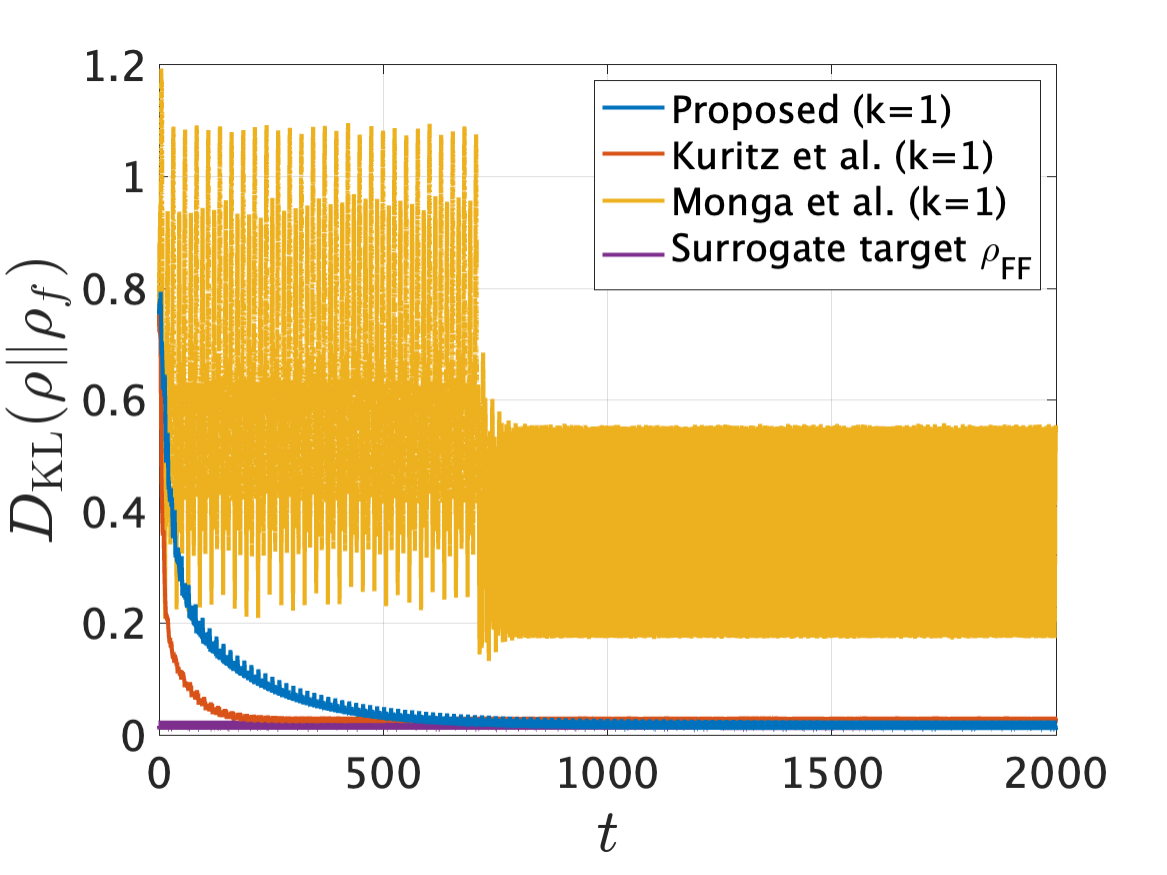}
		\subcaption{}\label{fig:epsi03_kl}
	\end{minipage}
	\begin{minipage}[b]{0.49\linewidth}
		\centering
		\includegraphics[keepaspectratio, scale=0.24]
		{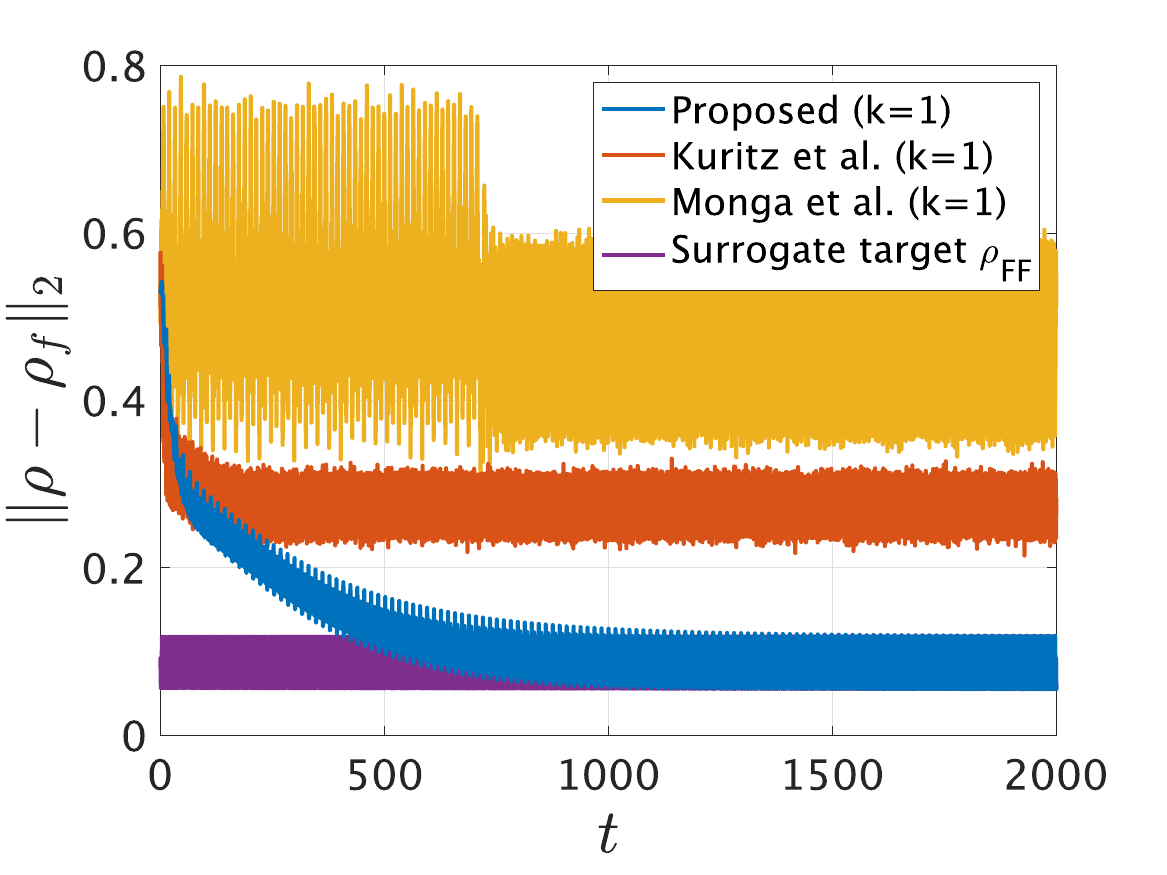}
		\subcaption{}\label{fig:epsi03_l2}
	\end{minipage}
	\caption{Deviation of the distribution of oscillators $ \rho $ and the surrogate target $ \rho_{\modi} $ from the original target $ \rho_f $ in the KL divergence \subref{fig:epsi0015},~\subref{fig:epsi03_kl} and the $ L^2 $~norm \subref{fig:epsi03_l2} under the proposed (blue, \eqref{eq:proposed_err}) and existing methods in \cite{Kuritz} (red, \eqref{eq:previous}) and \cite{Monga} (yellow, \eqref{eq:proposed_cancel}). In \subref{fig:epsi0015}, the measurement error bound is set to $ e = 0.015 $ and in \subref{fig:epsi03_kl} and \subref{fig:epsi03_l2}, the error bound is $  e = 0.3 $. \rev{\subref{fig:epsi0015_flag} Indicator function that equals $ 1 $ if the condition \eqref{eq:decrease_cond} is satisfied, and $ 0 $ otherwise for \subref{fig:epsi0015}.}}\label{fig:deviation_err}
\end{figure}

\section{Conclusion}\label{sec:conclusion}
In this paper, we developed a distribution control method composed of a periodic feedforward input and a population-level feedback controller to transfer the distribution of stochastic oscillators close to a given target distribution.
The periodic input plays a role in keeping the distribution of oscillators close to the target, and the feedback control accelerates the convergence of oscillators.
We showed that the asymptotic reachability of the phase distribution under averaging can be characterized by the Fourier coefficients of the phase sensitivity function and the target distribution. 
Moreover, we established the convergence properties of the proposed method based on the KL divergence.
Numerical simulations demonstrated the effectiveness of the proposed method.

An important future direction is to generalize our idea to coupled oscillators.
For example, distribution control of oscillators coupled through the mean field reduces to analyzing a nonlinear Fokker--Planck equation\hi{\cite{Kawamura2007}}.
\revv{Another important direction is to extend the present framework to heterogeneous oscillator populations, such as the case where the natural frequencies $\omega_i$ are independently sampled at the initial time and remain fixed. This would naturally lead to a distribution control problem on the joint phase-frequency space.}


\appendices

\section{Proof of Lemma~\ref{lem:gibbs}}\label{app:proof_gibbs}

Define $ h_\theta (t) : = Z(\theta + \omega t ) $.
Since $ Z \in C^2 (\calS^1) $, it holds that $ h_\theta (t) \in L^2(0,2\pi/\omega) $. In addition, \revv{its Fourier series converges in the $ L^2 $ sense}, and is given by 
$  
 \sum_{k=-\infty}^\infty z_k \Ee^{\iu k\theta} \Ee^{\iu k \omega t}.
$
Thus, $ \Gamma $ is the inner product of $ L^2 $ functions $ h_\theta $ and $ u_\FF $, and by Parseval's theorem, we have
\begin{align}
	\Gamma (\theta) = \sum_{k=-\infty}^\infty z_k v_{-k} \Ee^{\iu k \theta} . \nonumber
 \end{align}

Moreover, since $ u_\FF\in L^2(0,2\pi/\omega) $ and $ Z \in C^2 (\calS^1) $, we have $ \{v_k\}, \{z_k \} \in \ell^2 (\bbZ) $. By the Cauchy--Schwarz inequality, it holds that
\begin{equation}\label{eq:l1_zv}
	\sum_{k=-\infty}^\infty |z_k v_{-k}| \le \left(\sum_k |z_k|^2\right)^{1/2} \left(\sum_k |v_k|^2\right)^{1/2} < \infty .
\end{equation}
Thus, the Fourier coefficients of $ \Gamma $ converges absolutely and uniformly in $ \theta $.
This enables the following term-by-term integration:
\begin{align}
	V(\theta + 2\pi) - V(\theta) &= - \int_{\theta}^{\theta + 2\pi} \Gamma (\varphi) \rmd \varphi  \nonumber\\
	&= - \int_{\theta}^{\theta + 2\pi} \sum_{k=-\infty}^\infty z_k v_{-k} \Ee^{\iu k \varphi} \rmd \varphi  \nonumber\\
	&= - \hi{2\pi} z_0 v_0. \nonumber
\end{align}
Therefore, it follows from $ v_0 = 0 $ that $ V $ is $ 2\pi $-periodic.

Next, we rewrite $ \bar{\rho}_{\rm st} $ in \eqref{eq:stationary_averaged} as
\begin{align}
	\bar{\rho}_{\rm st}(\theta) &= \frac{1}{C'}\int_{\theta}^{\theta+2\pi}\exp\left(\frac{V(\psi) - V(\theta)}{B^2}\right) \rmd\psi , \label{eq:stationary} 
\end{align}
where we rewrite $ C $ in \eqref{eq:stationary_averaged} as $ C' $.
Then, it follows that
\begin{align}
	\bar{\rho}_{\rm st} (\theta) &= \frac{1}{C'} \exp\left( - \frac{V(\theta)}{B^2} \right) \int_{\theta}^{\theta + 2\pi} \exp\left( \frac{V(\psi)}{B^2} \right) \rmd \psi \nonumber\\
	&= \frac{1}{C} \exp\left( - \frac{V(\theta)}{B^2} \right) , \nonumber
\end{align}
where $ C = C' (\int_{\theta}^{\theta + 2\pi} \exp( V(\psi) /B^2 ) \rmd \psi)^{-1}  $.
This completes the proof.

\section{Proof of Theorem~\ref{thm:exact_reachability}}\label{app:exact_reachability}
First, integrating $ \partial_\theta \log \rho_{f,0} (\theta) = \sum_{k=-\infty}^\infty p_k \Ee^{\iu k \theta} $ over $ \calS^1 $ yields $ p_0 = 0 $.
By Lemma~\ref{lem:gibbs}, we have
\[
	\frac{\Gamma(\theta)}{B^2} = \sum_{k=-\infty}^\infty \frac{z_k v_{-k}}{B^2} \Ee^{\iu k\theta} 
\]
with absolute and uniform convergence on $ \calS^1 $. The definition of $ v_k $, together with the assumption that $ p_k \neq 0 $ means $ z_k \neq 0 $, gives
\[
	\frac{z_k v_{-k}}{B^2} = p_k, \quad \forall k \in \bbZ .
\]
Therefore, $ \Gamma / B^2 $ and $ \partial_\theta \log \rho_{f,0} $ have the same Fourier coefficients, which means that \revv{they are equal in $ L^2 (\calS^1) $. Moreover, since both functions are continuous, they are equal pointwise}:
\[
	\frac{\Gamma(\theta)}{B^2} = \partial_\theta \log \rho_{f,0} (\theta), \quad \forall \theta \in \calS^1 .
\]

By the definition of $ V $ in \eqref{eq:gamma_integral}, we have $ \partial_\theta V(\theta)  = - \Gamma(\theta) $, and thus,
\[
	\partial_\theta \log \bar{\rho}_{\rm st} (\theta) =  - \frac{\partial_\theta V(\theta)}{B^2} = \frac{\Gamma(\theta)}{B^2} = \partial_\theta \log \rho_{f,0} .
\]
Hence, $ \log \bar{\rho}_{\rm st} - \log \rho_{f,0} $ is constant on $ \calS^1 $. Since both $ \bar{\rho}_{\rm st} $ and $ \rho_{f,0} $ are probability densities, this constant must be zero. This concludes that $ \bar{\rho}_{\rm st} = \rho_{f,0} $.

\section{Proof of Theorem~\ref{thm:appro_reachability}}\label{app:appro_reachability}

By Lemma~\ref{lem:gibbs}, we have
\[
	\partial_\theta \log \bar{\rho}_{\rm st} (\theta) = - \frac{\partial_\theta V(\theta)}{B^2} = \frac{\Gamma(\theta)}{B^2} .
\]
In addition, $ \Gamma / B^2 $ has Fourier coefficients $ \{ z_k v_{-k} / B^2 \} $. Since $ \partial_\theta \log \rho_{f,0} $ has Fourier coefficients $ \{p_k\} $, the function
\[ 
h(\theta) := \partial_\theta \log \bar{\rho}_{\rm st} (\theta) - \partial_\theta \log \rho_{f,0} (\theta), \quad \theta \in \calS^1 
\]
has Fourier coefficients $ \{a_k\} $. Moreover, by Lemma~\ref{lem:gibbs}, $ \{z_k v_{-k} / B^2 \} \in \ell^2 (\bbZ) $, and since $ \partial_\theta \log \rho_{f,0} \in L^2 (\calS^1) $, we have $ \{p_k\} \in \ell^2 (\bbZ) $. Thus, $ \{a_k\} \in \ell^2 (\bbZ) $, and Parseval's identity yields
\begin{equation}\label{eq:h_2norm_bound}
	\| h \|_2^2 = 2\pi \sum_{k=-\infty}^\infty |a_k|^2 .
\end{equation}

By \eqref{eq:l1_zv}, it holds that $ \{ z_k v_{-k} / B^2 \} \in \ell^1 (\bbZ) $.
Thus, if $ \{ p_k \} \in \ell^1(\bbZ) $, we have $ \{a_k \} \in \ell^1 (\bbZ) $.
Then, $ \sum_{k \in \bbZ} a_k \Ee^{\iu k \theta} $ converges to $ h $ absolutely and uniformly.
Consequently, we have
\[
	|h(\theta) | \le \sum_{k=-\infty}^\infty |a_k|, \quad \forall \theta \in \calS^1 .
\]
By integrating both sides of the above inequality, we obtain
\begin{align}
	\|h \|_{1} \le 2\pi \sum_{k=-\infty}^\infty \left| a_k \right| . \label{eq:Df_bound}
\end{align}

Note that the continuity of $ \bar{\rho}_{\rm st} $ on $ \calS^1 $ follows from \eqref{eq:stationary_gibbs} and the $ 2\pi $-periodicity of $ V $ under $ v_0 = 0 $; see Appendix~\ref{app:proof_gibbs} for the periodicity.
Then, we can show that there exists $ \theta_0 \in \calS^1 $ such that $ \bar{\rho}_{\rm st} (\theta_0) = \rho_{f,0} (\theta_0) $. Indeed, if $ \bar{\rho}_{\rm st} (\theta) \neq \rho_{f,0} (\theta) $ for all $ \theta \in \calS^1 $, then it must hold that $ \bar{\rho}_{\rm st} (\theta) > \rho_{f,0} (\theta) $ for all $ \theta \in \calS^1 $ or $ \bar{\rho}_{\rm st} (\theta) < \rho_{f,0} (\theta) $ for all $  \theta \in \calS^1 $ by the continuity of $ \bar{\rho}_{\rm st} $ and $ \rho_{f,0} $. However, this contradicts the fact that $ \int_{S^1} \bar{\rho}_{\rm st} \rmd \theta = \int_{S^1}\rho_{f,0} \rmd \theta  =  1 $.

Let $ g := \log \bar{\rho}_{\rm st} - \log \rho_{f,0} $. 
Then, for any $ \theta \in \calS^1 $,
\begin{align}
		|g(\theta)| = \left| \int_{\theta_0}^\theta h (\varphi) \rmd \varphi \right| &\le \|h\|_{\hi{1}} , \label{eq:holder}
	\end{align}
which yields
\begin{align}
	\| g \|_1 \le 2\pi \|h \|_{1}  . \label{eq:poincare}
\end{align}
Finally, applying H\"{o}lder's inequality to the KL divergence and the relative Fisher information, we have
\begin{align}
	\kl{\rho_{f,0}}{\bar{\rho}_{\rm st}} &\le \|\rho_{f,0}\|_{\infty} \|g \|_{1} , \nonumber\\
	\FI{\rho_{f,0}}{\bar{\rho}_{\rm st}} &\le \|\rho_{f,0}\|_{\infty} \| h \|_{2}^2 .\nonumber
\end{align}
Thus, \eqref{eq:fi_bound} and \eqref{eq:kl_bound} follow from \eqref{eq:h_2norm_bound}, \eqref{eq:Df_bound}, and \eqref{eq:poincare}.

\section{Proof of Theorem~\ref{thm:bound_average_original}}\label{app:bound_average_original}

Before the proof, we introduce the following result for the logarithmic Sobolev inequality. 
\begin{lemma}\label{lem:lsi_bounded}
  Let $\rho_2 \in C^1 (\calS^1)$ be a strictly positive density. Then, $\rho_2$ satisfies $\mathrm{LSI}((\underline{\rho}_2 / \bar{\rho}_2)^2)$ where $ \underline{\rho}_2 := \min_{\theta \in \calS^1} \rho_2 (\theta) $ and $ \bar{\rho}_2 := \max_{\theta \in \calS^1} \rho_2(\theta) $.
  \hfill $ \diamondsuit $
\end{lemma}
\begin{proof}
  It is known that the uniform density $ \rho_{\rm uni} = 1/(2\pi) $ satisfies $\mathrm{LSI}(1)$~\cite{Emery,Weissler}. 
	Let $ U $ be a $ C^1 $~function on $ \calS^1 $. Noting that $ U $ is bounded, by \cite[Property~4.6]{Guionnet}, if a density $ \rho_\mu \in C^1(\calS^1) $ satisfies $ {\rm LSI}(\lambda) $ for some $ \lambda > 0 $, then $ \rho_\nu := \Ee^{-U}\rho_\mu / (\int_{\calS^1} \Ee^{-U}\rho_\mu \rmd \theta) $ satisfies $ {\rm LSI} (\lambda\Ee^{-2\osc(U)}) $, where $ \osc(U) := \max_{\theta\in \calS^1} (U(\theta)) - \min_{\theta\in \calS^1} (U(\theta)) $ for $ U\in C^1(\calS^1) $.
	Then, since $\rho_2 = \Ee^{-\log(\rho_2^{-1})}$, we obtain the desired result.
\end{proof}

Now, we are ready to prove Theorem~\ref{thm:bound_average_original}.
Let $ H_\modi (t) := D_{\rm KL} (\bar{\rho}_\modi (t,\cdot) \| \wtilde{\rho}_\modi (t,\cdot)) $.
Then, noting that $ \int \partial_t \bar{\rho}_\modi \rmd \theta = 0 $, we have
\begin{align}
	\frac{\rmd H_\modi (t)}{\rmd t} = \int_{\calS^1} (\partial_t \bar{\rho}_\modi ) \log \frac{\bar{\rho}_\modi}{\wtilde{\rho}_\modi} \rmd \theta - \int_{\calS^1} \frac{\bar{\rho}_\modi}{\wtilde{\rho}_\modi} \partial_t \wtilde{\rho}_\modi \rmd \theta . \nonumber
\end{align}
Let $ R_\modi  :=  \bar{\rho}_\modi / \wtilde{\rho}_\modi $. Then, 
\begin{align}
	\frac{\rmd H_\modi(t)}{\rmd t} &= \int (\bar{\calL} \bar{\rho}_\modi ) \log R_\modi \rmd \theta - \int R_\modi \wtilde{\calL}_{u_\FF} \wtilde{\rho}_\modi \rmd  \theta \nonumber\\
	&= \int \bar{\rho}_\modi \bar{\calL}^\dagger [\log R_\modi] \rmd \theta - \int R_\modi \wtilde{\calL}_{u_\FF} \wtilde{\rho}_\modi \rmd \theta, \label{eq:derivative_H_1}
\end{align}
where $ \bar{\calL}^\dagger := \Gamma \partial_\theta + B^2 \partial_\theta^2  $ is the adjoint operator of $ \bar{\calL} $.
In addition, for the first term of \eqref{eq:derivative_H_1}, we have
\begin{align}
	\bar{\calL}^\dagger [\log R_\modi] &= \frac{\Gamma \partial_\theta R_\modi}{R_\modi} + \frac{B^2 \partial_\theta^2 R_\modi}{R_\modi} - B^2 \left( \frac{\partial_\theta R_\modi}{R_\modi} \right)^2 \nonumber\\
	&= \frac{\bar{\calL}^\dagger R_\modi}{R_\modi} - B^2 \left( \frac{\partial_\theta R_\modi}{R_\modi} \right)^2 . \nonumber
\end{align}
Therefore,
\begin{align}
	\frac{\rmd H_\modi(t)}{\rmd t} &= \int \wtilde{\rho}_\modi \bar{\calL}^\dagger R_\modi \rmd \theta - \int \bar{\rho}_\modi B^2 \left( \frac{\partial_\theta R_\modi}{R_\modi} \right)^2 \rmd \theta \nonumber\\
    &\quad - \int R_\modi \wtilde{\calL}_t \wtilde{\rho}_\modi \rmd \theta \nonumber\\
	&= - B^2  \FI{\bar{\rho}_\modi}{\wtilde{\rho}_\modi} \nonumber\\
	&\quad +  \int R_\modi (\bar{\calL} \wtilde{\rho}_\modi - \wtilde{\calL}_{u_\FF} \wtilde{\rho}_\modi) \rmd \theta . \label{eq:Hd_derivative_1}
\end{align}
For the second term of the above equation, we have
\begin{align}
	&\int R_\modi (\bar{\calL} \wtilde{\rho}_\modi - \wtilde{\calL}_{u_\FF} \wtilde{\rho}_\modi) \rmd \theta = -\int R_\modi \partial_\theta [(\Gamma - A)\wtilde{\rho}_\modi]   \rmd \theta \nonumber\\
	&= \int (\partial_\theta R_\modi) (\Gamma - A)\wtilde{\rho}_\modi \rmd \theta  = \int \frac{\partial_\theta R_\modi}{R_\modi} (\Gamma - A) \bar{\rho}_\modi \rmd \theta   . \label{eq:mismatch_drift}
\end{align}
By applying the Cauchy--Schwarz inequality and Young's inequality to the above equation, we obtain for any $ \delta > 0 $,
\begin{align}
	&\int R_\modi (\bar{\calL} \wtilde{\rho}_\modi - \wtilde{\calL}_{u_\FF} \wtilde{\rho}_\modi) \rmd \theta \nonumber\\
	&\le \FI{\bar{p}_\modi}{\wtilde{\rho}_\modi}^{1/2} \left( \int (\Gamma - A)^2 \bar{p}_\modi \rmd \theta \right)^{1/2} \nonumber\\
	&\le \frac{\delta}{2} \FI{\bar{\rho}_\modi}{\wtilde{\rho}_\modi} + \frac{1}{2\delta} \int (\Gamma - A)^2 \bar{\rho}_\modi \rmd \theta . \nonumber
\end{align}
By substituting the above bound into \eqref{eq:Hd_derivative_1} and noting that $ B^2 = D $, we obtain that for any $ \delta > 0 $ and $ t \ge 0 $,
\begin{align}
	\frac{\rmd H_\modi  (t)}{\rmd t} &\le - \left(D - \delta \right) \FI{\bar{\rho}_\modi}{\wtilde{\rho}_\modi}  \nonumber\\
	&\quad + \frac{1}{2\delta} \int (\Gamma - A(t,\theta))^2 \bar{\rho}_\modi \rmd \theta . \label{eq:H_ineq_1}
\end{align}

Next, we relate $ \FI{\bar{\rho}_\modi}{\wtilde{\rho}_\modi} $ in the right-hand side to $H_\modi (t) = D_{\rm KL} (\bar{\rho}_\modi  \| \wtilde{\rho}_\modi ) $ by the logarithmic Sobolev inequality.
If for some $ \lambda > 0 $, the density $ \wtilde{\rho}_\modi (t,\cdot) $ satisfies $ {\rm LSI}(\lambda) $ for any $ t > 0 $, then by \eqref{eq:H_ineq_1}, we obtain
\begin{align}
    \frac{\rmd H_\modi  (t)}{\rmd t} &\le - 2\lambda \left(D - \delta \right) H_\modi (t) + \frac{\int (\Gamma - A(t,\theta))^2 \bar{\rho}_\modi \rmd \theta}{2\delta} ,  \label{eq:dH_FF_bound}
\end{align}
where $ D - \delta > 0 $ by assumption.
Here, by the same argument as in the proof of \hi{Theorem}~\ref{thm:proposed}, it can be shown that $ \lim_{t \rightarrow \infty} \| \bar{\rho}_\modi (t,\cdot) - \bar{\rho}_{\rm st} \|_1 = 0 $. 
Since any continuous function $ f $ on $ \calS^1 $ is bounded and $ | \int_{\calS^1} f (\theta) \bar{\rho}_\modi (t,\theta) \rmd \theta -  \int_{\calS^1} f (\theta) \bar{\rho}_{\rm st} (\theta) \rmd \theta | \le \| f\|_\infty \| \bar{\rho}_\modi (t,\cdot) - \bar{\rho}_{\rm st} \|_1 $, it holds that 
\[
\lim_{t\rightarrow \infty} \left| \int_{\calS^1} f (\theta) \bar{\rho}_\modi (t,\theta) \rmd \theta -  \int_{\calS^1} f (\theta) \bar{\rho}_{\rm st} (\theta) \rmd \theta \right| = 0  .
\]
This means that for any $ \varepsilon > 0 $, there exists $ t_\varepsilon > 0 $ such that
\begin{equation}\label{eq:ff_st_bound}
    \left| E_\modi (t) - E_{\rm st} (t) \right| \le \varepsilon,  ~~ \forall t \ge t_\varepsilon .
\end{equation}
where
\begin{align*}
    E_\modi (t) &:= \int_{\calS^1} (\Gamma(\theta) - A(t,\theta))^2 \bar{\rho}_\modi (t,\theta) \rmd \theta, \\
    E_{\rm st} (t) &:= \int_{\calS^1} (\Gamma(\theta) - A(t,\theta))^2 \bar{\rho}_{\rm st} (\theta) \rmd \theta .
\end{align*}

By the Gronwall lemma, for any $ \varepsilon > 0 $, we have from \eqref{eq:ff_st_bound}
\begin{align}
    H_\modi (t) &\le H_\modi (t_\varepsilon) \Ee^{-2\lambda (D - \delta ) (t-t_\varepsilon)}  \nonumber\\
    &\quad + \frac{1}{2\delta} \int_{0}^{t-t_\varepsilon} (E_{\rm st} (s) + \varepsilon ) \Ee^{-2\lambda (D - \delta )(t-t_\varepsilon - s)} \rmd s \nonumber \\
    &\le H_\modi (t_\varepsilon) \Ee^{-2\lambda (D - \delta ) (t-t_\varepsilon)} \nonumber\\
    &\quad + \frac{ \sup_{s\in [0,2\pi/\omega)} E_{\rm st} (s) + \varepsilon}{4\lambda \delta(D-\delta)} \left(1 - \Ee^{-2\lambda(D-\delta)(t-t_\varepsilon)} \right) . \nonumber
\end{align}
Therefore, for any $ \varepsilon > 0 $, it holds that
\begin{align}
    \limsup_{t\rightarrow \infty} H_\modi (t) \le \frac{\sup_{s\in [0,2\pi/\omega)} E_{\rm st} (s) + \varepsilon}{4\lambda\delta(D-\delta)} .\label{eq:Hd_bound1}
\end{align}
By the arbitrariness of $ \varepsilon > 0 $, the above bound also holds for $ \varepsilon = 0 $ and is minimized by $ \delta = D/2 $.

By Lemma~\ref{lem:lsi_bounded}, $ \wtilde{\rho}_\modi(t,\cdot) $ satisfies $\mathrm{LSI}((m/M)^2)$ for any $ t > 0 $. 
\revv{Here, the positivity of $ m $ is guaranteed by \cite[Corollary~3.1]{Bogachev2009positive} under the initial positivity $ \rho_{\modi,0} > 0 $, the periodicity and continuity of $ u_{\FF} $, and the non-degeneracy of the diffusion coefficient (Assumption~\ref{ass:nondegenerate}).}
Substituting $ \lambda = (m/M)^2 $ into \eqref{eq:Hd_bound1} and setting $ \delta = D/2 $ complete the proof.

\section{Proof of Theorem~\ref{thm:proposed}}\label{app:proposed}
Let $m := \inf_{t,\theta}\{\rho_{\modi}(t,\theta)\} $ and $ M := \sup_{t,\theta}\{\rho_{\modi}(t,\theta)\}$. Then, by the assumption on $ \rho_{\modi} $, it holds that $ M < \infty $. In addition, $ m > 0 $ is guaranteed under the initial positivity $ \rho_{\modi,0} > 0 $, the periodicity and continuity of $ u_{\FF} $, and the non-degeneracy of the diffusion coefficient~\cite[Corollary~3.1]{Bogachev2009positive}. Therefore, by Lemma~\ref{lem:lsi_bounded}, $ \rho_{\modi}(t,\cdot) $ satisfies $\mathrm{LSI}((m/M)^2)$ for any $ t > 0 $.
In addition, by the assumption $ Z_w^2 > 0 $, the continuity of $ Z_w $, and the compactness of $ \calS^1 $, there exist $ \underline{z} > 0 $ and $ \bar{z} > 0 $ such that $ Z_w^2 (\theta) \in [\underline{z}, \bar{z}] $ for any $ \theta \in \calS^1 $, and thus,
$
		\FI{\rho_z}{\rho_{\modi,z}} \ge (\underline{z} / \bar{z} ) \FI{\rho}{\rho_\modi} 
$.
Moreover, we have $ D_{z,t} \ge D \underline{z} $ for any $ t $.

		Hence, by \eqref{eq:dHdt}, we have
    \begin{align}
        \frac{\rmd H(t)}{\rmd t} \le - \frac{D \underline{z}^2 }{\bar{z}}  \FI{\rho}{\rho_{\modi}} .
    \end{align}
	By the logarithmic Sobolev inequality~\eqref{eq:log_Sobolev} with $ \rho_2 = \rho_{\modi}(t,\cdot) $ and $ \lambda = (m/M)^2 $, it holds that
\begin{equation}
\frac{\rmd H(t)}{\rmd t} \leq -\frac{2D\lambda \underline{z}^2}{\bar{z}}  H(t) ,
\label{eq:H_exp}
\end{equation}
which means the exponentially fast convergence $ H(t) \le \Ee^{-(2D\lambda \underline{z}^2 / \bar{z}) t} H(0) $. \hi{We thus have 1).}

Next, thanks to the \rev{Csisz\'{a}r--Kullback--Pinsker} inequality~\cite{Villani, Gilardoni}:
\begin{equation}
\|\rho-\rho_{\modi}\|_1^2\leq 2H(t),
\end{equation}
we obtain 2).

Lastly, since $ \rho_{\modi} $ satisfies ${\rm LSI}(\lambda)$, Talagrand's inequality
\begin{equation}
  W_2(\rho(t,\cdot), \rho_{\modi}(t,\cdot)) \leq \sqrt{\frac{2}{\lambda}H(t)}
\end{equation}
holds~\cite[Theorem~1]{Otto}, which \hi{implies 3)}.

\section{Derivation of \eqref{eq:rho_rhof}}\label{app:rho_rhof}
Similarly to \eqref{eq:Hd_derivative_1}, we have
\begin{align}
    \frac{\rmd}{\rmd t} \kl{\rho}{\rho_f} = \int R_f (\calL_u - \calL_f) \rho_f \rmd \theta  - D_{z,t} \FI{\rho_z}{\rho_{f,z}} , \label{eq:rho_rhof_derivative}
\end{align}
where $ \rho_{f,z} (t,\theta) := \frac{\rho_f(t,\theta) Z_w^2(\theta)}{\int \rho_f Z_w^2 \rmd \theta} $ and $ R_f(t,\theta) = \frac{\rho(t,\theta)}{\rho_f (t,\theta)} $. The first term can be written as
\begin{align}
    \int R_f (\calL_u - \calL_f) \rho_f \rmd \theta &= - \left( \int R_f \partial_\theta [Z \rho_f] \rmd \theta  \right) u \nonumber\\
    &\quad + D \int R_f \partial_\theta^2 [Z_w^2 \rho_f] \rmd \theta .
\end{align}
Moreover, we have
\begin{align}
    &D \int R_f \partial_\theta^2 [Z_w^2 \rho_f] \rmd \theta \nonumber\\
    &= - D_{z,t} \int \rho_z \left( \partial_\theta \log \frac{\rho_z}{\rho_{f,z}} \right) \partial_\theta \log \rho_{f,z}  \rmd \theta  \nonumber\\
    &\le D_{z,t} \FI{\rho_z}{\rho_{f,z}}^{1/2} \left( \int \rho_z \left( \partial_\theta \log \rho_{f,z} \right)^2\rmd \theta \right)^{1/2} , \label{eq:bound_integral_f}
\end{align}
where we used H\"{o}lder's inequality. 
Lastly, the second component can be bounded as follows:
\begin{align}
    \int \rho_z \left( \partial_\theta \log \rho_{f,z} \right)^2\rmd \theta \le 2\pi \|\rho_z\|_\infty \FI{\frac{1}{2\pi}}{\rho_{f,z}} .
\end{align}
By combining this with \eqref{eq:rho_rhof_derivative}--\eqref{eq:bound_integral_f}, we obtain \eqref{eq:rho_rhof}.

\section{Proof of Proposition~\ref{prop:measurement_err}}\label{app:measurement_err}
We investigate the condition for the nonpositiveness of the first term of \eqref{eq:KL_FF2}, that is,
\begin{align}
	\left( \int \frac{\rho}{\rho_{\modi}} \partial_\theta[Z\rho_{\modi}] \rmd \theta \right)  \left(\int  \frac{\what{\rho}}{\rho_{\modi}}\partial_\theta[Z\rho_{\modi}] \rmd \theta \right)  \ge 0 . \label{eq:positive_cond}
\end{align}
The left-hand side can be written as follows:
\begin{align}
	&\left( \int \frac{\rho - \what{\rho}}{\rho_{\modi}}\partial_\theta [Z\rho_{\modi}] \rmd \theta + \int \frac{\what{\rho}}{\rho_{\modi}} \partial_\theta[Z\rho_{\modi}] \rmd \theta  \right) \nonumber\\
	&\times  \int \frac{\what{\rho}}{\rho_{\modi}} \partial_\theta[Z\rho_{\modi}] \rmd \theta . \nonumber
\end{align}
Therefore, if it holds that
\begin{align}
	\left| \int \frac{\rho - \what{\rho}}{\rho_{\modi}} \partial_\theta [Z\rho_{\modi}] \rmd \theta \right| \le \left|\int \frac{\what{\rho}}{\rho_{\modi}} \partial_\theta[Z\rho_{\modi}] \rmd \theta \right| = \frac{|\what{u}_{\FB}(t)|}{k} , \label{eq:error_sufficient_1}
\end{align}
then \eqref{eq:positive_cond} holds.
Moreover, the left-hand side of \eqref{eq:error_sufficient_1} can be bounded as follows:
\begin{align}
	\left| \int \frac{\rho - \what{\rho}}{\rho_{\modi}} \partial_\theta [Z\rho_{\modi}] \rmd \theta \right| &\le \int \left|\frac{\partial_\theta [Z\rho_{\modi}]}{\rho_{\modi}} \right| |\rho - \what{\rho}| \rmd \theta \nonumber\\
	&\le \left[ \int \left( \frac{\partial_\theta [Z\rho_{\modi}]}{\rho_{\modi}}  \right)^2 \rmd \theta  \right]^{\frac{1}{2}} \| \rho - \what{\rho} \|_{2} . \nonumber
\end{align}
Therefore, if
\begin{align}
	\left[ \int \left( \frac{\partial_\theta [Z\rho_{\modi}]}{\rho_{\modi}}  \right)^2 \rmd \theta  \right]^{1/2} \| \rho - \what{\rho} \|_{2} &\le  \left| \int \frac{\what{\rho}}{\rho_{\modi}} \partial_\theta[Z\rho_{\modi}] \rmd \theta \right| \nonumber\\
    &=\frac{|\what{u}_{\FB}(t)|}{k}, \label{eq:positive_sufficient}
\end{align}
then \eqref{eq:positive_cond} holds. Since the measurement error is bounded as $ \| \rho - \what{\rho} \|_{2} \le e $, we obtain the sufficient condition \eqref{eq:decrease_cond} for \eqref{eq:positive_cond} to hold.
Then, the proof of the convergence of $ \rho $ to $ \rho_\modi $ is \hi{the} same as those of \hi{Proposition}~\ref{prop:rfisher} and \hi{Theorem}~\ref{thm:proposed}.

\section*{References}
\vspace{-\baselineskip}

\bibliographystyle{IEEEtran}
\bibliography{ensemble_oscillator_TAC}

\begin{IEEEbiographynophoto}{Kaito Ito} received the bachelor’s degree in engineering and the \hi{Master’s} and
doctoral degrees in informatics from Kyoto University, Kyoto, Japan, in 2017, 2019, and 2022,
respectively.

From 2022 to 2024, he was an Assistant Professor with the Department of Computer Science, Tokyo Institute of Technology, Yokohama,
Japan. He is currently an Assistant Professor with the Department of Information Physics
and Computing, The University of Tokyo, Tokyo,
Japan. His research interests include stochastic control, optimization,
and machine learning.

\end{IEEEbiographynophoto}

\begin{IEEEbiographynophoto}{Haruhiro Kume} received the bachelor's and \hi{Master's degrees}
in engineering from
Tokyo Institute of Technology, Yokohama, Japan, in 2021 and 2023, respectively.
His research interests include control of a population of oscillators.
\end{IEEEbiographynophoto}

\begin{IEEEbiographynophoto}{Hideaki Ishii} received the
M.Eng. degree from Kyoto University in 1998,
and the Ph.D. degree from the University of
Toronto in 2002. He was a Postdoctoral Research Associate at the University of Illinois at
Urbana-Champaign in 2001--2004, and a Research Associate at The University of Tokyo in
2004--2007. He was an Associate Professor and
then a Professor at the Department of Computer
Science, Tokyo Institute of Technology in 2007--
2024. Currently, he is a Professor at the Department of Information Physics and Computing at The University of Tokyo
since 2024. He was a Humboldt Research Fellow at the University of
Stuttgart in 2014--2015. He has also held visiting positions at CNR-IEIIT
at the Politecnico di Torino, the Technical University of Berlin, and the
City University of Hong Kong. His research interests include networked
control systems, multiagent systems, distributed algorithms, and cybersecurity of control systems.

Dr. Ishii has served as an Associate Editor for Automatica, the
IEEE Control Systems Letters, the IEEE Transactions on Automatic
Control, the IEEE Transactions on Control of Network Systems, and the
Mathematics of Control, Signals, and Systems. He was a Vice President
for the IEEE Control Systems Society (CSS) in 2022--2023 and an
elected member of the IEEE CSS Board of Governors in 2014--2016.
He was the Chair of the IFAC Coordinating Committee on Systems and
Signals in 2017--2023 and the Chair of the IFAC Technical Committee
on Networked Systems for 2011--2017. He served as the IPC Chair for
the IFAC World Congress 2023 held in Yokohama, Japan. He received
the IEEE Control Systems Magazine Outstanding Paper Award in 2015.
Dr. Ishii is an IEEE Fellow.

\end{IEEEbiographynophoto}

\end{document}